\documentclass[12pt]{article}

\usepackage[utf8]{inputenc}
\usepackage[margin=1in]{geometry} 
\usepackage{amsmath,amsthm,amssymb, graphicx, multicol, array, sgame, setspace, threeparttable, tabularx, rotating, booktabs, multirow}
\usepackage{xltabular} 
\usepackage{dcolumn} 
\usepackage[square,sort&compress, authoryear, nonamebreak, round]{natbib}
\usepackage[table]{xcolor}
\usepackage{titlesec}

\usepackage{listings}
\usepackage[breakable,skins,listings]{tcolorbox}
\definecolor{promptbg}{gray}{0.95}
\newtcblisting{promptbox}{
  breakable, enhanced, colback=promptbg, colframe=promptbg,
  arc=2.5pt, boxrule=0pt, left=2mm, right=2mm, top=1.5mm, bottom=1.5mm,
  listing only,
  listing options={basicstyle=\ttfamily\footnotesize, breaklines=true,
    columns=fullflexible, keepspaces=true}
}

\usepackage{makecell}
\usepackage{bbm}
\usepackage{float}
\usepackage{mathtools}
\setcitestyle{round}
\usepackage{tikz}
\usepackage{pgfplots}
\usetikzlibrary{plotmarks}
\usepackage[ruled, linesnumbered]{algorithm2e}

\usepackage{ulem}
\definecolor{sangre}{rgb}{0.6,0.18,0.19}
\definecolor{dullmagenta}{rgb}{0.4,0,0.4}
\definecolor{darkblue}{rgb}{0,0,0.6}

\ifx\pdfoutput\undefined
\usepackage[backref=none,hypertex,colorlinks,urlcolor = darkblue,citecolor = sangre,linkcolor=blue]{hyperref}
\else
\usepackage[backref=none,hypertexnames=false,colorlinks,urlcolor = darkblue,citecolor =  sangre,linkcolor=blue]{hyperref}
\fi

\usepackage[capitalize]{cleveref}

\usepackage{color-edits}
\addauthor{yl}{red}
\addauthor{xq}{brown}
\addauthor{hf}{blue}
\addauthor{yy}{teal}

\usepackage{algorithmicx}
\algnewcommand\COMMENT[1]{\hfill\(\triangleright\) #1}

\usepackage{subcaption}
\usepackage{color,soul}

\usetikzlibrary{decorations.pathreplacing} 

\usepackage{enumitem}
\newlist{steps}{enumerate}{1}
\setlist[steps, 1]{label = Step \arabic*:}

\providecommand{\keywords}[1]
{
  \small	
  \textbf{\textit{Keywords---}} #1
}

\hypersetup{
    colorlinks=true,
    linkcolor=blue,
    filecolor=magenta,      
    urlcolor=cyan,
    pdftitle={Overleaf Example},
    pdfpagemode=FullScreen,
    }

\title{Performance Manipulation: Labor Market Implications in AI-assisted Era
}

\author{
 Xiaoyun Qiu\textsuperscript{1}\textsuperscript{\ensuremath{\dagger}}
\and Yang Yu\textsuperscript{2}\textsuperscript{\ensuremath{\dagger}}
\and Haifeng Xu\textsuperscript{3}
}
\date{\today}

\newcommand{\type}{\theta}
\newcommand{\types}{\boldsymbol{\type}}
\newcommand{\typesminus}{\types_{-i}}
\newcommand{\budget}{e}
\newcommand{\budgetbenchmark}{\budget^{\dag}}
\newcommand{\budgetcontest}{\budget^*}
\newcommand{\mdeffort}{a}
\newcommand{\pteffort}{b}

\newcommand{\mdeffortbenchmark}{\mdeffort^{\dagger}}
\newcommand{\pteffortbenchmark}{\pteffort^{\dagger}}
\newcommand{\mdeffortcontest}{\mdeffort^{*}}
\newcommand{\pteffortcontest}{\pteffort^{*}}

\newcommand{\fitness}{\mu}
\newcommand{\fitnesses}{\boldsymbol{\fitness}}
\newcommand{\fitnesscontest}{\fitness^*}
\newcommand{\fitnessbenchmark}{\fitness^{\dagger}}

\newcommand{\lfitness}{\fitness'}
\newcommand{\ufitness}{\fitness}

\newcommand{\ltype}{\type'}
\newcommand{\utype}{\type}

\newcommand{\lgamma}{\gamma'}
\newcommand{\ugamma}{\gamma}

\newcommand{\eqmfitness}{\beta}
\newcommand{\eqmfitnessalt}{\fitness}
\newcommand{\eqmfitnessesalt}{\boldsymbol{\eqmfitnessalt}}

\newcommand{\eqmeffortalt}{\phi}
\newcommand{\eqmeffortalts}{\boldsymbol{\eqmeffortalt}}

\newcommand{\eqmfitnesses}{\boldsymbol{\eqmfitness}}

\newcommand{\signal}{s}
\newcommand{\signals}{\boldsymbol{\signal}}

\newcommand{\reward}{R}
\newcommand{\rewards}{\boldsymbol{\reward}}

\newcommand{\numplayer}{N}

\newcommand{\mdeffortfn}{\nu}
\newcommand{\pteffortfn}{\xi}
\newcommand{\etcostfn}{c}
\newcommand{\fscostfn}{C}
\newcommand{\contestgain}{g}
\newcommand{\payoffbm}{\pi^b}
\newcommand{\typebl}{\type_1^{\dagger}}
\newcommand{\typebu}{\type_2^{\dagger}}
\newcommand{\typecl}{ \type_1^*}
\newcommand{\typecu}{ \type_2^*}

\newcommand{\pmf}{p}
\newcommand{\cmf}{P}

\newcommand{\payoff}{\pi}

\newcommand{\disttype}{F}
\newcommand{\densitytype}{f}

\newcommand{\distsignal}{H}
\newcommand{\densitysignal}{h}

\newcommand{\given}{|}

\newcommand{\prob}[2][]{\text{\bf Pr}\ifthenelse{\not\equal{}{#1}}{_{#1}}{}\!\left[{\def\givenn{\middle|}#2}\right]}
\newcommand{\expect}[2][]{\text{\bf E}\ifthenelse{\not\equal{}{#1}}{_{#1}}{}\!\left[{\def\givenn{\middle|}#2}\right]}

\newcommand{\tparen}{\big}
\newcommand{\tprob}[2][]{\text{\bf Pr}\ifthenelse{\not\equal{}{#1}}{_{#1}}{}\tparen[{\def\given{\tparen|}#2}\tparen]}
\newcommand{\texpect}[2][]{\text{\bf E}\ifthenelse{\not\equal{}{#1}}{_{#1}}{}\tparen[{\def\given{\tparen|}#2}\tparen]}

\newcommand{\sprob}[2][]{\text{\bf Pr}\ifthenelse{\not\equal{}{#1}}{_{#1}}{}[#2]}
\newcommand{\sexpect}[2][]{\text{\bf E}\ifthenelse{\not\equal{}{#1}}{_{#1}}{}[#2]}

\DeclareMathOperator*{\argmax}{arg\,max}
\DeclareMathOperator*{\argmin}{arg\,min}

\newcommand{\indicate}[1]{{\bf 1}\left[#1\right]}

\theoremstyle{definition}
\newtheorem{definition}{Definition}

\newtheorem{example}{Example}
\newtheorem{assumption}{Assumption}

\theoremstyle{plain}

\newtheorem{corollary}{Corollary}
\newtheorem{lemma}{Lemma}
\newtheorem{proposition}{Proposition}

\theoremstyle{remark}

\begin{document}
\maketitle

\begingroup
\renewcommand{\thefootnote}{\ensuremath{\dagger}}
\footnotetext[4]{Equal contribution, alphabetically ordered by last name. Haifeng Xu is supported by AI2050 program at Schmidt Sciences (Grant G-24-66104), NSF through CCF-2303372 and ARO through W911NF-23-1-0030. We acknowledge Yingkai Li for helpful conversations at the early stage of this project. We also thank Ben Brooks, Lei Huang, Kyohei Okumura, Abhishek Sarkar, Chenhao Zhang, Chunru Zheng and participants at 2024 DC IO Day, 2025 APIOC and 2026 IIOC for helpful comments and suggestions.}
\endgroup
\footnotetext[1]{Department of Economics, Dartmouth College.
Email: \texttt{xiaoyun.qiu@dartmouth.edu}}
\footnotetext[2]{Sloan \& CSAIL, Massachusetts Institute of Technology.
Email: \texttt{yyu7@mit.edu}}
\footnotetext[3]{Department of Computer Science, University of Chicago.
Email: \texttt{haifengxu@uchicago.edu}}

\begin{abstract}

\emph{Performance manipulation} arises when agents exploit easily measurable,
routine tasks to inflate observable outcomes without contributing genuine
innovation or expert judgment. We formalize this phenomenon in a game-theoretic model in which agents allocate effort along
two margins. \emph{Creative effort} is non-routine cognitive labor whose return is
complementary to the agent's private expertise; it is the scarce input that
principals seek. \emph{Mechanistic effort} is the execution of well-defined,
rule-based tasks that raise performance independently of expertise, a
commoditized input that AI heavily augments. We establish the existence of a
symmetric, monotone pure-strategy equilibrium and show that performance-based
screening remains viable so long as evaluations retain a sufficient creative
component, but collapses into an uninformative pooling equilibrium once AI
capability grows large enough to crowd out creative effort. Comparing contest
allocations against a single-agent baseline isolates performance manipulation as
the competition-induced over-investment in mechanistic effort, which we show is
undertaken systematically by low-type agents but not high-type ones. We further
prove that more sharply skewed reward structures mitigate this friction by
eliciting greater creative effort across the participant pool. Finally, using a
novel, language-model-based methodology to measure both effort types from nearly
1{,}500 Kaggle competition scripts, we provide robust empirical support for the
model's predictions.
  \end{abstract}

\keywords{contests, performance manipulation, gaming, artificial intelligence, labor market}

\section{Introduction}\label{sec:intro}

The pervasive adoption of generative artificial intelligence (AI) has
precipitated a crisis of confidence in labor-market screening. For decades,
employers have relied on asynchronous evaluations---take-home coding
assessments, data-analysis assignments, legal-research memoranda---to gauge a
candidate's underlying ability. However, as AI models attain proficiency on such
standardized tasks, these instruments are losing their diagnostic
value: if a candidate can produce a competent deliverable in minutes with an AI
assistant, the observable quality of that output no longer reliably signals
private expertise. The phenomenon spans industries: software developers use
tools such as GitHub Copilot to generate standard code \citep{peng2023impact},
law firms deploy AI for e-discovery and boilerplate drafting \citep{remus2017can},
and researchers employ language models to conduct literature reviews and routine
regressions \citep{korinek2023language}. Moreover, AI
augments performance on standardized tasks far more than on the non-routine
judgment that underlies genuine expertise \citep{autor2025expertise}. This asymmetry raises a core
principal-agent concern, which we term \emph{performance manipulation}: agents
may exploit AI to inflate observable performance metrics without contributing to 
genuine, expertise-driven innovation. Motivated by these shifts, this paper asks:
how do AI tools change the efficacy and design of performance-based screening in
the labor market?

To answer this, we must first decompose the nature of work in the AI
era. 
Our taxonomy extends the influential framework of \citet{autor2003skill}, who, in examining how computer technology alters job skill demands, partitioned tasks according to whether their procedures could be codified into explicit rules: \emph{routine} tasks, which can be programmed as an unambiguous, ordered sequence of computational instructions that produce deterministic outcomes, and \emph{non-routine} tasks, which rely on human problem-solving and judgment.
The boundary, however, shifts with technological advancement. When
\citet{autor2003skill} wrote, routine effectively meant reducible to the
explicit, procedural logic of conventional software, and a large class of
rule-governed cognitive tasks, including drafting standardized text, summarizing
documents, translating specifications into boilerplate, couldn't be readily codified and therefore remained the domain of human labor. Large language
models, which acquire task competence inductively from data rather than through
hand-specified rules, have relocated this frontier, becoming capable substitutes
for many of today's routine cognitive tasks that previously demanded human
execution. Our distinction between mechanistic and creative effort 
inherits the substitution-versus-complementarity logic of \citet{autor2003skill}
while repartitioning cognitive labor along the boundary that contemporary AI is able to cross.

Building on this, we distinguish between two fundamentally different types of
effort in the era of AI. \emph{Mechanistic effort} denotes the execution of
well-defined, standardized tasks that can be accomplished by applying explicit
rules without recourse to human judgment.\footnote{A growing body of evidence
indicates that current AI systems act as powerful substitutes for mechanistic
effort, compressing the output distribution by raising its lower tail.
\citet{peng2023impact} document large speed-ups among developers using GitHub
Copilot on well-specified coding tasks, while \citet{genAI2025} find that
generative AI in customer support markedly raised the productivity of novice
agents by diffusing routine problem-solving procedures previously held tacitly
by experienced workers, but yielded comparatively modest gains for top
performers.} In software engineering, it includes writing boilerplate code; in
legal practice, summarizing existing case law; and in machine learning,
brute-force hyperparameter tuning. Mechanistic effort does not reply on the
agent's private expertise and is, by construction, highly substitutable by
AI capabilities.

In contrast, \emph{creative effort} denotes the open-ended cognitive labor of
problem formulation, strategic judgment, and novel synthesis for which no
predetermined rules exist.\footnote{At the frontier of task complexity, the
empirical record shows that AI continues to require human direction.
\citet{bubeck2023paper} observe that GPT-4, though fluent at syntactic and
retrieval-style reasoning, still depends on human input for architectural design
and problem decomposition; \citet{dellacqua2023navigating} document a ``jagged
technological frontier'' among management consultants, on which AI sharply
improves performance within its capability envelope but degrades it on tasks
requiring context-heavy, integrative reasoning. In short, current AI raises the
performance floor without lowering the creative ceiling.} Examples are the
design of unprecedented system architectures, the formulation of novel
litigation strategies, and the development of new algorithmic
frameworks (see Appendix~\ref{apx:effort_examples_four_applications} for detailed examples).  
While AI delivers substantial gains on the mechanistic margin,
output at the creative frontier remains strictly dependent on humans' domain
expertise. Later, we will also see that AI's substitutability for mechanistic
effort but not creative effort opens the
door to performance manipulation and motivates the screening problem we study.


We find striking empirical support for this separation of effort using data from machine learning competitions on Kaggle.
Unlike previous studies that rely on indirect and heuristic proxies for effort, such as working hours, we develop direct measures of mechanistic and creative effort grounded in contestants' actual problem-solving behavior, by using a large language model to analyze the content of contestants’ submitted code. We validate these measures through human verification and independent evaluations by alternative large language models.
\cref{fig:effort_by_type_intro} reveals a stark divergence: high-ability participants exert roughly twice as much creative effort as low-ability participants, but only marginally more mechanistic effort. This pattern is confirmed by our regression analysis (columns (2)(3) in \cref{tab:regressions}): regressing creative effort on participant type yields a positive and significant coefficient, indicating that high-type participants systematically engage in more creative problem-solving.  Conversely, regressing mechanistic effort on type yields a small, negative, and insignificant coefficient. High-ability and low-ability agents alike rely heavily on mechanistic techniques, confirming that such effort functions as a broadly accessible baseline rather than a differentiator of talent.

\begin{figure}[t]
    \caption{\bf Creative and Mechanistic Effort by Type}
	\centering
	\includegraphics[width=0.85\textwidth]{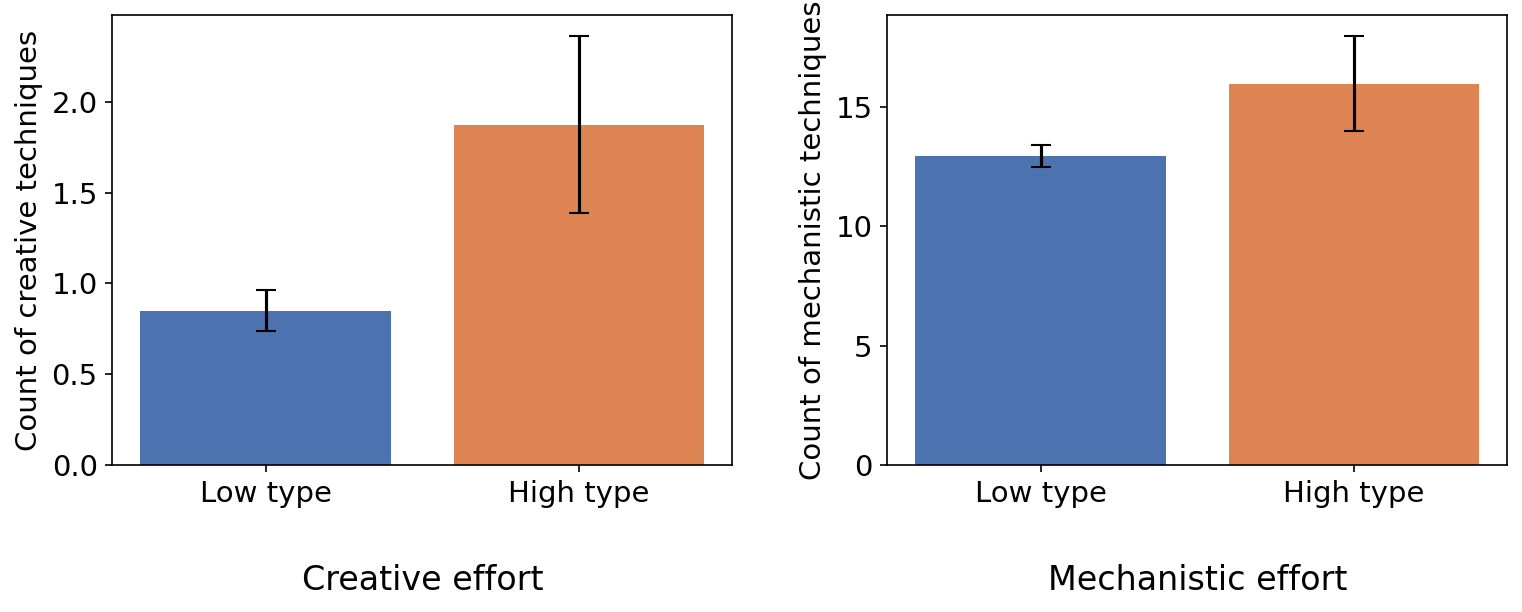}
	\label{fig:effort_by_type_intro}
    \caption*{\footnotesize \emph{Note:} The magnitude of creative (mechanistic) effort is measured by the number of creative (mechanistic) techniques identified from the submitted code scripts. Detailed empirical methodology is explained in \cref{sec:fact}. The error bars show 95\% confidence interval of the technique count.}
\end{figure}

To formally capture these empirical observations, we develop a game-theoretic contest model of performance-based screening. A principal hosts a contest where players compete on the overall \emph{performance} of their output. The contest set up can capture settings like a hiring evaluation, a grant race, or a promotional tournament. Each player possesses a private type representing their underlying capability and allocates effort between creation and mechanization. While both efforts improve observable performance, the principal desires creative effort, which drives true innovation and generalizability. Thus, the principal faces a multi-task moral hazard problem     à la \citet{holmstrom1991multitask}: the principal only observes aggregate performance, not the composition of effort. Our model formalizes the AI shock by assuming that mechanistic productivity is enhanced by an exogenous AI parameter, whereas the return to creative effort is strictly supermodular with the agent's private type.

Analyzing the equilibrium of this game allows us to characterize the conditions under which performance-based screening remains theoretically viable, and when it inevitably collapses. In the canonical signaling framework of \citet{spence1973}, an observable signal is informative if and only if it exhibits the strict single-crossing property, meaning it must be differentially less costly for high-type agents to produce. We establish that when AI capabilities become sufficiently advanced, the marginal productivity of mechanistic effort becomes so high that it completely crowds out human creative effort. Because all agents optimally rely entirely on this AI-enhanced mechanistic margin, the cost of performance becomes independent of inherent ability, destroying the single-crossing property and collapsing the market into a pooling equilibrium. As a result, performance-based screening becomes ineffective. Conversely, when AI capabilities are relatively low, the marginal return from creative effort remains high compared to the mechanistic alternative. This encourages agents to use their creative margin to meet their performance goals. Because creative effort and inherent ability are technologically complementary, the usage of this creative margin restores the strict single-crossing property. High-type agents face a differentially lower cost of generating output on this dimension, ensuring that performance continues to function as a strictly monotonic signal of underlying ability, resulting in a separating equilibrium.

These theoretical results provide precise implications for labor markets under AI. The recent employer panic over AI implicitly assumes that large language models have permanently destroyed the single-crossing property by driving the cost of high-quality output to near-zero for all candidates.\footnote{The widespread diffusion of LLMs precipitated a crisis of confidence among human capital practitioners, who feared AI had commoditized knowledge work and collapsed separating equilibria into pooling ones. Because candidates could theoretically generate observationally identical deliverables regardless of ability, tech and legal firms increasingly abandoned asynchronous evaluations, such as take-home coding tests or entry-level legal briefs \citep{katz2024gpt}, in favor of live, proctored assessments.} Our model demonstrates that this fear is mathematically well-founded, but only for evaluations dominated by routine implementation. Ultimately, performance-based screening remains robust, provided that principals actively redesign evaluations to target the shifting frontier of human creativity.

As increasing AI capability crowds out creative effort, the exact design of the contest matters. By comparing the contest equilibrium to a single-agent baseline, we formally define \emph{performance manipulation} as the strategic over-investment in mechanistic effort induced by competitive pressure. We find a distinct threshold behavior: lower-type agents systematically engage in performance manipulation, flooding the principal with AI-augmented mechanistic output to artificially inflate their observable score. This dynamic yields a critical, actionable prescription for labor market screening: employers must fundamentally redesign evaluation exercises, such as take-home coding tests or legal research memos, to ensure they embed a dominant creative component. Tasks that can be solved primarily via mechanistic effort will inevitably pool candidates. Only tasks demanding structural synthesis, diagnostic problem-solving, or judgment under uncertainty will successfully separate them. Such shifts are already underway in practice. 
For example, Google is piloting new software engineering interviews in which candidates debug a broken codebase with AI assistants and are evaluated on their ability to prompt, interpret and verify AI-generated solutions.\footnote{\url{https://www.amigohelp.ai/blog/google-ai-assisted-coding-interview/}}
Furthermore, our model dictates that principals must implement more heavily skewed reward structures favoring top performers. Counterintuitively, as AI makes routine tasks easier, principals must steepen incentives to adequately compensate the high marginal cost of scarce creative effort and effectively restrain manipulation.

We provide empirical evidence on the effectiveness of performance-based screening using a unique and rich dataset from machine learning competitions on Kaggle. These competitions took place before the recent surge in generative AI following the release of ChatGPT in late 2022, corresponding to the low-AI-capability regime in our model. 
This rich dataset captures not only contestants’ performance but also their submitted code, allowing us to uncover the otherwise hidden problem-solving processes underlying observable performance metrics.
We leverage this unusually granular information to construct novel, direct measures of mechanistic and creative effort. Specifically, we construct these measures by prompting a large language model to identify mechanistic and creative techniques from contestants’ submitted code, and validate them through both human assessment and independent evaluations by alternative large language models.
Unlike conventional effort proxies, which are often indirect and heuristic, our measures are grounded in the concrete, step-by-step problem-solving process, making them more transparent, interpretable, substantively meaningful, and verifiable than black-box proxies.

Our regression analysis is consistent with the theoretical prediction that performance-based screening remains effective when AI capabilities are limited: high-type contestants achieve AUC scores roughly 4\% higher than low-type contestants on average, allowing underlying types to be distinguished based solely on observable performance.
In addition, we find a pronounced divergence in the effort composition of high- and low-type contestants: high types exert roughly 50\% more creative effort than low types, while their mechanistic effort is statistically indistinguishable. 
While AI commoditizes routine execution, this finding underscores the importance of incorporating creative problem-solving into evaluation exercises to restore the skill premium and effectively screen candidates in the AI-assisted era.


\subsection{Literature Review}

Performance manipulation is widely studied in AI. To our knowledge, however, most existing work focuses either on detecting \emph{reward hacking} by AI agents—the exploitation of misspecified objectives, also known as specification gaming or benchmark hacking \citep[e.g.,][]{deshpande2026benchmarking,wang2026broke}—or on designing models robust to strategic behavior by downstream users and data providers \citep{hardt2016strategic,bjorkegren2020manipulation,freeman2020no,ahmadi2021strategic,perdomo2020performative}. Our notion of performance manipulation is the human analog of these behaviors in a competitive labor-market setting: we share the concern with gamed objectives but study the incentives of human contestants in ML contests, rather than those of AI agents or downstream users.


As for modeling, our model relates to the classic literature on multi-tasking,
where agents allocate effort across multiple tasks
\citep[e.g.,][]{holmstrom1991multitask, benabou2016bonus}. Multi-tasking has been
studied in contexts such as firm management \citep{managerial1985}, academic
publishing \citep{ellison2002evolving}, and pharmaceutical innovation
\citep{bryan2022r}. Unlike this literature, which typically examines a single
agent deciding between tasks that all benefit the principal, we study a
multi-agent contest where only one type of effort benefits the principal.

Our model also relates to the broader literatures on contest theory and moral hazard in principal-agent theory \citep{clark1998competition,holmstrom1979moral,grossman1992analysis}, but differs by allowing agents to choose two distinct dimensions of effort rather than a single action.
Our division of effort between creation and mechanization is partly motivated by
the rise of generative AI and its rapid, large-scale deployment in the workplace
\citep[e.g.,][]{brynjolfsson2025generative,massenkoff2026labor}, including its
growing role in machine learning tasks \citep[e.g.,][]{chan2024mle}. However, this
is the first paper to formalize which ML techniques are readily mechanizable and
automatable by AI agents and which are not---an essential distinction for
understanding how AI tools are reshaping skill demands and competitive dynamics in
machine learning competitions.

Our paper also contributes to the literature on gaming behaviors.
\citet{phacking2020} studies ``p-hacking'' in data manipulation,\footnote{``p-hacking''
occurs when researchers selectively collect or analyze data until nonsignificant
results appear significant, representing mechanistic effort in our model.} while
\citet{ball2019scoring} and \citet{frankel2019muddled} examine gaming in signaling
games, and \citet{perez2022test} and \citet{li2023contests} analyze fraud-proof
information design and gaming in contests. In our setting, by contrast,
performance manipulation arises in machine learning contests as contestants
over-invest in mechanistic effort to inflate observable performance, splitting
effort between the creative margin the principal values and the mechanistic margin
AI has made cheap.

To the best of our knowledge, this paper is the first to develop a scalable
approach for measuring multi-dimensional effort from unstructured data. While a
large literature measures effort in laboratory and field experiments
\citep[e.g.,][]{gill2012structural,gill2019measuring,duflo2012incentives},
measuring effort in nonexperimental settings has been considerably more
challenging because process data that reveal effort (e.g., submitted documents or
prototypes) are often unavailable and, even when available, have been difficult to
quantify systematically and at scale. As a result, the few existing
nonexperimental studies have relied on coarse proxies for effort, such as working
hours
\citep[e.g.,][]{boudreau2011incentives}. Leveraging code
scripts from Kaggle competitions, we develop a novel, 
language-model-based approach to measuring effort from unstructured text. Our
approach also contributes to the growing literature on using AI and machine
learning to construct novel variables from unstructured data
\citep[e.g.,][]{caldara2022measuring,gorodnichenko2023voice,compiani2025demand,magnolfi2025triplet}.

\section{Model}\label{sec:model}

A principal (hereafter, ``she'') hosts a contest with $\numplayer$ players (each referred to as ``he''). 
Each player $i$ draws a private type $\type_i$ independently and identically from a commonly known distribution $\disttype$. This distribution has support $\Theta=[\underline{\type},\bar{\type}]$ and a bounded, atomless density function $\densitytype$. 
Types are private information; the principal and rival players know only the prior distribution $\disttype$. 
Let $\types=(\type_i)_{i=1}^{\numplayer}$ denote the type profile of all players, and $\typesminus=(\type_1,\dots,\type_{i-1},\type_{i+1},\dots,\type_{\numplayer})$ denote the profile of all players except $i$.  

Players compete based on their \emph{performance} in the contest. Denote player $i$'s performance by $\signal_i$. 
Each player's performance is a random draw from a continuous distribution $\distsignal_{\fitness_i}(\cdot)$ with \emph{mean performance} $\fitness_i$, support $\mathbb{R}_+$, and a bounded, atomless density function $\densitysignal_{\fitness_i}(\cdot)$. 
We assume the following condition on the performance distributions:

\begin{assumption}[FOSD]\label{a:fosd}
    The distribution $\distsignal_{\fitness}(\signal)$ first-order stochastically dominates $\distsignal_{\fitness'}(\signal)$ for any $\fitness>\fitness'$. That is, $\distsignal_{\fitness}(\signal)\leq\distsignal_{\fitness'}(\signal)$ for any $\signal\geq0$, with strict inequality for some $\signal\geq 0$.
\end{assumption}

A player determines his mean performance by allocating effort across two distinct dimensions: \emph{creative effort} $\mdeffort_i$ and \emph{mechanistic effort} $\pteffort_i$. The chosen effort profile determines the mean performance $\fitness_i$ via the production function:
\begin{align}\label{eq:fitness}
    \fitness_i = \mdeffortfn(\mdeffort_i, \type_i) + \pteffortfn(\pteffort_i; \gamma),
\end{align}
where $\gamma \in \mathbb{R}_+$ is an exogenous parameter capturing the prevailing technological frontier, which we interpret as the capability of AI. Effort choices are private information. Let $\fitnesses=(\fitness_i)_{i=1}^{\numplayer}$ denote the mean performance profile of all players.

We assume the creative production function $\mdeffortfn(\mdeffort_i, \type_i)$ is strictly increasing and strictly concave in $\mdeffort_i$, with $\mdeffortfn(0, \type_i) = 0$ for all $\type_i \in \Theta$. Crucially, we make the following assumption on $\mdeffortfn$:

\begin{assumption}[Supermodularity of $\mdeffortfn$]\label{a:sm}
    $\mdeffortfn(\mdeffort_i, \type_i)$ is strictly supermodular in $(\mdeffort_i, \type_i)$: $\frac{\partial^2\mdeffortfn(\mdeffort_i, \type_i)}{\partial\mdeffort_i\partial\type_i} > 0$.
\end{assumption}

In contrast, the mechanistic production function does not require private expertise, but its productivity is enhanced by the AI capability $\gamma$. We assume $\pteffortfn(\pteffort_i; \gamma)$ is strictly increasing and strictly concave in $\pteffort_i$, satisfying the Inada condition $\lim_{\pteffort \to \infty} \frac{\partial \pteffortfn(\pteffort;\gamma)}{\partial \pteffort} = 0$, with $\pteffortfn(0; \gamma) = 0$. Furthermore, AI capability universally raises baseline fitness: $\frac{\partial \pteffortfn(\pteffort_i; \gamma)}{\partial \gamma} > 0$. 
Moreover, AI capability universally raises the marginal productivity of mechanistic effort:

\begin{assumption}[Supermodularity of $\pteffortfn$]\label{a:sm-pt}
    $\pteffortfn(\pteffort_i; \gamma)$ is strictly supermodular in $(\pteffort_i, \gamma)$: $\frac{\partial^2\pteffortfn(\pteffort_i; \gamma)}{\partial\pteffort_i\partial\gamma} > 0$.
\end{assumption}

Throughout, we write $\pteffortfn'(\pteffort;\gamma)\coloneqq\frac{\partial \pteffortfn(\pteffort;\gamma)}{\partial \pteffort}$ and $\pteffortfn''(\pteffort;\gamma)\coloneqq\frac{\partial^2 \pteffortfn(\pteffort;\gamma)}{\partial \pteffort^2}$ for the partial derivatives of $\pteffortfn$ with respect to mechanistic effort, reserving $\frac{\partial \pteffortfn(\pteffort;\gamma)}{\partial \gamma}$ for the derivative with respect to the AI-capability parameter. 

\paragraph{Contest awards and payoffs.} 
The principal pre-specifies a prize vector $\rewards=(\reward_k)_{k=1}^{\numplayer}$ with $\reward_k\geq \reward_{k +1}\geq0$ for $1\leq k\leq \numplayer-1$, such that the player whose performance ranks $k$th receives the $k$th highest prize $\reward_k$. 
Denote the $k$th highest performance by $\signal_{(k)}$, such that $\signal_{(1)}\geq \signal_{(2)} \dots \geq \signal_{(\numplayer)}$. 
Let $i_k^*$ be the player whose performance ranks $k$th (i.e., $\signal_{i_k^*}=\signal_{(k)}$).

The total effort exerted by player $i$ is $\budget_i\coloneqq\mdeffort_i+\pteffort_i$. Let $\etcostfn(\budget_i)$ denote the cost of exerting this total effort.  
We assume $\etcostfn$ is twice differentiable, non-decreasing, and convex in $\budget_i$, with $\etcostfn(0)=0$.
Player $i$'s \emph{payoff} for exerting effort $(\mdeffort_i, \pteffort_i)$ and achieving a performance level $\signal_i$ is:
\begin{align}\label{eq:player's payoff}
\pi_i=\underbrace{\sum_{k=1}^\numplayer \indicate{i=i_k^*}\cdot \reward_k }_{\text{contest gain}} + \underbrace{\signal_i}_{\text{baseline gain}} - \etcostfn(\mdeffort_i+\pteffort_i),
\end{align}
where $\indicate{\cdot}$ is an indicator function equal to $1$ if player $i$ wins the $k$th highest prize. 
Players derive utility from two sources: \textit{contest gain} and \textit{baseline gain}. Contest gain represents the direct reward won from the competition—such as monetary prizes from a Kaggle contest, a workplace promotion, or grant funding in a research race. Baseline gain captures the benchmark return from absolute performance, such as an individual's enhanced reputation,\footnote{For example, winning or placing highly in a Kaggle competition increases a participant's visibility and signals their data science skills to potential employers. \citet{cisternas2018career} models these career concerns explicitly, where the market infers a player's type based on their observable signal.} or a baseline salary progression in the workplace. 

\paragraph{Timeline.} The game unfolds in three stages:   
\begin{enumerate}
    \item The principal announces the prize vector $\rewards$.
    \item Each player observes his own type $\type_i$ and chooses an effort allocation $(\mdeffort_i, \pteffort_i)$ to achieve mean performance $\fitness_i$.
    \item Performance levels are realized.\footnote{Because performance is drawn from a continuous distribution, ties occur with probability $0$.} Rewards are allocated according to the contest rules and final rankings. 
\end{enumerate}

\paragraph{Solution concept.}  Each player $i$'s strategy $\eqmfitness_i$ is a mapping from the type space $\Theta$ to an effort allocation $(\mdeffort_i,\pteffort_i)$.
We adopt the standard \emph{Bayesian Nash equilibrium} as our solution concept. 
A Bayesian Nash equilibrium is a strategy profile $\eqmfitnesses^*=(\eqmfitness_i^*)_{i=1}^{\numplayer}$ such that for each player $i$, $\eqmfitness_i^*$ maximizes his expected utility, given the strategies of all other players $\eqmfitnesses^*_{-i}=(\eqmfitness_j^*)_{j\neq i}$. Formally, $\eqmfitness_i^*$ solves the following optimization problem: 
\begin{equation}\tag{$\mathcal{P}_i$}\label{player i's problem}
    \begin{aligned}
        \max_{\eqmfitness_i(\type_i)} & \quad \mathbb{E}_{\typesminus, \signals}\left[\pi_i(\eqmfitness_i,\type_i, \eqmfitnesses^*_{-i})\mid \fitness(\eqmfitness_i;\type_i),\eqmfitnesses^*_{-i}\right]\\
    \text{s.t. } & \quad \fitness(\eqmfitness_i;\type_i)=\mdeffortfn(\mdeffort_i,\type_i)+\pteffortfn(\pteffort_i;\gamma).
    \end{aligned}
\end{equation}
Notably, this expectation is taken over both the private types of competing players and the random performance realization given the chosen mean performance $\fitness_i$. 

Let $\mdeffortcontest_i(\type_i)$, $\pteffortcontest_i(\type_i)$, and $\budgetcontest_i(\type_i)$ denote the equilibrium creative, mechanistic, and total effort, respectively, for player $i$ with type $\type_i$. Let player $i$'s equilibrium fitness be $\fitnesscontest_i(\type_i):= \fitness(\mdeffortcontest_i(\type_i),\pteffortcontest_i(\type_i);\type_i)$. Finally, a strategy profile $\eqmfitnesses^*$ is a \emph{symmetric} Nash equilibrium if $\eqmfitness_i^* = \eqmfitness_j^*$ for all $i, j$.   

\paragraph{Examples}

    \begin{example}[Machine Learning Contests]
        In machine learning (ML) contests, firms or academic institutions (principals) host competitions to elicit high-quality predictive models from the data science community. Notable examples include Kaggle, AIcrowd, the \emph{ImageNet Challenge} \citep{deng2009imagenet}, and the ongoing race in industrial LLM development. 
        
        Creative effort includes introducing a novel algorithmic framework, designing a new optimization strategy, or incorporating structural knowledge of a specific dataset. This delivers genuinely innovative solutions that depend heavily on the participant's private expertise ($\type_i$).

        Mechanistic effort includes deploying standard plug-and-play packages, executing brute-force hyperparameter searches, and applying default architectures. This trial-and-error optimization is highly substitutable by prevailing automated ML tools \citep{hutter2019automated} and AI capabilities ($\gamma$).\footnote{We provide detailed examples of creative and mechanistic effort from scripts submitted to Kaggle in \cref{tab:effort_classification_featureengineering} and Appendix~\ref{apx:technique_classification_examples}.}

        The contest gain is the monetary prize or formal recognition awarded to top-ranked models. The baseline gain captures the participant's increased visibility, enhanced portfolio, and labor market signaling \citep{cisternas2018career}, which scale with absolute performance regardless of final rank.
    \end{example}
    
    \begin{example}[Software Engineering Contracts]
       In the software labor market, freelance developers compete for lucrative client contracts \citep{agrawal2015digitization}, or internal engineers race for promotions and project leadership.
    
    Creative effort includes designing unprecedented system architectures, navigating ambiguous client requirements, and anticipating novel hardware bottlenecks. These tasks operate beyond documented solutions and rely on the developer's private expertise.
    
    Mechanistic effort includes writing boilerplate code,\footnote{Boilerplate code acts as a template that sets up the basic infrastructure required for a program to run.} translating logic across languages, and generating routine unit tests. These pattern-replicable tasks are directly enhanced by prevailing AI tools like GitHub Copilot, which substitute for routine human work \citep{peng2023impact}.
    
    The contest gain consists of securing the freelance contract, the internal promotion, or project leadership. The baseline gain captures the general expansion of the developer's portfolio, skill improvement, and baseline salary progression, which accrue based on absolute code quality.
    \end{example}
    
    \begin{example}[Legal Practice and Client Pitches]
        Law firms frequently pitch against rival practices to secure limited corporate clients, acting as players in a ranked contest where the corporate client is the principal \citep{galanter1991tournament}.
    
    Creative effort includes formulating novel litigation strategies, assessing opponent psychology, and structuring first-of-its-kind M\&A deals in emerging regulatory environments. Such tasks demand high-level strategic judgment that strongly complements the lawyer's private type.
    
    Mechanistic effort includes conducting e-discovery, drafting boilerplate contracts (e.g., NDAs), and verifying legal citations. These tasks depend on retrieving and summarizing historical precedent, making their marginal productivity highly responsive to advances in AI capability \citep{remus2017can}.
    
    The contest gain represents acquiring the lucrative client or securing a favorable trial verdict. The baseline gain reflects the firm's broader reputation building, billable hours generated during research, and the long-term enhancement of the attorney's credibility.
    \end{example} 
    
    \begin{example}[Scientific Research and Grant Funding]
        Research laboratories compete for limited grant funding (e.g., from the NSF or NIH) \citep{stephan1996economics}, or race to publish breakthrough discoveries in top journals. Here, the grant agency or editorial board acts as the principal.
    
    Creative effort includes generating novel hypotheses, designing new identification strategies, and relying on intuition to process negative results or data anomalies. This fundamentally pushes the knowledge frontier and depends on the researcher's domain-specific expertise.
    
    Mechanistic effort includes surveying existing literature, executing pre-specified statistical regressions, and formatting manuscripts. These routine, rule-based tasks are easily substituted and enhanced by LLMs trained on historical scientific corpora \citep{korinek2023language}.
    
    The contest gain is the discrete reward of securing the grant funding or publishing in a prestigious journal. The baseline gain captures the underlying academic value generated, such as incremental citations, generalized knowledge accumulation, and steady progress toward academic tenure.
    \end{example}

\section{Equilibrium Existence}

Observe that in player $i$'s problem (\ref{player i's problem}), the effort allocated across the two actions determines the agent's utility through two channels: (1) the mean performance $\fitness_i$, which dictates the resultant probability distribution over the contest rankings, and (2) the total effort exerted, which determines the cost. We can transform each player's strategy space into an alternative, equivalent formulation. In this auxiliary game, each agent first chooses a target mean performance $\fitness_i$ to maximize his utility, and then chooses the least-costly allocation of effort across the two actions to achieve that target $\fitness_i$. This two-stage perspective allows us to disentangle the effects of performance and effort allocation, significantly simplifying the equilibrium analysis. 

To describe this auxiliary game, we first define the minimum cost for a player of type $\type_i$ to achieve mean performance $\fitness_i$ under AI capability $\gamma$ as the following function: 
\begin{equation}\label{cost function of fitness}
    \fscostfn(\fitness_i,\type_i;\gamma)=\min_{\mdeffort_i,\pteffort_i\geq 0}\etcostfn(\mdeffort_i+\pteffort_i) \qquad \text{   s.t. }  \, \, \mdeffortfn(\mdeffort_i,\type_i)+\pteffortfn(\pteffort_i;\gamma)=\fitness_i.
\end{equation}

In the auxiliary game, when player $i$ has type $\type_i$ and chooses mean performance $\fitness_i$, his expected utility is given by $\sum_{k=1}^\numplayer \indicate{i=i_k^*}\cdot \reward_k+\fitness_i-\fscostfn(\fitness_i,\type_i;\gamma)$. 
Agent $i$'s strategy in this alternative game is characterized by the pair $(\eqmfitnessalt_i, \eqmeffortalts)$, where $\eqmfitnessalt_i$ is a mapping from his type space $\Theta$ to a choice of mean performance $\fitness_i$, and $\eqmeffortalts=(\eqmeffortalt_1,\eqmeffortalt_2)$ maps the type and mean performance space $\Theta\times\mathbb{R}_+$ to the optimal allocation of the two types of effort. 

The following lemma formally establishes the connection between our original game and the auxiliary game. Armed with this equivalence, we conduct the remainder of our analysis on the auxiliary game.

\begin{lemma}\label{lemma:equivalent}
    There is a one-to-one mapping from any equilibrium $(\eqmfitness_i^*)_i$ of the original game to an equilibrium $((\eqmfitnessalt_i^\ddag)_i, \eqmeffortalts^\ddag)$ in the auxiliary game, such that $\eqmfitness_i^*(\type)=\eqmeffortalts^\ddag(\eqmfitnessalt_i^\ddag(\type),\type)$ for any $\type\in\Theta$.
\end{lemma}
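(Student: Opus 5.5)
The plan is to exploit the fact that the effort allocation $(\mdeffort_i,\pteffort_i)$ affects player $i$'s payoff in two ways only. It pins down the mean performance $\fitness_i=\mdeffortfn(\mdeffort_i,\type_i)+\pteffortfn(\pteffort_i;\gamma)$, which in turn governs the distribution $\distsignal_{\fitness_i}$ of $\signal_i$ and hence both the rank distribution and the baseline gain. It also pins down the cost $\etcostfn(\mdeffort_i+\pteffort_i)$. Two consequences follow.

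First, rivals' payoffs depend on player $i$'s strategy only through the induced map $\type\mapsto\fitness_i(\type)$. Given rivals' strategies, $i$'s expected utility from an allocation with mean $\fitness_i$ is
\[
G_i(\fitness_i)+\fitness_i-\etcostfn(\mdeffort_i+\pteffort_i),
\]
where $G_i(\fitness_i)$ is the expected contest gain. $G_i$ depends only on $\fitness_i$ and on the rivals' induced mean-performance maps, and the term $\fitness_i$ uses $\mathbb{E}[\signal_i]=\fitness_i$.

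Second, before building the correspondence, I will show that the cost-minimization problem \eqref{cost function of fitness} has a unique solution. Substitute $\budget=\mdeffort+\pteffort$. For fixed $\budget$, the attainable mean is maximized by splitting $\budget$ so as to maximize $\mdeffortfn(\mdeffort,\type)+\pteffortfn(\budget-\mdeffort;\gamma)$. This objective is strictly concave in $\mdeffort$, so the split is unique. The maximal attainable mean $M(\budget)$ is strictly increasing and continuous in $\budget$. The least-cost way to reach $\fitness$ is therefore to choose the smallest $\budget$ with $M(\budget)\ge\fitness$ and use the unique maximizing split there. Minimality of $\budget$ then forces $M(\budget)=\fitness$. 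This defines a single-valued $\eqmeffortalts^\ddag(\fitness,\type)$ with $\etcostfn(\eqmeffortalts^\ddag)=\fscostfn(\fitness,\type;\gamma)$.

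I expect this step to be the main obstacle. $\etcostfn$ is only non-decreasing, not strictly increasing, so a flat region of $\etcostfn$ could admit several cost-minimizing allocations. My plan is to resolve ties by defining $\eqmeffortalts^\ddag$ as the minimal-total-effort allocation, which is unique by the argument above. The one-to-one claim would then be interpreted modulo this selection, or I would note that flat regions of $\etcostfn$ are never strictly optimal to exploit when they generate ties. I would also check that $\mdeffortfn$ and $\pteffortfn$ are unbounded or large enough that every relevant $\fitness$ is attainable; otherwise the domain of $\fscostfn$ is restricted to attainable means.

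With these pieces in place, the two directions go as follows.

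\emph{From the original game to the auxiliary game.} Take an equilibrium $(\eqmfitness_i^*)_i$ and define $\eqmfitnessalt_i^\ddag(\type)=\fitness(\eqmfitness_i^*(\type);\type)$. Optimality of $\eqmfitness_i^*(\type)$ implies it is a cost-minimizing allocation for its own mean; otherwise a cheaper allocation with the same mean would give the same $G_i+\fitness_i$ at lower cost. Under the selection above, this gives $\eqmfitness_i^*(\type)=\eqmeffortalts^\ddag(\eqmfitnessalt_i^\ddag(\type),\type)$. Since the auxiliary payoff of choosing $\fitness$ equals $\max$ over allocations with mean $\fitness$ of the original payoff, and rivals' behavior is unchanged, $\eqmfitnessalt_i^\ddag(\type)$ maximizes the auxiliary payoff.

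\emph{From the auxiliary game back to the original game.} Given an auxiliary equilibrium, compose the two maps to obtain an original strategy. Any deviation allocation with mean $\fitness'$ is weakly worse than $\eqmeffortalts^\ddag(\fitness',\type)$, which in turn is weakly worse than the equilibrium choice. Hence the composed profile is an equilibrium of the original game.

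Injectivity of the map is immediate, because $\eqmfitnessalt^\ddag$ is recovered from $\eqmfitness^*$ through the production function.
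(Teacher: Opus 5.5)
Your proposal is correct and follows the same sufficient-statistic argument as the paper. Rivals affect player $i$ only through their induced mean-performance maps, so once each target $\fitness_i$ is reached at least cost, problems $(\mathcal{P}_i)$ and $(\mathcal{P}_i')$ coincide; you add the uniqueness of the least-total-effort allocation and both directions of the correspondence, which the paper's short proof leaves implicit.

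Your tie-breaking worry is harmless. The expected benefit $G_i(\fitness_i)+\fitness_i$ is strictly increasing, so any allocation that is not the mean-maximizing split of its own total effort can be re-split at identical cost for a strictly higher payoff. Hence every best response in the original game is automatically your minimal-total-effort selection $\eqmeffortalts^\ddag(\fitness_i,\type)$, with no need to work modulo the selection.
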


The intuition behind this equivalence is that in the original game, each player's payoff depends on the strategies of the other players only through their mean performance. In other words, the other players' mean performance profile acts as a sufficient statistic for their strategies. The formal proof is provided in Appendix~\ref{apx:auxiliary_game}.

\begin{proposition}\label{prop:exist}
    Under Assumption \ref{a:fosd} and \ref{a:sm}, there exists a pure strategy Nash equilibrium that is: 
    \begin{itemize}
        \item \emph{symmetric}:  $\eqmfitnessalt_i^*(\type_i)=\eqmfitnessalt^*(\type_i)$ for all $i$; 
        \item and \emph{monotone}: $\eqmfitnessalt^*(\type_i)$ is non-decreasing in $\type_i$. Moreover, when the derived cost function $\fscostfn(\fitness, \type; \gamma)$ satisfies strict decreasing differences in $(\fitness,\type)$, $\eqmfitnessalt^*(\type)$ is strictly increasing. 
    \end{itemize}
\end{proposition}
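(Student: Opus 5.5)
We work in the auxiliary game, which suffices by Lemma~\ref{lemma:equivalent}. There each player of type $\type$ picks a mean performance $\fitness\ge 0$ and receives
\[
U(\fitness,\type;\eqmfitnessalt_{-i})=\sum_{k}\reward_k\, q_k(\fitness;\eqmfitnessalt_{-i})+\fitness-\fscostfn(\fitness,\type;\gamma),
\]
where $q_k$ is the probability of ranking $k$th given the rivals' strategies.

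\textbf{Step 1: properties of $\fscostfn$.} Since $\mdeffortfn(\cdot,\type)$ and $\pteffortfn(\cdot;\gamma)$ are strictly increasing, concave and vanish at zero, and $\etcostfn$ is convex and non-decreasing, the feasible set in \eqref{cost function of fitness} is compact after truncation. By Berge's maximum theorem, $\fscostfn$ is then continuous in $(\fitness,\type)$. It is also non-decreasing in $\fitness$.

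We next show that $\fscostfn$ has weakly decreasing differences in $(\fitness,\type)$. The envelope argument goes as follows. Parametrize the problem by the creative output $x=\mdeffortfn(\mdeffort,\type)$, so that $\mdeffort=\mdeffortfn^{-1}(x;\type)$. Assumption~\ref{a:sm}, together with $\mdeffortfn(0,\type)=0$, implies $\partial_\type \mdeffortfn>0$. Hence $\mdeffortfn^{-1}$ is decreasing in $\type$, and, more importantly, its marginal $\partial \mdeffortfn^{-1}/\partial x = 1/\partial_\mdeffort\mdeffortfn$ is decreasing in $\type$. Then
\[
\fscostfn(\fitness,\type)=\min_{0\le x\le\fitness}\etcostfn\bigl(\mdeffortfn^{-1}(x;\type)+\pteffortfn^{-1}(\fitness-x;\gamma)\bigr).
\]
Applying the envelope formula for $\partial_\fitness\fscostfn$ at the optimal split, the marginal cost of $\fitness$ equals $\etcostfn'(\budget)$ times the marginal effort, where the marginal effort is the common value $1/\partial_\mdeffort\mdeffortfn=1/\pteffortfn'$ at an interior split. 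We then argue that this marginal cost is non-increasing in $\type$ via a monotone comparative statics argument on the split. This is the step I expect to be the main obstacle: convexity of $\etcostfn$ interacts with the total-effort change, so some care is needed, possibly by comparing directly $\fscostfn(\fitness',\type')-\fscostfn(\fitness,\type')\le \fscostfn(\fitness',\type)-\fscostfn(\fitness,\type)$ for $\type'>\type$ using feasible reallocations. If a clean proof fails, I would state decreasing differences of $\fscostfn$ as the operative condition, which is what the statement's "moreover" clause suggests the authors use.

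\textbf{Step 2: monotone best responses.} Fix rivals using a common non-decreasing strategy $\eqmfitnessalt$. The contest-gain term depends on $\fitness$ only and not on $\type$, while $\fitness-\fscostfn$ has (weakly) increasing differences by Step 1. By Topkis/Milgrom--Shannon, the best-response correspondence is therefore non-decreasing in the strong set order. Under strict decreasing differences, every selection is strictly increasing, by the Edlin--Shannon strict monotonicity argument, given interiority or a first-order condition. Assumption~\ref{a:fosd} ensures that $q_k$ shifts probability toward higher ranks as $\fitness$ rises, so the contest term is continuous and non-decreasing in $\fitness$. Continuity follows because $\distsignal_\fitness$ has bounded densities and ties occur with probability zero.

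\textbf{Step 3: existence via a fixed point.} Because prizes are bounded, $\fitness-\fscostfn(\fitness,\type)$ eventually becomes negative relative to $U(0,\type)$. This holds because marginal products of both efforts vanish (Inada for $\pteffortfn$ and concavity for $\mdeffortfn$, assuming $\etcostfn$ is not eventually flat; otherwise truncate). Consequently, choices can be restricted to a compact interval $[0,\bar\fitness]$. Consider the set of non-decreasing functions $\Theta\to[0,\bar\fitness]$. This set is a complete lattice under the pointwise order, and it is compact and convex in the topology of pointwise convergence at continuity points (Helly's selection theorem). Define the symmetric best-response map $T(\eqmfitnessalt)$ as the maximal (or minimal) best response of each type when all rivals play $\eqmfitnessalt$. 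By Step 2, $T$ maps monotone functions into monotone functions.

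To obtain a fixed point, I would apply Athey's (2001) single-crossing-condition theorem for games of incomplete information. Its hypotheses are atomless types, which holds by the density assumption, together with the payoff of each type satisfying single crossing in $(\fitness,\type)$ whenever rivals use monotone strategies, which was established in Step 2. Athey's theorem delivers a pure strategy equilibrium in non-decreasing strategies. Applying it to the symmetric restriction, via Reny's (2011) symmetric version or a Kakutani--Glicksberg argument on the symmetric monotone strategy set with upper hemicontinuity from continuity of $U$, yields a symmetric monotone equilibrium. Strictness under strict decreasing differences then follows from Step 2, and Lemma~\ref{lemma:equivalent} transfers the equilibrium back to the original game.
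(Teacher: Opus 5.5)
Your route is the paper's: decreasing differences of $\fscostfn$ via the shadow price of the performance constraint, then a type-free contest gain so that the auxiliary payoff has increasing differences in $(\fitness,\type)$. After that comes a monotone-equilibrium existence theorem (the paper invokes Theorem~4.5 of \citet{reny2011existence} directly, which already delivers symmetry), and finally a first-order-condition argument for strictness.

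The one real hole is Step~1, which you leave open with the fallback of simply assuming decreasing differences of $\fscostfn$. That fallback would not prove the statement. Weak monotonicity is claimed under Assumptions~\ref{a:fosd} and~\ref{a:sm} alone; the ``moreover'' clause concerns only the strict version.

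The obstacle you anticipate is not there, because convexity of $\etcostfn$ works in your favor. Write $\partial_\fitness\fscostfn(\fitness,\type)=\etcostfn'(\budget(\fitness,\type))\,\lambda^*(\fitness,\type)$, with $\budget$ the minimal total effort and $\lambda^*$ its marginal in $\fitness$.
\begin{itemize}
\item Since $\partial_\type\mdeffortfn>0$, $\budget$ is non-increasing in $\type$, hence so is $\etcostfn'(\budget)$.
\item For $\lambda^*$, use the split you set up. $\partial_x\mdeffortfn^{-1}(x;\type)$ is decreasing in $\type$, so by Topkis the optimal creative output $x^*(\fitness,\type)$ is non-decreasing in $\type$. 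Moreover $\lambda^*=\min\{\partial_x\mdeffortfn^{-1}(x^*;\type),\,(\pteffortfn^{-1})'(\fitness-x^*)\}$.
\item Take $\type'<\type$. If $x^*(\type')<\fitness$, then $\lambda^*(\type)\le(\pteffortfn^{-1})'(\fitness-x^*(\type))\le(\pteffortfn^{-1})'(\fitness-x^*(\type'))=\lambda^*(\type')$ by convexity of $\pteffortfn^{-1}$.
\item If $x^*(\type')=\fitness$, then $x^*(\type)=\fitness$ and $\lambda^*(\type)\le\partial_x\mdeffortfn^{-1}(\fitness;\type)\le\partial_x\mdeffortfn^{-1}(\fitness;\type')=\lambda^*(\type')$.
\end{itemize}
A product of non-negative non-increasing functions is non-increasing, and integrating over $[\fitness',\fitness]$ gives the decreasing differences. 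This is Property~6 of Lemma~\ref{lemma:budget_property}.

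Three smaller points.
\begin{itemize}
\item In Step~3, drop the Kakutani--Glicksberg alternative and the lattice/extremal-selection setup. $\contestgain(\cdot,\eqmfitnessesalt_{-i})$ is not concave, so best-reply sets need not be convex. In a contest, $T$ need not be isotone in the rivals' strategy, so Tarski does no work either. Reny's theorem, which requires only contractible-valued best replies, is the right tool.
\item Your truncation to $[0,\bar{\fitness}]$ needs $\partial_\fitness\fscostfn>1$ eventually. Strict concavity of $\mdeffortfn$ does not force $\partial_\mdeffort\mdeffortfn\to 0$, so an extra condition is needed; the paper just asserts boundedness.
\item Edlin--Shannon needs $\contestgain$ differentiable and $\partial_\fitness\fscostfn$ strictly decreasing in $\type$ at the candidate point. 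That is slightly more than strict decreasing differences in the discrete sense, which only gives strict inequality of marginal costs on a dense set. The paper's pooling-interval contradiction relies on the same pointwise strengthening.
\end{itemize}
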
 

Our proof, detailed in Appendix~\ref{apx:existence_proof}, demonstrates that the payoff function in the auxiliary game is bounded, continuous in $\fitness_i$, and exhibits increasing differences in both $(\fitness_i,\type_i)$ and $(\fitness_i,\gamma)$. This allows us to evoke Theorem 4.5 of \citet{reny2011existence} to establish the existence of a symmetric monotone equilibrium. 

Intuitively, the monotonicity arises from the strategic complementarity between creative effort and the player's inherent type (\cref{a:sm}). High-type agents hold a comparative advantage in generating creative output, which makes achieving higher aggregate performance differentially less costly for them. When this cost advantage is strict, it naturally separates the types, as high-ability players will strictly prefer to build better-performing model to secure higher contest rankings. 

From this point forward, we focus on the symmetric monotone equilibrium and drop the subscript $i$ when it is clear from the context. To understand how the prevailing technology frontier shapes this equilibrium, we introduce a boundary condition on the capability of AI.

\begin{assumption}\label{a:gamma_limits}
    The marginal productivity of mechanistic effort spans the extremes as AI capability $\gamma$ varies: 
    $\lim_{\gamma \to 0} \pteffortfn'(0; \gamma) = 0$, and for any finite $\pteffort$, $\lim_{\gamma \to \infty} \pteffortfn'(\pteffort; \gamma) = \infty$.
\end{assumption}

\begin{proposition}[AI Capabilities and Equilibrium Regimes]\label{prop:ai_regimes}
    Under Assumptions \ref{a:fosd}, \ref{a:sm}, \ref{a:sm-pt}, and \ref{a:gamma_limits}, the equilibrium regime is dictated by the AI capability relative to the bounds of the type distribution $\Theta = [\underline{\type}, \bar{\type}]$:
    \begin{enumerate}
        \item \textbf{Strict Separation (Limited AI):} There exists a lower threshold $\underline{\gamma} > 0$, determined by the lower bound of the type space $\underline{\type}$, such that for all $\gamma < \underline{\gamma}$, the derived cost function $\fscostfn(\fitness, \type; \gamma)$ exhibits \emph{strict} decreasing differences in $(\fitness, \type)$. Consequently, any interior symmetric equilibrium is strictly monotone.
        \item \textbf{Complete Pooling (Advanced AI):} There exists an upper threshold $\bar{\gamma} > 0$, determined by the upper bound of the type space $\bar{\type}$, such that for all $\gamma > \bar{\gamma}$, the derived cost function exhibits only \emph{weak} decreasing differences. Consequently, the strict single-crossing property fails, and the equilibrium is pooled.
    \end{enumerate}
\end{proposition}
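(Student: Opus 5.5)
The approach is to analyze the cost-minimization problem \eqref{cost function of fitness} through its first-order conditions and to track how the optimal split between $\mdeffort$ and $\pteffort$ depends on $\gamma$. Since $\etcostfn$ depends only on $\budget=\mdeffort+\pteffort$ and is non-decreasing, minimizing cost amounts to minimizing total effort needed to reach $\fitness$. Equivalently, given $\budget$, the player allocates effort to maximize $\mdeffortfn(\mdeffort,\type)+\pteffortfn(\budget-\mdeffort;\gamma)$. Both functions are strictly concave, so the allocation is unique. It is interior when $\mdeffortfn_a(0,\type)>\pteffortfn'(\budget;\gamma)$ and $\pteffortfn'(0;\gamma)>\mdeffortfn_a(\budget,\type)$. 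Otherwise it is a corner. The key corner is $\mdeffort=0$, which is optimal whenever $\pteffortfn'(0;\gamma)\ge \mdeffortfn_a(0,\type)$, because concavity then makes mechanistic effort weakly better at every margin. I will then use the envelope theorem to write
\[
\frac{\partial \fscostfn}{\partial \fitness}(\fitness,\type;\gamma)=\frac{\etcostfn'(\budget)}{\lambda},
\]
where $\lambda$ is the common marginal product at the optimum: $\lambda=\mdeffortfn_a(\mdeffort,\type)$ if $\mdeffort>0$, and $\lambda=\pteffortfn'(\pteffort;\gamma)$ if $\mdeffort=0$.

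\textbf{Pooling regime.} By Assumption \ref{a:gamma_limits}, $\pteffortfn'(\pteffort;\gamma)\to\infty$ for every finite $\pteffort$. I choose $\bar\gamma$ so that for $\gamma>\bar\gamma$, $\pteffortfn'(\pteffort;\gamma)\ge \mdeffortfn_a(0,\bar\type)$ on the relevant range of effort. By supermodularity (Assumption \ref{a:sm}), $\mdeffortfn_a(0,\type)\le\mdeffortfn_a(0,\bar\type)$, so the threshold at $\bar\type$ covers every type, and Assumption \ref{a:sm-pt} makes the condition monotone in $\gamma$. Then $\mdeffort=0$ for all types, and $\fscostfn(\fitness,\type;\gamma)=\etcostfn(\pteffortfn^{-1}(\fitness;\gamma))$ does not depend on $\type$. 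Decreasing differences therefore hold only weakly, with zero cross effect. Each player's auxiliary payoff is then type-independent, so the best-response problem is identical across types. Selecting the symmetric solution and applying Proposition \ref{prop:exist} yields a constant $\eqmfitnessalt^*$, i.e.\ pooling.

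\textbf{Separation regime.} As $\gamma\to0$, Assumption \ref{a:gamma_limits} gives $\pteffortfn'(0;\gamma)\to0$. I choose $\underline\gamma$ with $\pteffortfn'(0;\gamma)<\mdeffortfn_a(0,\underline\type)\le\mdeffortfn_a(0,\type)$ for $\gamma<\underline\gamma$, so every type uses strictly positive creative effort at every $\fitness>0$. Then $\partial_\fitness\fscostfn=\etcostfn'(\budget)/\mdeffortfn_a(\mdeffort,\type)$. To show this is strictly decreasing in $\type$, I will compare two types $\type>\ltype$ at the same $\fitness$. The higher type reaches $\fitness$ with weakly less total effort, because $\mdeffortfn$ is increasing in $\type$ (supermodularity together with $\mdeffortfn(0,\type)=0$). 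This lowers the numerator, since $\etcostfn$ is convex. For the denominator, I will differentiate the two first-order conditions $\mdeffortfn_a=\pteffortfn'$ and $\mdeffortfn+\pteffortfn=\fitness$ with respect to $\type$ by implicit differentiation. This shows that $\lambda$ rises: the direct effect $\mdeffortfn_{a\type}>0$ dominates, because the reallocation toward $\pteffort$ needed to hold $\fitness$ fixed moves along concave $\pteffortfn$. Strict decreasing differences then follow, and Proposition \ref{prop:exist} delivers strict monotonicity of any interior equilibrium.

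\textbf{Expected obstacle.} The hardest step is the sign of $d\lambda/d\type$ at fixed $\fitness$ in the interior case. The change in $\lambda$ equals $\pteffortfn''\,d\pteffort$, so I need $d\pteffort/d\type<0$, not $>0$. When $\type$ rises, $\mdeffortfn$ increases both through its level effect and through the higher marginal product. Holding $\fitness$ fixed therefore forces total effort down. The first-order condition $\mdeffortfn_a=\pteffortfn'$ then splits this reduction between $\mdeffort$ and $\pteffort$, and it is not obvious that $\pteffort$ falls.

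If the sign cannot be settled in general, I will argue directly in $\gamma$ instead. For $\gamma$ small enough, $\pteffort\approx0$, so $\lambda\approx\mdeffortfn_a(\mdeffort,\type)$ with $\mdeffort$ solving $\mdeffortfn(\mdeffort,\type)=\fitness$. In that case $d\lambda/d\type=\mdeffortfn_{a\type}-\mdeffortfn_{aa}\mdeffortfn_\type/\mdeffortfn_a>0$. I would take $\underline\gamma$ small enough, using continuity and compactness of $\Theta$, to preserve this strict sign. A secondary technicality is that the pooling bound must hold uniformly over the relevant range of $\pteffort$. I will handle this by bounding equilibrium effort, using the fact that payoffs are bounded by $\sum_k\reward_k+\fitness$.
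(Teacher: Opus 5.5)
Your plan follows the paper's proof. It uses the same two thresholds: compare $\xi'(0;\gamma)$ with $\nu_a(0,\underline{\theta})$, and compare $\xi'$ on the relevant effort range with $\nu_a(0,\bar{\theta})$. It uses the same shadow-price argument for strict decreasing differences (the paper's Lemma~\ref{lemma:budget_property}), and the same observation that under pure mechanization $C=c\circ\xi^{-1}(\cdot;\gamma)$ does not depend on type. Writing $\partial_\mu C=c'(e)/\lambda$ and signing the numerator and denominator separately is a tidier version of the paper's route, which integrates the shadow price of $e(\mu,\theta)$ and then composes with convex $c$. Strictness also needs $c'(e)>0$, which both you and the paper leave implicit.

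The step you flag as the main obstacle does go through, and the ambiguity sits in $a$, not $b$. Implicitly differentiating $\nu_a(a,\theta)=\xi'(b)=\lambda$ and $\nu(a,\theta)+\xi(b)=\mu$ in $\theta$ (subscripts denote partials) gives
\[
\frac{\partial b}{\partial\theta}=\frac{\lambda\,\nu_{a\theta}-\nu_{aa}\,\nu_{\theta}}{\lambda\,(\nu_{aa}+\xi'')}<0,
\]
since $\nu_\theta(a,\theta)=\int_0^a\nu_{a\theta}>0$. Hence $\partial\lambda/\partial\theta=\xi''\,\partial b/\partial\theta>0$. The level effect and the tilt toward creation both push $b$ down; they conflict only for $a$. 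So the paper's assertion that $a^*$ rises with type is neither needed nor true in general. On the pure-creation branch, your fallback expression $\nu_{a\theta}-\nu_{aa}\nu_\theta/\nu_a$ is exact rather than an approximation, and $\lambda(\mu,\cdot)$ is continuous across the branch switch. No small-$\gamma$ limiting argument is needed.

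The genuine gap is the uniformity in Part 2, and your proposed patch cannot close it. The bound $c(e)\le\sum_k R_k+\mu$ is not uniform in $\gamma$, because $\mu$ itself grows with $\gamma$. In fact equilibrium effort must diverge. Since $g$ is non-decreasing, optimality against upward deviations gives $\partial_\mu C\ge 1$, i.e.\ $c'(e)\ge\lambda\ge\xi'(e;\gamma)$. This is impossible at any $e\le E$ once $\xi'(E;\gamma)>c'(E)$. Assumption~\ref{a:gamma_limits} gives divergence of $\xi'(b;\gamma)$ only at fixed $b$, not along $b_{\max}(\gamma)\to\infty$.

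The paper's proof has the same hole: it treats $\mu_{\max}$ as independent of $\gamma$, although Corollary~\ref{coro:fitness_gamma} makes it nondecreasing in $\gamma$. Part 2 can actually fail. With $c(e)=e$, the condition $\partial_\mu C\ge1$ forces $\lambda\le 1$ at every type's optimum, so every type with $\nu_a(0,\theta)>1$ exerts creative effort for every $\gamma$. For example, $\nu=\theta\log(1+a)$, $\xi=\gamma\log(1+b)$ with $\underline{\theta}>0$ and $\bar{\theta}>1$ satisfies Assumptions~\ref{a:fosd}--\ref{a:gamma_limits}. Yet no $\bar{\gamma}$ exists, and by the argument behind Proposition~\ref{prop:exist}, types above $1$ cannot pool.

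To prove Part 2 you need an extra hypothesis. One option is $c'(e)\to\infty$ together with a uniform Lipschitz bound $L$ on the contest gain $g$. Optimality against downward deviations then gives $c'(e)\le(1+L)\lambda<(1+L)\nu_a(0,\bar{\theta})$ for any type with $a>0$. This caps that type's effort, which contradicts the divergence above for large $\gamma$.
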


The proof is provided in Appendix~\ref{apx:existence_proof}.
The intuition for  \cref{prop:ai_regimes} lies in the substitutability of effort. When AI is weak ($\gamma < \underline{\gamma}$), all agents optimally utilize at least some creative effort, preserving the cost advantage of high-type agents and ensuring strict separation. However, when AI is sufficiently advanced ($\gamma > \bar{\gamma}$), mechanistic effort becomes so overwhelmingly productive that it crowds out creative effort entirely—even for the most capable agents. Because mechanistic effort does not rely on private expertise, the cost of achieving any performance level becomes identical across all types, destroying the single-crossing property and forcing a pooled equilibrium. When AI capability is moderate, separation and pooling coexist: low types pool, while high types separate.

\subsection{Implications of AI on Screening}

\paragraph{Implications for Screening in the AI Era.}
The existence of a monotone equilibrium, where higher types consistently achieve better mean performance, provides a formal resolution to the widespread anxiety surrounding performance-based screening in the presence of generative AI.

Following the public release of advanced large language models, a prominent narrative emerged among human capital practitioners that AI had commoditized knowledge work, effectively collapsing a separating equilibrium into a pooling one. Because low-ability and high-ability candidates can produce observationally identical deliverables on standard coding or writing assessments through differential reliance on AI, employers have increasingly abandoned asynchronous evaluations, such as the take-home test, out of fear that output quality no longer signals underlying ability.

Our theoretical framework demonstrates that these fears are mathematically well-founded, but only if the evaluation mechanism remains static. In the language of \citet{spence1973}, a signaling mechanism successfully sorts candidates only if the marginal cost of producing the signal is strictly lower for high-type agents—the strict single-crossing property. As established in Proposition \ref{prop:ai_regimes}, when AI capabilities ($\gamma$) become sufficiently advanced ($\gamma > \bar{\gamma}$), they act as a universal, highly productive substitute for creative effort. This compresses the distribution of outputs and crowds out creative effort entirely. Because all agents optimally rely on the AI-enhanced mechanistic margin, the cost of achieving any performance level becomes independent of inherent type. The strict decreasing differences in the cost function vanish, plunging the market into a complete pooling equilibrium.

However, our model also clarifies that AI does not universally destroy the capacity for screening; rather, it changes the dimension along which stratification occurs. The quality of output depends on both mechanistic and creative effort. Because creative effort and type are strategic complements (Assumption \ref{a:sm}), the marginal return to creative effort strictly increases with the agent's inherent ability.

\paragraph{Practical Implications for Task Design.}
The economic return to mechanistic proficiency is rapidly being competed away by rising $\gamma$. Consequently, the principal's problem is not to abandon performance-based screening, but to redesign evaluations to actively restore the strict increasing differences in the cost function.

An evaluation composed entirely of routine implementation—where the mechanistic component dominates—inevitably falls into the pooling regime ($\gamma > \bar{\gamma}$) because it can be completed by submitting a standard prompt to an LLM. To distinguish candidates, evaluations must be restructured to raise the marginal return of the creative production function $\mdeffortfn$, effectively pushing the upper threshold $\bar{\gamma}$ higher and forcing the evaluation back into the separating regime ($\gamma < \underline{\gamma}$).

By requiring candidates to diagnose unfamiliar systems, propose non-standard architectural solutions, or reason explicitly about trade-offs under uncertainty, employers shift the production weight onto the creative margin. On this margin, the single-crossing property is resurrected through the supermodularity of quality. High-type agents will again deliver better-performing outputs because it is differentially less costly for them, allowing principals to distinguish types based on aggregate performance. The skill premium does not disappear; it migrates to the capacity for problem formulation, strategic judgment, and non-routine synthesis.

\paragraph{Dynamic Human Capital and the Race Against the Pooling Threshold.}

Our baseline model treats the distribution of human capability, $\disttype$, as exogenous and fixed. In the long run, however, AI may reshape human capital accumulation process itself. Growing empirical evidence suggests that generative AI can accelerate skill acquisition rather than merely substitute for human effort. AI assistance helps novice workers learn the practices of high performers and move faster along the experience curve \citep{brynjolfsson2023generative}, disproportionately improves the performance of lower-skilled knowledge workers \citep{dellacqua2023navigating}, and, when designed with pedagogical guardrails, can enhance skill retention and independent performance \citep{bastani2024generative}. Thus, widespread AI adoption may endogenously reshape the distribution of human capability over time.



These empirical findings carry a profound implication for our equilibrium regimes. If interacting with AI systematically improves human capability over time by shifting the type distribution $\disttype$ to the right and raising both $\underline{\type}$ and $\bar{\type}$, then the thresholds for the equilibrium regimes defined in Proposition \ref{prop:ai_regimes} are not static. 

Because the pooling threshold $\bar{\gamma}$ is strictly increasing in the upper bound of human capability ($\bar{\type}$), AI-driven upskilling inherently pushes $\bar{\gamma}$ higher. Therefore, for generative AI to permanently collapse the labor market into a complete pooling equilibrium, AI capability ($\gamma$) must not only grow, but it must grow faster than its own accelerating effect on human learning. The long-run equilibrium is therefore characterized by a dynamic race: as AI capabilities advance and threaten to crowd out the creative margin, they simultaneously raise the ceiling of human creative capabilities, pushing the pooling threshold further out and continually preserving the space for a separating equilibrium.

However, overly relying on AI may instead erode human knowledge and skills over time. For instance, students' exam scores have been found to decline substantially following AI adoption, potentially because they outsource learning tasks such as homework to AI \citep{stromberg2026generative}. Without deliberate interventions to preserve and cultivate human expertise, the distribution of human capability may therefore shift leftward over time, accelerating collapse of the labor market into a pooling equilibrium. We leave the dynamic interplay between AI capabilities and human capital accumulation for future work.

\section{Performance Manipulation}

In this section, we explore a fundamental principal-agent conflict: the principal seeks to elicit true innovation driven by creative effort, but can only contract upon an observable performance metric that is susceptible to inflation via mechanistic effort. This agency friction is dramatically amplified in the era of generative AI, as the precipitous drop in the cost of algorithmic substitution makes mechanistic effort both cheap and abundant \citep{makingtalkcheap}. Because organizations must frequently rely on imperfect performance evaluations to incentivize complex, multi-dimensional tasks \citep{holmstrom1991multitask}, this asymmetric augmentation of productivity poses a severe risk of performance manipulation. 

Our primary objective is to formally define this phenomenon, characterize the types of agents most prone to engage in it, and propose structural mitigations. To isolate the effect of competition, we first introduce a single-player baseline model in which an agent allocates effort solely to maximize his baseline payoff. We then compare this baseline outcome with the equilibrium effort allocation in a competitive contest. Building on this comparative static, we formally define performance manipulation as the strategic substitution toward mechanistic effort induced by competition. We conclude by demonstrating how optimal contest design can mitigate this misalignment. 

\subsection{The Baseline} 

We first establish a single-agent baseline scenario where a player's payoff is derived exclusively from the baseline gain, absent any contest incentives. Suppressing the player index $i$, the agent solves:
\begin{align}\tag{$\mathcal{P}_b$}\label{benchmark}
    \max_{\mdeffort, \pteffort \geq 0} \quad \payoffbm(\mdeffort, \pteffort; \type, \gamma) = \fitness(\mdeffort, \pteffort; \type, \gamma) - \fscostfn(\fitness(\mdeffort, \pteffort; \type, \gamma), \type; \gamma).
\end{align}
Let $(\mdeffortbenchmark(\type), \pteffortbenchmark(\type)) = \argmax_{\mdeffort,\pteffort\geq 0}\payoffbm(\mdeffort,\pteffort;\type,\gamma)$ denote the optimal effort allocation, and $\budgetbenchmark(\type) = \mdeffortbenchmark(\type) + \pteffortbenchmark(\type)$ denote the optimal total effort in the baseline scenario. To ensure interior total effort, we impose standard boundary conditions on the marginal returns at the origin.

\begin{assumption}\label{a:pt}
    For the relevant range of $\gamma$, $\pteffortfn'(0; \gamma) > \etcostfn'(0)$. Furthermore, $\lim_{\type \to \underline{\type}} \frac{\partial\mdeffortfn(0, \type)}{\partial\mdeffort} < \etcostfn'(0)$ and $\lim_{\type \to \bar{\type}} \frac{\partial\mdeffortfn(0, \type)}{\partial\mdeffort} > \pteffortfn'(0; \gamma)$.
\end{assumption}

Assumption \ref{a:pt} ensures that all types exert strictly positive total effort ($\budgetbenchmark(\type) > 0$), and that the extreme types sort into corner solutions regarding their effort composition.

\begin{proposition}[Baseline Effort Allocation]\label{prop:eqm_benchmark}
    Under Assumptions \ref{a:fosd}, \ref{a:sm}, and \ref{a:pt}, there exist two unique thresholds $\typebl < \typebu$ in $\Theta$ such that:
    \begin{enumerate}
        \item \textbf{Pure Mechanization:} Types $\type \leq \typebl$ allocate all effort mechanistically ($\mdeffortbenchmark = 0, \pteffortbenchmark > 0$).
        \item \textbf{Interior Allocation:} Types $\type \in (\typebl, \typebu)$ utilize both efforts ($\mdeffortbenchmark > 0, \pteffortbenchmark > 0$).
        \item \textbf{Pure Creation:} Types $\type \geq \typebu$ allocate all effort creatively ($\mdeffortbenchmark > 0, \pteffortbenchmark = 0$).
    \end{enumerate}
\end{proposition}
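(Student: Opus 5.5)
Given total effort $\budget$, the optimal split solves $\max_{\mdeffort+\pteffort=\budget}\mdeffortfn(\mdeffort,\type)+\pteffortfn(\pteffort;\gamma)$. Total effort is then chosen against $\etcostfn(\budget)$. This is the inner problem defining $\fscostfn$ in \eqref{cost function of fitness}, so \eqref{benchmark} is equivalent to
\[
\max_{\mdeffort,\pteffort\ge 0}\ \mdeffortfn(\mdeffort,\type)+\pteffortfn(\pteffort;\gamma)-\etcostfn(\mdeffort+\pteffort).
\]
The objective is strictly concave: both production functions are strictly concave in their own arguments and additively separable, and $\etcostfn$ is convex. Hence the maximizer $(\mdeffortbenchmark(\type),\pteffortbenchmark(\type))$ is unique, and the KKT conditions characterize it:
\[
\partial_\mdeffort\mdeffortfn(\mdeffortbenchmark,\type)\le \etcostfn'(\budgetbenchmark)\ (=\text{ if } \mdeffortbenchmark>0),\qquad
\pteffortfn'(\pteffortbenchmark;\gamma)\le \etcostfn'(\budgetbenchmark)\ (=\text{ if } \pteffortbenchmark>0).
\]
Positivity of total effort follows from the first part of Assumption~\ref{a:pt}. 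At the origin, $\pteffortfn'(0;\gamma)>\etcostfn'(0)$, so $\budgetbenchmark(\type)>0$ for every type.

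\textbf{Corner conditions.} The cases are read off from these conditions.
\begin{itemize}
\item $\mdeffortbenchmark=0$ holds iff, at the pure-mechanization optimum $\pteffort^m$ defined by $\pteffortfn'(\pteffort^m;\gamma)=\etcostfn'(\pteffort^m)$, we have $\partial_\mdeffort\mdeffortfn(0,\type)\le \etcostfn'(\pteffort^m)$. Existence and uniqueness of $\pteffort^m$ come from Assumption~\ref{a:pt} and the Inada condition. Note that $\pteffort^m$ does not depend on $\type$.
\item $\pteffortbenchmark=0$ holds iff, at the pure-creation optimum $\mdeffort^c(\type)$, we have $\pteffortfn'(0;\gamma)\le \partial_\mdeffort\mdeffortfn(\mdeffort^c(\type),\type)=\etcostfn'(\mdeffort^c(\type))$.
\end{itemize}

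\textbf{Thresholds.} Define
\[
\typebl=\sup\{\type:\partial_\mdeffort\mdeffortfn(0,\type)\le \etcostfn'(\pteffort^m)\}.
\]
By Assumption~\ref{a:sm}, $\partial_\mdeffort\mdeffortfn(0,\type)$ is strictly increasing in $\type$, so the first corner condition holds exactly on an interval $[\underline\type,\typebl]$. By Assumption~\ref{a:pt}, $\partial_\mdeffort\mdeffortfn(0,\type)\to$ a value below $\etcostfn'(0)\le\etcostfn'(\pteffort^m)$ as $\type\to\underline\type$, and to a value above $\pteffortfn'(0;\gamma)>\pteffortfn'(\pteffort^m;\gamma)=\etcostfn'(\pteffort^m)$ as $\type\to\bar\type$. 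The first limit makes the set nonempty; the second places $\typebl$ strictly inside $\Theta$.

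For the upper threshold, I will show that $\type\mapsto\mdeffort^c(\type)$ is nondecreasing. This is a Topkis argument: $\mdeffortfn(\mdeffort,\type)-\etcostfn(\mdeffort)$ has strictly increasing differences. Then $\etcostfn'(\mdeffort^c(\type))$ is nondecreasing, and I set
\[
\typebu=\inf\{\type:\etcostfn'(\mdeffort^c(\type))\ge \pteffortfn'(0;\gamma)\}.
\]
Assumption~\ref{a:pt} puts $\typebu<\bar\type$. Near $\underline\type$ we have $\mdeffort^c=0$, so $\etcostfn'(0)<\pteffortfn'(0;\gamma)$ and hence $\typebu>\underline\type$.

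\textbf{Main obstacle.} The hardest step is proving the sets are intervals and that $\typebl<\typebu$ strictly, which is what creates a nonempty interior region. Monotonicity of $\etcostfn'(\mdeffort^c(\type))$ may only be weak if $\etcostfn$ has linear pieces. I would handle this by working directly with the marginal-value inequalities rather than with $\mdeffort^c$. For the pure-creation condition, use that at $\mdeffort^c$ the marginal product $\partial_\mdeffort\mdeffortfn(\mdeffort^c,\type)$ equals $\etcostfn'(\mdeffort^c)$, and that this common value is nondecreasing in $\type$.

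For strict ordering, suppose some $\type$ satisfied both corner conditions. Then $\budgetbenchmark=0$, which contradicts positivity of total effort. So the two corner sets are disjoint. Both are intervals anchored at opposite ends of $\Theta$, so $\typebl<\typebu$ and the middle region has both efforts strictly positive by KKT.

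One remaining point: if both regions met at a single boundary point with no gap, the interior interval $(\typebl,\typebu)$ would be empty. The statement only asserts $\typebl<\typebu$, so I would check via continuity of the unique optimum in $\type$ (Berge's theorem) that the corner sets are closed. Two disjoint closed intervals at opposite ends then leave an open gap between them, which is the interior region.
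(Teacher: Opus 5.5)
Your route is the paper's. You reduce \eqref{benchmark} to the strictly concave problem $\max_{a,b\ge 0}\nu(a,\theta)+\xi(b;\gamma)-c(a+b)$, read the corners off the KKT conditions, and use supermodularity to make the corner sets intervals. Your construction of $\typebl$ is fine. Your argument for $\typebl<\typebu$ is cleaner than the paper's inequality chain: a type in both corner sets would have zero total effort, and the corner sets are closed. The weak monotonicity of $c'(a^c(\theta))$ that you flag as the main obstacle is harmless, since a nondecreasing function already makes $\{\theta: c'(a^c(\theta))\ge\xi'(0;\gamma)\}$ an upper interval.

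\textbf{The gap.} The real gap is the sentence ``Assumption~\ref{a:pt} puts $\typebu<\bar\theta$.'' Assumption~\ref{a:pt} compares marginal products at \emph{zero} creative effort, $\partial_a\nu(0,\bar\theta)>\xi'(0;\gamma)$. Your own pure-creation condition instead compares $\xi'(0;\gamma)$ with $\partial_a\nu(a^c(\bar\theta),\bar\theta)=c'(a^c(\bar\theta))$, which is strictly smaller by concavity of $\nu$. Nothing in the hypotheses bridges these, and the conclusion can fail. Take $\nu(a,\theta)=\theta(1-e^{-a})$, $\xi(b;\gamma)=\gamma\log(1+b)$ with $\gamma=2$, $c(e)=e+e^2/4$, and $\Theta=[1/2,3]$. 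All maintained assumptions and Assumption~\ref{a:pt} hold: $\xi'(0;\gamma)=2>1=c'(0)$, $\partial_a\nu(0,1/2)=1/2<1$, and $\partial_a\nu(0,3)=3>2$. Yet pure creation requires $c'(a^c(\theta))\ge 2$. That means $a^c(\theta)\ge 2$, i.e.\ $\theta e^{-2}\ge 2$, i.e.\ $\theta\ge 2e^2\approx 14.8$. So no type in $\Theta$ is a pure creator (even $\bar\theta$ uses both margins), and the set defining $\typebu$ is empty.

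\textbf{Consequence and fix.} This step cannot be repaired from the stated hypotheses; the statement itself needs an extra condition. The paper's proof has the same hole. It defines $\typebu$ by the indifference condition $\xi'(0;\gamma)=\partial_a\nu(a^\dagger(\typebu),\typebu)$ without showing that a root exists in $\Theta$. Its remark after Assumption~\ref{a:pt} that extreme types sort into corners is justified only at the bottom of $\Theta$. What is needed is the comparison at the right effort level. Let $\hat a>0$ be the smallest effort with $c'(\hat a)=\xi'(0;\gamma)$; if none exists, no type is ever a pure creator. The top type is a pure creator iff $\partial_a\nu(\hat a,\bar\theta)\ge\xi'(0;\gamma)$, and with strict inequality, continuity yields a nondegenerate region $[\typebu,\bar\theta]$. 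Add this as a hypothesis, or allow $\typebu=+\infty$ with an empty pure-creation region as in Examples~\ref{ex:compare2}--\ref{ex:compare3}. The rest of your argument then goes through.

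\textbf{A smaller point.} You, like the paper, take existence of the maximizers $(a^\dagger,b^\dagger)$ and $a^c(\theta)$ for granted. Since $c$ is only assumed convex and there is no Inada condition on $\nu$, this needs a growth condition such as $\lim_{a\to\infty}\partial_a\nu(a,\theta)<\lim_{e\to\infty}c'(e)$.
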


\begin{figure}[htp]
    \captionsetup{font=small}
    \caption{\textbf{Graphical illustration of \cref{prop:eqm_benchmark}}}
    \label{fig:eqm_benchmark}
    \centering
    \begin{tikzpicture}
\begin{axis}[
    width=11cm,
    height=8cm,
    axis lines=left,
    xmin=0, xmax=8.4,
    ymin=0, ymax=6.4,
    xtick=\empty,
    ytick=\empty,
    xlabel={},
    ylabel={},
    clip=false,
    axis line style={->},
]

\def\bdag{1.8}
\def\adag{2.8}
\def\edag{4.90}

\addplot[
    domain=0:8,
    samples=200,
    very thick,
    gray
] {0.75 + 0.085*x^2};

\node[gray!80!black] at (axis cs:7.0,5.85)
    {$c'(e)$};

\addplot[
    domain=1.05:8,
    samples=200,
    very thick,
    blue
] {5.25*exp(-0.23*x)};

\node[blue] at (axis cs:7.25,1.55)
    {$\displaystyle \frac{\partial v(a,\theta)}{\partial a}$};

\addplot[
    domain=0:8,
    samples=200,
    very thick,
    red
] {4.15*exp(-0.24*x)};

\node[red] at (axis cs:7.2,0.4)
    {$\xi'(b,\gamma)$};


\addplot[
    domain=0:1.05,
    samples=50,
    very thick,
    blue!80!black
] {5.1 - 0.9*x};

\addplot[
    domain=0:1.05,
    samples=50,
    very thick,
    green!80!black
] {5.1 - 0.9*x};

\addplot[
    domain=1.05:8,
    samples=150,
    very thick,
    green!80!black
] {4.18*exp(-0.105*(x-1.05))};


\draw[dashed]
    (axis cs:0,4.15) --
    (axis cs:1.05,4.15);

\draw[dashed]
    (axis cs:0,2.75) --
    (axis cs:\edag,2.75);

\draw[dashed]
    (axis cs:\bdag,0) --
    (axis cs:\bdag,2.75);

\draw[dashed]
    (axis cs:\adag,0) --
    (axis cs:\adag,2.75);

\draw[dashed]
    (axis cs:\edag,0) --
    (axis cs:\edag,2.75);

\node[below] at (axis cs:\bdag,0)
    {$b^\dagger$};

\node[below] at (axis cs:\adag,0)
    {$a^\dagger$};

\node[below] at (axis cs:\edag,0)
    {$e^\dagger$};

\node[right] at (axis cs:8.4,0)
    {$e,a,b$};

\end{axis}
\end{tikzpicture}
\end{figure}

\cref{fig:eqm_benchmark} provides graphical intuition for \cref{prop:eqm_benchmark}. 
The blue and red curves represent the marginal productivity of creation and mechanization, respectively. The green curve is the horizontal sum of the blue and red curves, representing the marginal productivity of total effort $\budget$. 

An optimizing agent equates the marginal productivities of the active effort margins. For sufficiently high (low) types, the initial marginal productivity of creation (the blue curve) lies strictly above (below) that of mechanization (the red curve). The kink in the green curve represents the threshold where the agent switches from utilizing a single effort margin to utilizing both. When $\type$ is sufficiently high (low), the green curve intersects the marginal cost curve $\etcostfn'(\budget)$ to the left of the kink, resulting in pure creation (pure mechanization). For intermediate types, the intersection occurs to the right of the kink, strictly positive effort is allocated to both margins.

Next, we examine how an advancement in AI capability impacts these effort allocations. Since an increase in AI capability ($\gamma$) raises the marginal return to mechanistic effort, it alters the opportunity cost of creation. The following corollary summarizes this structural shift: absent competition, higher AI capabilities shrink the set of types that exert positive creative effort.

\begin{corollary}[AI Capability and Effort Substitution]\label{coro:thresholds_gamma}
    The thresholds $\typebl(\gamma)$ and $\typebu(\gamma)$ are strictly increasing in AI capability $\gamma$. Consequently, an exogenous increase in $\gamma$ expands the measure of types engaged in pure mechanization, $[\underline{\type}, \typebl)$, and strictly shrinks the measure of types engaged in pure creation, $(\typebu, \bar{\type}]$.
\end{corollary}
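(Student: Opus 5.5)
The plan is to work directly with the first-order (KKT) conditions of the baseline problem \eqref{benchmark}. There the agent maximizes $\mdeffortfn(\mdeffort,\type)+\pteffortfn(\pteffort;\gamma)-\etcostfn(\mdeffort+\pteffort)$ over $\mdeffort,\pteffort\ge 0$. I will describe each threshold of \cref{prop:eqm_benchmark} as the root of an equation in $(\type,\gamma)$, and then sign its derivative in $\gamma$ with the implicit function theorem. The KKT conditions are
\[
\frac{\partial\mdeffortfn(\mdeffort,\type)}{\partial\mdeffort}\le \etcostfn'(\budget),\qquad \pteffortfn'(\pteffort;\gamma)\le \etcostfn'(\budget),
\]
with equality on each active margin.

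\textbf{Lower threshold.} On the pure-mechanization side, let $\pteffort_0(\gamma)$ denote the mechanization-only optimum, solving $\pteffortfn'(\pteffort;\gamma)=\etcostfn'(\pteffort)$. It is well defined and interior by \cref{a:pt} together with the Inada condition. The lower threshold is the type at which the creative margin just becomes active:
\[
\frac{\partial\mdeffortfn(0,\typebl)}{\partial\mdeffort}=\pteffortfn'(\pteffort_0(\gamma);\gamma)=\etcostfn'(\pteffort_0(\gamma)).
\]
The left side is strictly increasing in $\type$ by \cref{a:sm}. For the right side:
\begin{itemize}
\item By \cref{a:sm-pt}, strict concavity of $\pteffortfn$, and convexity of $\etcostfn$, a standard monotone-comparative-statics argument gives that $\pteffort_0(\gamma)$ is strictly increasing in $\gamma$.
\item Hence the right side $\etcostfn'(\pteffort_0(\gamma))$ is nondecreasing in $\gamma$.
\item Implicit differentiation then gives $d\typebl/d\gamma=\etcostfn''(\pteffort_0)\,\pteffort_0'(\gamma)\big/\partial^2_{\mdeffort\type}\mdeffortfn(0,\type)\ge 0$.
\end{itemize}

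\textbf{Upper threshold.} Symmetrically, let $\mdeffort_0(\type)$ solve $\partial_\mdeffort\mdeffortfn(\mdeffort,\type)=\etcostfn'(\mdeffort)$. The upper threshold is where the mechanistic margin just switches off:
\[
\etcostfn'(\mdeffort_0(\typebu))=\pteffortfn'(0;\gamma).
\]
The right side is strictly increasing in $\gamma$ by \cref{a:sm-pt}. The left side is nondecreasing in $\type$, because $\mdeffort_0$ is strictly increasing in $\type$ by \cref{a:sm}. So $\typebu(\gamma)$ is increasing. The set-measure statements in the corollary then follow immediately from monotonicity of the two thresholds.

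\textbf{Main obstacle: strictness.} Both arguments yield strict monotonicity only when $\etcostfn''>0$ at the relevant effort levels. The paper assumes only that $\etcostfn$ is convex, and with linear cost the condition for $\typebu$ does not even depend on $\type$. The first thing I would try is to derive strictness from the fact that \cref{prop:eqm_benchmark} already asserts unique thresholds $\typebl<\typebu$ in $\Theta$: uniqueness forces the defining functions to cross transversally, which effectively needs $\etcostfn''>0$ there.

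If that does not work, the fallback is to add local strict convexity of $\etcostfn$ as an explicit maintained hypothesis. As an alternative, I would try to rederive the thresholds through the kink of the green curve in \cref{fig:eqm_benchmark}, viewed as a function of $\gamma$: raising $\gamma$ shifts the red curve up pointwise, which shifts the kink, and I would check whether the induced shift in each threshold is strict.
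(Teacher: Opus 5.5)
Your plan is the paper's own proof. The paper characterizes $\typebl$ and $\typebu$ by exactly your two equations (\eqref{eq:thresh_1} and \eqref{eq:thresh_2}). It signs $d\pteffortbenchmark/d\gamma>0$ by the implicit function theorem, where you use monotone comparative statics, and then reads off the threshold shifts as you do. The strictness problem you flag is a real gap in the paper's argument, not in yours. The paper simply asserts that $\etcostfn'(\pteffortbenchmark)$ strictly increases ``because $\etcostfn$ is convex'', and its proof of \cref{prop:eqm_benchmark} describes $\etcostfn$ as strictly convex. So your fallback of maintaining strict convexity is the hypothesis the paper is tacitly using.

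Two adjustments to your last paragraph.

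First, drop the uniqueness route, because it cannot work for $\typebl$. At fixed $\gamma$, uniqueness of $\typebl$ is automatic: $\partial\mdeffortfn(0,\type)/\partial\mdeffort$ is strictly increasing in $\type$ by \cref{a:sm}. Uniqueness therefore says nothing about how $\etcostfn'(\pteffort_0(\gamma))$ moves with $\gamma$. Concretely, take $\etcostfn(\budget)=c\,\budget+\tfrac{1}{3}\max\{0,\budget-E\}^3$. For every $\gamma$ with $\pteffort_0(\gamma)<E$, the lower threshold solves $\partial\mdeffortfn(0,\typebl)/\partial\mdeffort=c$ and is independent of $\gamma$. Meanwhile \cref{a:pt} and an interior $\typebu$ can still hold if high types are productive enough to push $\mdeffort_0(\bar{\type})$ well beyond $E$. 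So strictly increasing marginal cost along the path of $\pteffort_0(\gamma)$ is necessary for the claim about $\typebl$, not just convenient.

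Second, you need no curvature at all for $\typebu$. Suppose $\gamma<\gamma'$ and $\typebu(\gamma')\le\typebu(\gamma)$. Then \eqref{eq:thresh_2} at both values of $\gamma$, together with weak monotonicity of $\type\mapsto\etcostfn'(\mdeffort_0(\type))$, gives $\pteffortfn'(0;\gamma')\le\pteffortfn'(0;\gamma)$. This contradicts \cref{a:sm-pt}. Hence strict convexity is needed only for the lower threshold, and your remark that both arguments require $\etcostfn''>0$ is too pessimistic.
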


\subsection{Contest Rewards and Incentives}

\begin{definition}[A partial order on prize vectors]\label{def:compare vectors}
 Consider two prize vectors  $\rewards,\rewards'\in\mathbb{R}_+^\numplayer$. We say $\rewards \geqq \rewards'$ if 
 $\reward_k-\reward_{k+1}\geq\reward'_k-\reward'_{k+1}$ for any $k=1,...,\numplayer-1$. 
\end{definition}

This partial order compares the \emph{skewness} of two prize vectors. The prize vector $\rewards$ is considered more skewed toward top-ranked contestants than $\rewards'$ if the incremental increase in prize reward for each step-up in rank is greater under $\rewards$. For example, if $\numplayer=4$, then $(6,4,3,0)\geqq(6,5,4,3)$.

We focus on the symmetric monotone pure strategy equilibrium, $\eqmfitnessalt^*(\type,\rewards)$.\footnote{Here, ``monotone" indicates that $\eqmfitnessalt^*(\type,\rewards)$ is non-decreasing in $\type$, but not necessarily in $\rewards$.} The following main result demonstrates that an agent of any type chooses to produce a better-fitted model under a more skewed prize vector. 

\begin{proposition}\label{monotone strategy in reward}
    For any $\type\in\Theta$, and any $\rewards\geqq\rewards'$, 
    $\eqmfitnessalt^*(\type,\rewards)\geq\eqmfitnessalt^*(\type,\rewards')$.
\end{proposition}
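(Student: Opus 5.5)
The plan is to work in the auxiliary game of \cref{lemma:equivalent} and to treat the symmetric monotone equilibrium through its rank-based representation. Suppose all rivals follow the monotone strategy $\eqmfitnessalt^*(\cdot,\rewards)$. Then a type-$\type$ player who chooses mean performance $\fitness$ has expected contest gain
\begin{equation*}
G(\fitness;\rewards)=\sum_{k=1}^{\numplayer-1}(\reward_k-\reward_{k+1})\,\prob{\text{rank of } i\le k \given \fitness}+\reward_\numplayer .
\end{equation*}
This is an Abel summation that rewrites prizes as increments $\Delta_k=\reward_k-\reward_{k+1}\ge 0$. By \cref{a:fosd}, each probability $Q_k(\fitness)=\prob{\text{rank}\le k\given \fitness}$ is non-decreasing in $\fitness$ for fixed rival strategies. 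The player's objective is therefore
\begin{equation*}
U(\fitness;\type,\rewards)=\sum_k \Delta_k Q_k(\fitness)+\fitness-\fscostfn(\fitness,\type;\gamma),
\end{equation*}
and increasing $\rewards$ in the order $\geqq$ raises each $\Delta_k$.

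The first step, which is the naive one, holds rival strategies fixed. In that case $U(\fitness;\type,\rewards)-U(\fitness;\type,\rewards')=\sum_k(\Delta_k-\Delta'_k)Q_k(\fitness)$ is non-decreasing in $\fitness$. So $U$ has increasing differences in $(\fitness,\rewards)$, and Topkis's theorem gives a best response that is (weakly, in the strong set order) higher under $\rewards$.

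The difficulty is that the rivals' strategies also change with $\rewards$, and $Q_k$ depends on them. The main obstacle is ruling out that rivals' higher performance under $\rewards$ discourages player $i$. I would handle this with an equilibrium-level comparative static rather than a best-response one. The plan is to use the symmetric monotone structure. In a strictly monotone symmetric equilibrium, the probability of beating a rival is a function of $\type$ alone once performance noise is integrated. More robustly, I would set up the game as a Bayesian game of strategic complements in the sense of Van Zandt--Vives: the payoff has increasing differences in $(\fitness_i,\fitness_{-i})$ under suitable conditions, and in $(\fitness_i,\rewards)$. Their monotone comparative statics theorem then says that the extremal equilibria increase in the parameter. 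Because this requires supermodularity in own and rival actions, which in general fails for rank-order contests, I would first try the alternative route. That route fixes $\type$ and compares the two equilibria via a contradiction argument: suppose $\eqmfitnessalt^*(\type,\rewards)<\eqmfitnessalt^*(\type,\rewards')$ for some $\type$. Then use the first-order/envelope characterization $\sum_k\Delta_k \partial_\fitness Q_k(\fitness)+1=\partial_\fitness \fscostfn(\fitness,\type;\gamma)$, together with convexity of $\fscostfn$ in $\fitness$ (inherited from concavity of $\mdeffortfn,\pteffortfn$ and convexity of $\etcostfn$), to derive a contradiction at the type where the gap is largest.

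Concretely, the key steps in order are as follows. First, establish the Abel decomposition and the monotonicity of $Q_k$ in own $\fitness$. Second, prove convexity of $\fscostfn$ in $\fitness$ and invoke the monotone equilibrium from \cref{prop:exist}. Third, select the extremal (largest) symmetric monotone equilibrium for each $\rewards$, constructed by iterating best responses from the highest strategy as in Reny's construction. Fourth, show that the best-response operator under $\rewards$ pointwise dominates that under $\rewards'$ when applied to the same rival profile, and that it is order-preserving on the lattice of monotone strategies. This yields an ordered fixed point by Tarski, giving $\eqmfitnessalt^*(\type,\rewards)\ge\eqmfitnessalt^*(\type,\rewards')$ for all $\type$. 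The order-preservation in the fourth step is the crux. If it fails in general, I would fall back on restricting to the symmetric equilibrium in which the rank distribution is type-driven, where rival shifts cancel by symmetry.
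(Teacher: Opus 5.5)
There is a genuine gap: your first step proves only that best responses move up with $\rewards$ for fixed rivals, and none of your remedies handles the fact that rivals' equilibrium play also shifts. That first step is, in substance, the paper's entire proof. With rivals held fixed, summation by parts and \cref{lemma:winprob_fosd} make the cross-difference $\sum_k(\Delta_k-\Delta_k')[Q_k(\fitness)-Q_k(\fitness')]$ non-negative. The paper then simply cites Theorem 4.5 of \citet{reny2011existence}. That theorem gives, for each fixed prize vector, a symmetric equilibrium that is monotone in own type. It does not compare the equilibria played under $\rewards$ and $\rewards'$. So you are right that the fixed-rival Topkis argument orders best responses, not equilibria. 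But because your proposal does not supply the missing step, as written it is not a proof.

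Concretely, your step four needs the best reply to be order-preserving in rivals' strategies, and noisy rank-order contests lack this property.
\begin{itemize}
\item \emph{Why order preservation fails.} Take $\numplayer=2$ and $\signal=\exp(\fitness+\varepsilon)$ with Gaussian $\varepsilon$, which satisfies \cref{a:fosd}. The marginal contest gain is $(\reward_1-\reward_2)\,\mathbb{E}[\psi(\fitness_i-\eqmfitnessalt(\type_j))]$, with $\psi$ the density of $\varepsilon_j-\varepsilon_i$. Raising a rival who is already ahead pushes the gap into the tail of $\psi$ and lowers this gain (the discouragement effect). So the best-reply operator need not be monotone, and Tarski's theorem does not apply.
\item \emph{Step three.} Reny's existence argument rests on a fixed-point theorem for contractible-valued correspondences, not on iterating best replies from the top. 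It therefore does not hand you extremal equilibria.
\item \emph{The ``largest gap'' contradiction.} This has the same problem: the first-order conditions under $\rewards$ and $\rewards'$ are evaluated against different rival profiles.
\item \emph{Your fallback.} It works only when ranking is deterministic in type. Then $Q_k(\eqmfitnessalt(\type'))$ depends only on $\disttype(\type')$, and an ODE comparison from the lowest type can deliver the result. Here $\signal_i\sim\distsignal_{\fitness_i}$ has full support, so $Q_k$ depends on the levels of rivals' mean performance.
\end{itemize}

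In fact the discouragement effect suggests the statement can fail without further structure. Take types concentrated near $1$ and $2$, with low types rare, and $\partial\fscostfn/\partial\fitness=\fitness/\type$. High types' targets then grow roughly linearly in $\reward_1-\reward_2$. A low type's condition $1+(\reward_1-\reward_2)\,\mathbb{E}[\psi(\fitness-\eqmfitnessalt(\type_j))]=\fitness/\type$ then yields a mean performance that first rises and then falls in the spread. Closing the argument needs an extra restriction on $\distsignal$ and $\disttype$, or deterministic ranking. Neither your proposal nor the paper provides one.
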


The proof follows a logic similar to the equilibrium existence argument: instead of showing that the payoff function has increasing differences in $(\fitness, \type)$, we show it has increasing differences in $(\fitness, \rewards)$. This increasing difference arises because (1) the rank distribution resulting from a higher mean performance first-order stochastically dominates that of a lower mean performance, and (2) a more skewed prize vector allocates disproportionally greater rewards to top performers. Intuitively, the marginal return on effort from increasing prize skewness is greater when a player is more likely to achieve a high rank. Thus, a more skewed prize vector incentivizes all types to target a higher mean performance.

With this established, we can compare equilibrium mean performance in the contest with the baseline model.

\begin{corollary}[Contest versus Baseline]\label{prop:benchmark and contest eqmfitness}
    Under \cref{a:fosd}, \ref{a:sm}, and \ref{a:pt}, $\fitnesscontest(\type)>\fitnessbenchmark(\type)$ for any $\type\in\Theta$.
\end{corollary}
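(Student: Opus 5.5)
The plan is to treat the single-agent baseline as the contest with the zero prize vector, and then use \cref{monotone strategy in reward} to compare it with any contest prize vector $\rewards$. Under $\rewards'=\mathbf{0}$ the contest gain vanishes identically, and the auxiliary-game payoff reduces to $\fitness-\fscostfn(\fitness,\type;\gamma)$. Player $i$'s objective then no longer depends on rivals' strategies. So the symmetric monotone equilibrium under $\mathbf{0}$ is exactly the baseline optimum, $\eqmfitnessalt^*(\type,\mathbf{0})=\fitnessbenchmark(\type)$. This identification uses \cref{lemma:equivalent} to pass between effort allocations and mean performance. It also requires the baseline maximizer to be unique, which holds because $\fscostfn$ is convex in $\fitness$: the cost $\etcostfn$ is convex and $\mdeffortfn,\pteffortfn$ are concave, so the feasible set of the cost-minimization problem is convex.

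The weak inequality then follows from the ordering of prize vectors. Any admissible $\rewards$ has $\reward_k-\reward_{k+1}\geq 0$, so $\rewards\geqq\mathbf{0}$ in the sense of \cref{def:compare vectors}. By \cref{monotone strategy in reward}, $\fitnesscontest(\type)\geq\fitnessbenchmark(\type)$ for every $\type\in\Theta$.

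The main obstacle is upgrading this to a strict inequality, since \cref{monotone strategy in reward} only delivers weak monotonicity. I would argue through first-order conditions. Write $W(\fitness;\type)$ for the expected contest gain of a type-$\type$ player choosing mean $\fitness$ against rivals playing $\eqmfitnessalt^*$. This equals $\sum_k \reward_k\Pr[\text{rank }k\mid\fitness]$, which by summation by parts is $\sum_k(\reward_k-\reward_{k+1})\Pr[\text{rank}\le k\mid\fitness]$. By \cref{a:fosd}, each $\Pr[\text{rank}\le k\mid\fitness]$ is nondecreasing in $\fitness$, and strictly increasing for some $k$ with a positive prize gap, provided the prize vector is non-degenerate ($\reward_1>\reward_\numplayer$).

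The baseline optimum is interior in total effort by \cref{a:pt}, so it satisfies $1=\partial_\fitness\fscostfn(\fitnessbenchmark,\type;\gamma)$. At $\fitness=\fitnessbenchmark(\type)$ the contest objective $W+\fitness-\fscostfn$ therefore has derivative $W'(\fitnessbenchmark)>0$. Hence $\fitnessbenchmark(\type)$ is not optimal, and a slightly larger mean is strictly profitable. Combining this with the weak inequality gives $\fitnesscontest(\type)>\fitnessbenchmark(\type)$.

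If differentiability of $W$ is delicate, I would fall back on a discrete version of the same argument. Strict FOSD gives $W(\fitness+\epsilon)>W(\fitness)$, while the loss in $\fitness-\fscostfn$ is of second order in $\epsilon$ near the interior baseline maximum. This still rules out $\fitnessbenchmark(\type)$ as a best response. I would note that strictness genuinely needs non-degenerate prizes, and I would make that implicit assumption explicit.
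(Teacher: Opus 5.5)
Your weak-inequality step is exactly the paper's argument. The paper identifies the baseline with the contest under $\mathbf{0}$, notes $\rewards\geqq\mathbf{0}$, invokes \cref{monotone strategy in reward}, and then simply asserts strictness. The upgrade to a strict inequality is therefore entirely your addition. Your caveat is correct and necessary: under any constant prize vector (including $\mathbf{0}$) the contest gain is constant, so $\fitnesscontest=\fitnessbenchmark$. The structure of your upgrade is also right: the baseline first-order condition $1=\partial_{\fitness}\fscostfn$, plus a strictly positive marginal contest gain, combined with the weak inequality.

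The gap is the claim $W'(\fitnessbenchmark(\type))>0$. Strict monotonicity of $W$ does not give it, since a strictly increasing function can have zero derivative at a point. The paper's assumptions do not even make $\fitness\mapsto\cmf_{\fitness}(k)$ differentiable. Your discrete fallback does not repair this. From $W(\fitness+\epsilon)-W(\fitness)>0$ for every $\epsilon>0$ and a loss in $\fitness-\fscostfn$ of order $\epsilon^2$ (or merely $o(\epsilon)$), you cannot produce an $\epsilon$ at which gain beats loss. For that, the gain must be of first order in $\epsilon$.

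Under \cref{a:fosd} alone, the gain need not be first order. Suppress the rivals' profile and take $G_k$ as in the proof of \cref{lemma:winprob_fosd}. Integration by parts gives, for any rival strategies,
$\cmf_{\fitness}(k)-\cmf_{\fitness_0}(k)=\int_0^\infty(\distsignal_{\fitness_0}(s)-\distsignal_{\fitness}(s))\,dG_k(s)\le\sup_s|\distsignal_{\fitness_0}(s)-\distsignal_{\fitness}(s)|$.

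Now consider the following family. Raising the mean from $\fitness_0$ to $\fitness_0+\epsilon$ only thickens the upper tail, by a height of at most $\epsilon^3$, spread over a stretch of length of order $\epsilon^{-2}$. Make the added tail pointwise nondecreasing in $\epsilon$, so FOSD holds within the family. This family satisfies \cref{a:fosd}, full support, continuity, and the mean normalization, yet moves every rank probability by at most $\epsilon^3$. Set $\fitness_0=\fitnessbenchmark(\type_0)$, take $\fscostfn(\cdot,\type_0;\gamma)$ strongly convex near $\fitness_0$, and use small but non-degenerate prizes. Then $\fitnessbenchmark(\type_0)$ remains type $\type_0$'s unique best response whatever the rivals do, so $\fitnesscontest(\type_0)=\fitnessbenchmark(\type_0)$.

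Strictness therefore needs an extra hypothesis that delivers a first-order effect. One option is to assume $\cmf_{\fitness}(k)$ is differentiable in $\fitness$ with positive derivative for some $k$ with $\reward_k>\reward_{k+1}$. This holds, for instance, if $\fitness\mapsto\distsignal_{\fitness}$ is differentiable in $L^1$ and rivals have positive densities. Then the mean normalization gives $\int_0^\infty-\partial_{\fitness}\distsignal_{\fitness}(s)\,ds=1$, hence $\partial_{\fitness}\cmf_{\fitness}(k)=\int_0^\infty G_k'(s)\,(-\partial_{\fitness}\distsignal_{\fitness}(s))\,ds>0$. With such a hypothesis, your first-order argument goes through as written.

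A minor point: uniqueness of $\fitnessbenchmark$ comes from strict concavity of $\mdeffortfn+\pteffortfn-\etcostfn$ in $(\mdeffort,\pteffort)$, not from mere convexity of $\fscostfn$.
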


\cref{prop:benchmark and contest eqmfitness} is a direct implication of \cref{monotone strategy in reward}. The baseline scenario is equivalent to a contest where the prize vector is zero ($\boldsymbol{0}=(0,...,0)$). By \cref{def:compare vectors}, any valid prize vector $\rewards$ satisfies $\rewards\geqq\boldsymbol{0}$. Thus, equilibrium mean performance in any active contest strictly dominates the baseline scenario.

\begin{corollary}[Equilibrium mean performance and AI capability]\label{coro:fitness_gamma}
    The symmetric monotone equilibrium mean performance $\fitnesscontest(\type)$ is non-decreasing in the AI capability $\gamma$ for every $\type\in\Theta$.
\end{corollary}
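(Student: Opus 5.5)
The plan is to mirror the argument for \cref{monotone strategy in reward}: replace the parameter $\rewards$ by $\gamma$ and show that the auxiliary-game payoff has increasing differences in $(\fitness,\gamma)$. The proof sketch of \cref{prop:exist} already asserts this property. Player $i$'s interim payoff from choosing $\fitness$ when rivals play $\eqmfitnessalt^*(\cdot;\gamma)$ is
\begin{align*}
U(\fitness,\type;\gamma)=\sum_{k=1}^{\numplayer}\reward_k\, P_k(\fitness;\gamma)+\fitness-\fscostfn(\fitness,\type;\gamma),
\end{align*}
where $P_k$ is the probability of ranking $k$th. I split the $\gamma$-dependence into two channels. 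The \emph{direct} channel runs through the cost $\fscostfn$. The \emph{strategic} channel runs through the rivals' equilibrium strategies, which enter $P_k$.

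\textbf{Direct channel.} I will show that $-\fscostfn(\fitness,\type;\gamma)$ has increasing differences in $(\fitness,\gamma)$, i.e.\ that the marginal cost $\partial_\fitness\fscostfn$ is non-increasing in $\gamma$. By the envelope theorem applied to \eqref{cost function of fitness}, the multiplier on the constraint equals $\partial_\fitness\fscostfn$. At any optimum with $\pteffort>0$, the first-order condition gives $\partial_\fitness \fscostfn=\etcostfn'(\budget)/\pteffortfn'(\pteffort;\gamma)$.

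\begin{itemize}
\item Raising $\gamma$ raises $\pteffortfn'$ by \cref{a:sm-pt}. It also lowers the effort needed to reach $\fitness$, since $\partial\pteffortfn/\partial\gamma>0$, which lowers $\etcostfn'(\budget)$ by convexity of $\etcostfn$. Both effects lower the marginal cost.
\item At a corner with $\pteffort=0$, the marginal cost is $\etcostfn'(\mdeffort)/\partial_\mdeffort\mdeffortfn$, which does not depend on $\gamma$. Raising $\gamma$ can only move the agent into the interior regime, where the marginal cost is lower.
\end{itemize}

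To avoid relying on differentiability at the regime switch, I will state the key inequality in difference form: $\fscostfn(\fitness',\type;\gamma)-\fscostfn(\fitness,\type;\gamma)$ is non-increasing in $\gamma$ for $\fitness'>\fitness$. I will prove it by a direct revealed-cost comparison of the optimal allocations.

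\textbf{Strategic channel (the main obstacle).} When $\gamma$ rises, rivals' equilibrium mean performances may change, and this changes $P_k$. I will handle this with a monotone comparative statics argument for games of strategic complements, in the spirit of Van Zandt--Vives / Reny, rather than a pointwise argument. The steps are:

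\begin{enumerate}
\item Order rival strategy profiles pointwise.
\item Check that a player's best-reply correspondence is increasing in that order: when rivals raise $\fitness_j$, one's own optimal $\fitness$ does not fall.
\item Check that the best reply is increasing in $\gamma$ by the direct channel.
\end{enumerate}

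Step 2 is the delicate point. Rival improvement shifts rank probabilities, and with prizes $\reward_k\ge\reward_{k+1}$ and \cref{a:fosd} I expect the marginal return to $\fitness$ need not be monotone in general.

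What I will try first is to restrict to the greatest symmetric monotone equilibrium. I will show that the map $T_\gamma$, taking a rival strategy to the symmetric best response, is order-preserving and shifts upward in $\gamma$. Tarski-type comparative statics then give that the greatest fixed point is non-decreasing in $\gamma$. If strategic complementarity in rivals' actions fails, my fallback is an argument that leans on monotonicity in $\type$. Each type's ranking depends only on the rank of its type under a strictly monotone symmetric equilibrium, so $P_k$ becomes a function of $\type$ alone, not of the rivals' levels. The equilibrium first-order condition then reads $\sum_k\reward_k\,\partial_\fitness P_k+1=\partial_\fitness\fscostfn(\fitness,\type;\gamma)$. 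Using the direct channel, I would show pointwise that the $\fitness^*(\type)$ solving it weakly increases in $\gamma$. Pooling regions are handled as the weak-inequality case.
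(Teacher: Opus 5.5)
The direct channel is the whole of the paper's proof. The paper takes Property 7 of \cref{lemma:budget_property} (increasing differences of $-\fscostfn$ in $(\fitness,\gamma)$), asserts that the contest gain $\contestgain(\fitness_i,\eqmfitnessesalt_{-i})$ does not depend on $\gamma$, and cites Theorem 4.5 of \citet{reny2011existence} as in \cref{monotone strategy in reward}. That assertion holds only with rivals' strategies frozen, and Reny's theorem is an existence result. So the paper's argument delivers only that each type's best reply to a \emph{fixed} rival profile is non-decreasing in $\gamma$. You are right that the equilibrium statement also requires controlling the strategic channel. (A correction on your first bullet: $\pteffortfn'$ rises with $\gamma$ only at fixed $\pteffort$, but the cost-minimizing $\pteffort$ also rises, which lowers $\pteffortfn'$ by concavity. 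The correct argument is that the common marginal product at the optimum cannot fall when $\gamma$ rises. If it did, every active effort would strictly rise and none would fall, so output would overshoot $\fitness$. Your revealed-cost comparison should encode this.)

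The genuine gap is that neither of your routes closes the strategic channel.

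\emph{Route 1} (Tarski on the greatest symmetric fixed point of $T_\gamma$) needs the best-reply map to be order-preserving in rivals' strategies, and in this noisy contest that fails. Take $\numplayer=2$, $\reward_1>\reward_2=0$, and $\signal_i=\fitness_i\varepsilon_i$ with i.i.d.\ mean-one lognormal $\varepsilon_i$; this satisfies \cref{a:fosd} and the support and density conditions. Then $\Pr(\signal_i>\signal_j)=G(\ln\fitness_i-\ln\fitness_j)$ for a mean-zero normal CDF $G$. The cross-partial of $\reward_1G(\ln\fitness_i-\ln\fitness_j)$ in $(\fitness_i,\fitness_j)$ is $-\reward_1G''(\ln\fitness_i-\ln\fitness_j)/(\fitness_i\fitness_j)$. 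This is positive when $\fitness_i>\fitness_j$ and negative when $\fitness_i<\fitness_j$. In a monotone equilibrium, types near the bottom are behind almost every rival, so they face strategic substitutes. When a higher $\gamma$ lifts the rival schedule, their best reply is pushed down, possibly by more than the cost-side push up. Hence $T_\gamma$ is not monotone and the lattice comparative statics do not apply.

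\emph{Route 2} rests on a false premise. Ranks are determined by realized performances $\signal_i\sim\distsignal_{\fitness_i}$, not by types. So $P_k$ and $\partial_\fitness P_k$ depend on the entire rival schedule $\fitnesscontest(\cdot)$, which is exactly the object that shifts with $\gamma$. Even in a noiseless contest, where equilibrium rank probabilities depend on $\disttype(\type)$ alone, the marginal term in your first-order condition is $\frac{dP_k/d\type}{d\fitnesscontest/d\type}$. The condition is therefore a differential equation in the schedule, and no pointwise comparison in $\gamma$ follows.

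As written, your argument establishes what the paper's does: best-reply monotonicity against fixed rivals. The equilibrium claim needs an additional structural assumption (one delivering strategic complementarity, or uniqueness together with an ordering argument). Otherwise the corollary should be stated as a best-response comparative static.
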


The proof of \cref{coro:fitness_gamma} mirrors that of \cref{monotone strategy in reward}. By Property 7 of \cref{lemma:budget_property}, the auxiliary-game payoff exhibits increasing differences in $(\fitness,\gamma)$. A higher $\gamma$ uniformly lowers the marginal cost of mean performance, inducing every type to target a higher performance level. \cref{coro:fitness_gamma} implies that an exogenous increase in AI capability naturally causes inflation in overall mean performance.

\subsection{Performance Manipulation}

Consider a symmetric monotone equilibrium $(\mdeffortcontest(\type),\pteffortcontest(\type))$ of the ML contest. Type $\type$ is involved in \emph{performance manipulation} if $\mdeffortcontest(\type)\leq\mdeffortbenchmark(\type)$ and $\pteffortcontest(\type)>\pteffortbenchmark(\type)$, where $\mdeffortbenchmark(\type)$ and $\pteffortbenchmark(\type)$ represent creative and mechanistic effort in the baseline setting. In other words, a contestant engages in performance manipulation if contest incentives induce a bias specifically toward mechanistic effort.

The following proposition compares the allocation of effort across the baseline and contest settings. 
 
\begin{proposition}\label{prop:compare}
Under \cref{a:fosd}, \ref{a:sm}, and \ref{a:pt}, there exists a threshold $\typecl\in (-\infty,\typebl]$ such that:
    \begin{enumerate}
        \item Any type $\type<\typecl$ only mechanizes, and $\pteffortcontest(\type)>\pteffortbenchmark(\type)>0$.
        \item Any type $\type\in(\typecl,\typebl)$ mechanizes in both settings with $\pteffortcontest(\type)>\pteffortbenchmark(\type)>0$, but creates only in the contest.
        \item Any type $\type\in (\typebl,\typebu)$ both creates and mechanizes, and each effort is strictly higher in the contest.
    \end{enumerate}
\end{proposition}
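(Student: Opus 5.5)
The plan is to reduce the comparison to total effort and then use the cost-minimization structure of \eqref{cost function of fitness}. By \cref{prop:benchmark and contest eqmfitness}, $\fitnesscontest(\type)>\fitnessbenchmark(\type)$ for every type. Since $\fscostfn$ is strictly increasing in $\fitness$ and $\etcostfn$ is non-decreasing and convex, a strictly higher target mean performance requires strictly higher total effort, so $\budgetcontest(\type)>\budgetbenchmark(\type)$. For fixed $\type$, the optimal split of a total effort $\budget$ between the two margins solves
\[
\max_{\mdeffort+\pteffort=\budget,\ \mdeffort,\pteffort\ge0}\ \mdeffortfn(\mdeffort,\type)+\pteffortfn(\pteffort;\gamma).
\]
Because both production functions are strictly concave, the Kuhn--Tucker conditions for this problem give allocation paths $\mdeffort(\budget;\type)$ and $\pteffort(\budget;\type)$ that are each non-decreasing in $\budget$. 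They are strictly increasing whenever the solution is interior, since equating the marginal products along an expansion path forces both components up.

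The three cases then follow from the corner structure, which I would read off from the kinked green curve in \cref{fig:eqm_benchmark}. Let $\type$ be a type whose marginal creative product at zero lies below $\pteffortfn'(0;\gamma)$. This holds for types below $\typebu$, and by \cref{a:pt} in particular near $\underline\type$. For such a type, the allocation path puts all effort into mechanization until $\budget$ reaches a kink level $\hat\budget(\type)$ defined by
\[
\pteffortfn'(\hat\budget;\gamma)=\frac{\partial\mdeffortfn(0,\type)}{\partial\mdeffort}.
\]
Above $\hat\budget(\type)$ both margins are active. By \cref{a:sm}, $\hat\budget(\type)$ is strictly decreasing in $\type$, and it becomes infinite when $\frac{\partial\mdeffortfn(0,\type)}{\partial\mdeffort}\le 0$ or the Inada limit binds. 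I would define $\typecl$ as the type at which $\budgetcontest(\type)=\hat\budget(\type)$, setting $\typecl=-\infty$ by convention if no such type exists in $\Theta$. Types below $\typecl$ then have $\budgetbenchmark<\budgetcontest\le\hat\budget$, so they purely mechanize in both settings with $\pteffortcontest=\budgetcontest>\budgetbenchmark=\pteffortbenchmark>0$. This gives part 1. Types in $(\typecl,\typebl)$ have $\budgetbenchmark\le\hat\budget<\budgetcontest$, where the first inequality is exactly the definition of $\typebl$ in \cref{prop:eqm_benchmark}. These types purely mechanize in the baseline and are interior in the contest. Since the mechanistic path is continuous and non-decreasing, and strictly increasing on the interior region, $\pteffortcontest>\hat\budget\ge\pteffortbenchmark>0$ and $\mdeffortcontest>0=\mdeffortbenchmark$. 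This gives part 2. Types in $(\typebl,\typebu)$ are interior in the baseline, and $\budgetcontest>\budgetbenchmark$ keeps them interior in the contest. Strict monotonicity of both components on the interior region then gives part 3. Finally, $\typecl\le\typebl$ because $\budgetcontest>\budgetbenchmark$: wherever the baseline total already reaches the kink, the contest total passes it as well.

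The main obstacle is showing that $\typecl$ is a genuine single threshold, that is, that $\budgetcontest(\type)-\hat\budget(\type)$ crosses zero only once. For this I would use three facts: $\hat\budget$ is strictly decreasing in $\type$; $\fitnesscontest$ is non-decreasing in $\type$ by \cref{prop:exist}; and on the pure-mechanization region $\fscostfn$ does not depend on $\type$, so $\budgetcontest$ is non-decreasing in $\type$ there. Together these show that the difference is increasing in $\type$, so the crossing is unique. A secondary technical point is the strictness of the comparative statics at the kink, which I would handle by continuity of the allocation path in $\budget$.
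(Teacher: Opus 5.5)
Your argument is correct, and it is organized differently from the paper's. The paper never builds the total-effort allocation path. Instead, it shows that $\{\type:\mdeffortcontest(\type)=0\}$ is a down-set by contradiction: if $\type'<\type$ and $\mdeffortcontest(\type)=0<\mdeffortcontest(\type')$, monotonicity of $\fitnesscontest$ forces $\pteffortcontest(\type')<\pteffortcontest(\type)$, and chaining concavity of $\pteffortfn$ with supermodularity of $\mdeffortfn$ then violates type $\type'$'s optimality condition. It then defines $\typecl$ as a supremum, gets $\typecl\le\typebl$ from $\pteffortcontest(\typebl)\ge\pteffortbenchmark(\typebl)$, and handles part 3 with a separate expansion-path remark.

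Your single device is normality of the split $\bigl(\mdeffort(\budget;\type),\pteffort(\budget;\type)\bigr)$ in total effort, plus the kink $\hat{\budget}(\type)$. It covers all three parts uniformly and gives the strict comparisons directly. In particular it yields $\pteffortcontest>\hat{\budget}(\type)\ge\pteffortbenchmark$ for the switching types in part 2, which the paper's proof never spells out. The paper's route, in exchange, works with equilibrium efforts alone and needs no parametrized path.

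Two small repairs are needed. First, your single-crossing step is circular as phrased. $\budgetcontest$ is non-decreasing only among types already known to mechanize purely, and $\budgetcontest-\hat{\budget}$ need not be monotone elsewhere. Compare outputs instead. Type $\type$ purely mechanizes in the contest if and only if $\fitnesscontest(\type)\le\pteffortfn(\hat{\budget}(\type);\gamma)$. The left side is non-decreasing in $\type$ and the right side is non-increasing, so this set is a down-set. Define $\typecl$ as its supremum, since an exact crossing point need not exist if $\fitnesscontest$ jumps.

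Second, the mechanization-first kink does not describe all types below $\typebu$. Because $\mdeffortbenchmark(\typebu)>0$, strict concavity gives $\partial\mdeffortfn(0,\typebu)/\partial\mdeffort>\pteffortfn'(0;\gamma)$, so interior types near $\typebu$ begin with pure creation. This is harmless: part 3 only needs the interior region to be upward closed in $\budget$, and your normality result already gives that.
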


This proposition demonstrates that in the symmetric monotone contest equilibrium, \emph{low} types achieve higher mean performance solely by increasing mechanistic effort, whereas \emph{high} types achieve it by increasing effort across both margins.

The baseline thresholds $\typebl$ and $\typebu$ are non-decreasing in $\gamma$ (\cref{coro:thresholds_gamma}). The contest threshold $\typecl$, however, responds to $\gamma$ through two opposing channels. Holding equilibrium mean performance fixed, a higher $\gamma$ raises the marginal product of mechanistic effort, making pure mechanization attractive for a wider range of types and pushing $\typecl$ up. Simultaneously, a higher $\gamma$ raises the equilibrium mean performance each type targets (\cref{coro:fitness_gamma}), which requires more total mechanistic effort. Due to concavity, this lowers its marginal product and pushes $\typecl$ down. While the net effect on $\typecl$ depends on the strength of contest competition relative to the curvature of $\pteffortfn$, the comparative statics across regimes remain robust: relative to the baseline, low types systematically bias toward mechanistic effort.

\begin{corollary}\label{low overfit}
    Low types engage in performance manipulation, while high types do not.
\end{corollary}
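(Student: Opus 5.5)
The plan is to read off \cref{low overfit} directly from \cref{prop:compare}, combined with the definition of performance manipulation, $\mdeffortcontest(\type)\leq\mdeffortbenchmark(\type)$ and $\pteffortcontest(\type)>\pteffortbenchmark(\type)$. I split the type space at the thresholds $\typecl\le\typebl<\typebu$ and check the two defining inequalities region by region.

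\textbf{Low types.} Take $\type<\typecl$. \cref{prop:compare}(1) says such a type only mechanizes in the contest, so $\mdeffortcontest(\type)=0$, and that $\pteffortcontest(\type)>\pteffortbenchmark(\type)>0$. Since $\typecl\le\typebl$, \cref{prop:eqm_benchmark} puts this type in the pure-mechanization region of the baseline, so $\mdeffortbenchmark(\type)=0$. Hence $\mdeffortcontest(\type)=0\le 0=\mdeffortbenchmark(\type)$, and mechanistic effort strictly rises. Both conditions of the definition hold, so every type below $\typecl$ manipulates.

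\textbf{High types.} I treat two regions.
\begin{itemize}
\item For $\type\in(\typebl,\typebu)$, \cref{prop:compare}(3) gives $\mdeffortcontest(\type)>\mdeffortbenchmark(\type)$. This violates the first condition of the definition, so there is no manipulation.
\item For $\type\in(\typecl,\typebl)$, \cref{prop:compare}(2) says the type creates only in the contest, so $\mdeffortcontest(\type)>0=\mdeffortbenchmark(\type)$. This also violates the first condition, so these types do not manipulate either, even though their mechanistic effort rises.
\end{itemize}
So the corollary's statement ``high types do not'' is genuinely the region $\type>\typecl$.

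\textbf{Main obstacle: types $\type\ge\typebu$.} \cref{prop:compare} does not cover these types. In the baseline they have $\pteffortbenchmark(\type)=0$, so I must show either that $\pteffortcontest(\type)=0$ or that $\mdeffortcontest(\type)>\mdeffortbenchmark(\type)$.

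First attempt: use the cost-minimization structure in \eqref{cost function of fitness}. At an interior optimum of the inner problem, the marginal products are equalized whenever both efforts are used. By \cref{prop:benchmark and contest eqmfitness}, $\fitnesscontest(\type)>\fitnessbenchmark(\type)$.
\begin{itemize}
\item If the type stays purely creative in the contest, then $\pteffortcontest=0$ and the second condition fails.
\item If instead $\pteffortcontest(\type)>0$, the first-order condition $\partial\mdeffortfn(\mdeffortcontest,\type)/\partial\mdeffort=\pteffortfn'(\pteffortcontest;\gamma)\le\pteffortfn'(0;\gamma)$ must hold. At the baseline we had $\partial\mdeffortfn(\mdeffortbenchmark,\type)/\partial\mdeffort\ge\pteffortfn'(0;\gamma)$ by the corner condition. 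Strict concavity of $\mdeffortfn$ in $\mdeffort$ then forces $\mdeffortcontest\ge\mdeffortbenchmark$, and I would argue strictness from the fact that total effort and mean performance both increase.
\end{itemize}
In either case the definition fails, so these types also do not manipulate.

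I expect the delicate point to be the boundary equality case and ensuring $\mdeffortcontest\neq\mdeffortbenchmark$ in the second case. I would handle it by noting that if the two creative efforts were equal, the marginal-product equalities would force $\pteffortcontest=0$, a contradiction.
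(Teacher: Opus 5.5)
Your proposal is correct and follows the paper's own route. Like the paper, you read the corollary off Proposition~\ref{prop:compare}: types below $\typecl$ keep zero creative effort while strictly raising mechanistic effort, and types in $(\typecl,\typebu)$ strictly raise creative effort.

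Your KKT comparison for $\type\ge\typebu$ makes rigorous what the paper only argues informally. Even when pure creation unravels, $\pteffortcontest>0$ forces $\partial\mdeffortfn(\mdeffortcontest,\type)/\partial\mdeffort\le\pteffortfn'(\pteffortcontest;\gamma)<\pteffortfn'(0;\gamma)\le\partial\mdeffortfn(\mdeffortbenchmark,\type)/\partial\mdeffort$, hence $\mdeffortcontest>\mdeffortbenchmark$. Writing your first-order condition as this inequality rather than an equality also covers the case $\mdeffortcontest=0$. It makes your separate strictness worry unnecessary, since strictness comes directly from the strict concavity of $\pteffortfn$.
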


This corollary follows immediately from \cref{prop:compare}. Any type $\type\leq\typecl$ exerts zero creative effort in both the contest and baseline scenarios, but strictly higher mechanistic effort in the contest. In contrast, high types do not engage in performance manipulation because they strictly increase their creative effort alongside their mechanistic effort in a contest. In the baseline scenario, the highest types ($\type \geq \typebu$) rely entirely on their private expertise, finding it optimal to exert zero mechanistic effort. However, in a highly skewed contest, the required equilibrium mean performance $\fitnesscontest(\type)$ is pushed significantly higher. Because the creative production function $\mdeffortfn$ is strictly concave, the marginal product of creative effort steadily diminishes as the high-type agent scales their output to remain competitive. 
If the contest incentives are sufficiently steep, this marginal product is driven down until it binds against the initial marginal product of mechanistic tools ($\pteffortfn'(0; \gamma)$). Consequently, the pure-creation regime can completely unravel: to maintain their rank, even the absolute highest-ability agents are eventually forced to supplement their domain expertise with routine algorithmic execution.
In Appendix \ref{append:examples}, we provide concrete functional forms to illustrate how types in each region allocate their effort.

\paragraph{Real-World Implications: Proxy Misalignment and Benchmark Hacking.} 
We define this competition-induced effort shift as performance manipulation. Because effort is unobservable, the performance metric is only an imperfect proxy for the principal's true objective, creating the classic organizational pathology of “rewarding A while hoping for B” \citep{kerr1975folly}. The principal hopes to elicit agents' private expertise ($\type_i$) through creative problem-solving ($\mdeffort_i$), but the contest strictly rewards observable performance ($\fitness_i$), potentially diverting effort from goals such as long-term software maintainability, out-of-sample model robustness, or genuine legal synthesis \citep{holmstrom1991multitask}. Competition further exacerbates this misalignment by incentivizing agents to inflate the observable metric. We focus on this competition-induced distortion because, unlike the underlying misalignment caused by imperfect performance measurement, which is often inherent and difficult to eliminate without advances in measurement technology, it can be effectively mitigated through contest design.


In the presence of highly capable AI ($\gamma$), the cost of inflating this proxy via mechanistic effort ($\pteffort_i$) plummets. When an agent substitutes toward mechanistic effort, they explicitly exploit this proxy misalignment, artificially inflating the observable metric using readily available technology rather than supplying the scarce human judgment the principal seeks. This provides a formal microfoundation for Goodhart's Law in the AI era: as generative models drastically lower the cost of mechanistic output, standard performance metrics become increasingly susceptible to manipulation.

In machine learning contests specifically, this manifests as \emph{benchmark hacking}: the practice of optimizing only for contest rules through trial-and-error approaches without faithfully solving the underlying problem. From the perspective of the contest host, the goal is often to elicit a model that delivers meaningful innovation capable of generalizing to future problems.\footnote{While some Kaggle contests prioritize solving a one-off problem without concern for generalizability, this paper focuses on contests where the true goal is to leverage the winning model for similar future tasks or to advance research within a broader community.} However, because prizes are awarded based on model performance on a static dataset, contestants are incentivized to aggressively tune hyperparameters and make numerous submissions to find the configuration that yields the highest local score. Unlike a single data analyst developing a model in isolation (our baseline), ML contests introduce a severe incentive misalignment. As long as the principal cannot perfectly distinguish genuine innovation from AI-assisted benchmark hacking—a difficult task given the cost of manually reviewing thousands of submissions—this structural manipulation persists.

\subsection{Implications for Contest Design}

The following corollary follows immediately from Propositions \ref{monotone strategy in reward} and \ref{prop:compare}.

\begin{corollary}\label{coro:actionble policy}
    For any two prize vectors such that $\rewards\geqq\rewards'$: 
    \begin{enumerate}
        \item It is never optimal to reward the lowest-ranked player.
        \item A strictly smaller measure of types engages in metric manipulation under the more skewed prize vector $\rewards$.
        \item The more skewed prize vector $\rewards$ induces weakly higher mean performance for every type.
    \end{enumerate}
\end{corollary}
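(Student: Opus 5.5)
Each of the three claims in \cref{coro:actionble policy} will be reduced to \cref{monotone strategy in reward} together with the structure of the cost-minimizing allocation $\eqmeffortalts$ that underlies \cref{prop:compare}. Throughout, we work in the auxiliary game of \cref{lemma:equivalent}. There, a type's effort composition is a deterministic function of the pair $(\fitness,\type)$ alone, namely the solution of the cost-minimization problem \eqref{cost function of fitness}. It does not depend on $\rewards$ except through the target $\fitness$.

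We prove part 3 first, since it is immediate. \cref{monotone strategy in reward} gives $\eqmfitnessalt^*(\type,\rewards)\geq\eqmfitnessalt^*(\type,\rewards')$ for every $\type$ whenever $\rewards\geqq\rewards'$.

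For part 1, we compare $\rewards=(\reward_1,\dots,\reward_{\numplayer})$ with $\tilde{\rewards}=(\reward_1-\reward_{\numplayer},\dots,\reward_{\numplayer-1}-\reward_{\numplayer},0)$. Every increment $\reward_k-\reward_{k+1}$ is the same under both vectors, so $\rewards\geqq\tilde{\rewards}$ and $\tilde{\rewards}\geqq\rewards$. Moreover, the expected contest gain differs only by the constant $\reward_{\numplayer}$, which every rank receives. Hence the equilibrium strategies coincide, and so do the resulting mean performance and effort allocation. A positive $\reward_{\numplayer}$ is therefore pure transfer that buys nothing. Reallocating that budget to increase $\reward_1$ (or any top increment) yields a vector that is $\geqq$ the original, and by part 3 it weakly raises every type's mean performance. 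We will phrase ``optimal'' relative to a fixed prize budget, which is the only reading under which the claim is nontrivial.

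Part 2 is the main step. By \cref{prop:compare} and the definition of performance manipulation, the manipulating types form a lower set $\{\type:\mdeffortcontest(\type)=0\}$, that is, the pure-mechanization region below the contest threshold $\typecl(\rewards)$. It therefore suffices to show that $\typecl(\rewards)\leq\typecl(\rewards')$, with the inequality strict on a set of positive $\disttype$-measure when the skewness increase is nontrivial.

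The approach is as follows. First, characterize the allocation for target $\fitness$ via the KKT conditions of \eqref{cost function of fitness}. Type $\type$ exerts zero creative effort exactly when $\partial\mdeffortfn(0,\type)/\partial\mdeffort\leq\pteffortfn'(\pteffort(\fitness);\gamma)$, where $\pteffort(\fitness)=\pteffortfn^{-1}(\fitness;\gamma)$ is the mechanistic effort needed to reach $\fitness$ alone. The right-hand side is strictly decreasing in $\fitness$, by concavity of $\pteffortfn$. The left-hand side is strictly increasing in $\type$, by \cref{a:sm}.

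Second, raising $\rewards$ in the $\geqq$ order weakly raises $\eqmfitnessalt^*(\type,\cdot)$ for each $\type$, by part 3. So for any $\type$ that does not manipulate under $\rewards'$, the creative-entry inequality persists under $\rewards$, and the manipulating set can only shrink. Strictness will come from types near $\typecl(\rewards')$, provided $\eqmfitnessalt^*$ rises strictly there.

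The main obstacle is exactly this strictness. \cref{monotone strategy in reward} is only weak, so we must upgrade it to a strict increase for types in a neighborhood of the threshold. The plan is to use the first-order condition of the auxiliary problem. Under a strictly more skewed $\rewards$, the marginal expected contest gain is strictly higher at every $\fitness$: this holds because some increment is strictly larger and $\partial/\partial\fitness$ of each rank probability is nonzero by \cref{a:fosd}. The derived cost $\fscostfn$ is strictly convex in $\fitness$, so the interior optimum moves strictly up.
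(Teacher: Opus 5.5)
Your reduction is the one the paper intends. The paper offers nothing beyond the remark that the corollary follows from \cref{monotone strategy in reward} and \cref{prop:compare}, and your part~3 and the weak inclusion in part~2 go through as you describe. The gap is in part~2. Identifying the manipulators with $\{\type:\mdeffortcontest(\type)=0\}$ requires $\pteffortcontest(\type)>\pteffortbenchmark(\type)$, i.e.\ contest performance strictly above baseline, under \emph{both} vectors. That fails whenever all increments of $\rewards'$ are zero. By your own part-1 argument such a contest reproduces the baseline exactly; the paper itself models the baseline as $\rewards'=\mathbf{0}$, and $\rewards\geqq\mathbf{0}$ always holds. So nobody manipulates under $\rewards'$, while under an active $\rewards$ every $\type<\typecl(\rewards)$ does (\cref{low overfit}). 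The measure of manipulators therefore rises whenever $\typecl(\rewards)>\underline{\type}$, so part~2 fails even weakly for such pairs. Its strict form also fails trivially for $\rewards=\rewards'$ (the order is reflexive) and for your pair $\rewards,\tilde{\rewards}$. ``Nontrivial skewness increase'' is not enough. You also need $\rewards'$ active and $\typecl(\rewards')>\underline{\type}$ (otherwise both measures are zero). You should flag that the statement only holds under such qualifications instead of proving it as written.

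The strictness upgrade also fails as proposed. You compare marginal contest gains under $\rewards$ and $\rewards'$ at the same $\fitness$ against the \emph{same} rivals. But changing the prize vector also moves the rivals' equilibrium strategies (upward, by \cref{monotone strategy in reward}). That changes $\cmf_{\fitness,\eqmfitnessesalt_{-i}}(k)$ and its slope, and can lower a given type's marginal gain. Contests are not, in general, games of strategic complements, so a single-agent strict comparative-statics step does not compare two equilibria. In addition, \cref{a:fosd} delivers neither differentiability of the rank probabilities in $\fitness$ nor a strictly positive derivative. And since $\contestgain$ need not be concave, there is no unique interior optimum to push up. Even a strict pointwise rise in $\fitnesscontest$ moves the threshold only if the rise does not vanish as $\type\uparrow\typecl(\rewards')$ (e.g.\ if $\fitnesscontest$ is continuous there). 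Finally, in part~1 your fixed-budget reallocation gives only a weak improvement. That shows some optimum has $\reward_\numplayer=0$, not that $\reward_\numplayer>0$ is never optimal. The strict claim is immediate if the principal bears prize costs, since $\tilde{\rewards}$ induces the identical equilibrium at an outlay lower by $\numplayer\reward_\numplayer$.
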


This corollary yields a striking, and seemingly counterintuitive, implication for contest design in the AI era. As AI capabilities advance and mechanistic effort becomes cheaper and more readily substitutable, one might expect that principals could lower compensation because the overall task is easier. Our model demonstrates the exact opposite: the principal must paradoxically implement more lucrative and heavily skewed prize structures to sustain genuine innovation. 

This aligns closely with the multi-task agency framework of \citet{holmstrom1991multitask}. As the mechanistic dimension of performance becomes drastically cheaper to execute, it threatens to completely crowd out the more valuable, yet harder-to-generate, creative dimension. To prevent participants from substituting toward cheap algorithmic output, the principal must steepen the incentive structure to sufficiently compensate high-type agents for the steep marginal cost of creative effort. 

Consequently, a more skewed prize vector serves as a structural mitigation against metric manipulation, yielding three distinct benefits: it (1) elicits higher aggregate mean performance (as established in \cref{monotone strategy in reward}), (2) stimulates greater creative effort across the participant pool, and (3) systematically reduces the mass of types engaged in metric manipulation. While this establishes the directional necessity of steeper incentives, the exact optimal tradeoff between the principal's total prize budget and expected performance gains remains an open question. We leave a formal characterization of the optimal, profit-maximizing reward structure to future work.

\section{Empirical Validation with Kaggle Data}
\label{sec:fact}

In this section, we validate our modeling choices and theoretical predictions using public data on featured competitions hosted by a popular ML contest platform \emph{Kaggle} \citep{lemus2021dynamic}.\footnote{Featured competitions are the types of competitions in which players build a machine learning model to solve a (usually commercially-purposed) prediction task. 
Unlike other types of competitions on Kaggle, featured competitions offer medium- to high-stake cash prizes that can go as high as million dollars.
}  
In a typical ML competition, there is a prediction task and players compete on the performance of their designed ML models for it.
They are provided with a labeled \emph{training} dataset and an unlabeled \emph{test} dataset. 
Each participant trains an ML model on the training data, uses it to predict labels for the test data, and submits the resulting predictions for feedback.
The platform evaluates each submission against the hidden test labels but reveals performance only on a predetermined and undisclosed subset of the test data (typically 70\%) to mitigate overfitting. Based on this feedback, players may refine their models and submit updated predictions throughout the competition. At the end of the contest, each player makes a final submission. The performance on the remaining withheld portion of the test data (typically 30\%) determines the rankings and the winner(s).


\paragraph{Data. } To ensure comparability across contests, we restrict our sample to competitions involving binary classification tasks with prize amounts in U.S. dollars. 
The evaluation metric for these competitions is the \emph{Area Under the Receiver Operating Characteristic Curve} (AUC), which ranges from 0\% to 100\%, with higher values indicating better model performance.
Eventually, our dataset contains 28 contests. The unit of observation is a contest-player pair, with contest indexed by $j$ and player by $i$.  
For each contest, the variables we observe include number of players (usually around 1000$\sim$3000), number of prizes (usually around 1$\sim$4), total reward quantity (usually within 10K$\sim$70K USD), and contest year (within 2010$\sim$2021), etc. 
At the contest-player level, we observe each player's final performance (measured by AUC, which typically ranges between 60\% and 95\%) and whether the player has publicly shared their code. In total, we collect approximately 1,500 publicly shared code scripts, which we later use to measure players' effort choices (creative vs. mechanistic).
At the player level, Kaggle assigns each user a Code Point score—a proxy for coding skill—based on the popularity (measured by upvotes) of the user's publicly shared notebooks, with older notebooks receiving progressively less weight. Code Points are then mapped to discrete Code Tiers (i.e., Expert vs. Non-Expert) using Kaggle's proprietary ranking system.\footnote{Kaggle also uses secret algorithms to discount some votes to prevent abuse such as self-votes. \url{https://www.kaggle.com/progression/code}}
Detailed data statistics are reported in \cref{tb:summary_stats}. 

\begin{table}[htbp]
    \centering
    \captionsetup{font=small}
    \small
    \caption{\textbf{Summary Statistics}}
    \label{tb:summary_stats}
    \begin{tabularx}{0.85\textwidth}{lrrrrr} 
        \noalign{\hrule height 0.7pt}
        \hline \textbf{Variables} & \textbf{Count} & \textbf{Mean} & \textbf{Std} & \textbf{Min} & \textbf{Max} \\
        \hline
        \textbf{A. Competition Level}\\
        \textit{Number of players} & 28 & 1744.57 & 2172.56 & 28 & 8752 \\
        \textit{Reward quantity (USD)} & 28 & 26442.86 & 29941.03 & 0 & 100000 \\
        \textit{Number of prizes} & 28 & 2.71 & 1.65 & 1 & 8 \\
        \textit{Year} & 28 & 2014.75 & 3.35 & 2010 & 2021 \\
        \hline
        \textbf{B. Competition-Player Level}\\
        \textit{Final Performance} & 48848 & 0.80 & 0.14 & 0.01 & 1.00 \\
        \textit{Code Availability} & 48848 & 0.03 & 0.16 & 0 & 1 \\
        \hline
        \textbf{C. Player Level}\\
        \textit{Binary Type (Code Tiers)} & 36147 & 0.03 & 0.16 & 0 & 1 \\
        \textit{Continuous Type (Code Points)} & 36147 & 1.40 & 41.42 & 0 & 5700 \\
        \hline
        \noalign{\hrule height 0.7pt}
        \hline
    \end{tabularx}
    \vspace{0.1in}
    \captionsetup{font=footnotesize}
    \caption*{\textit{Note:} Panel~A is summarized across competitions: \textit{Number of players} counts the distinct players per competition. Panel~B is summarized across competition-player pairs: \textit{Code Availability} is 1 if the  associated code to the final submission is available. Panel~C is summarized across players: \textit{Binary type} equals 1 when the player is classified as an expert coder.}
\end{table}

A key data challenge is to measure creative vs mechanistic efforts as they are not observed directly from the data. We address this challenge by developing a novel method to quantify the extent of each type of effort from participants' code scripts.
We start by defining machine learning contest \emph{data pipeline} as the end-to-end workflow that turns raw competition data into a final submission. Such a data pipeline consists of three stages: data pre-processing and feature engineering, model training, and hyperparameter tuning and ensembling.
\cref{tab:pipeline} summarizes the purpose and practices of each stage. In data pre-processing and feature engineering, creative effort could be parsing passenger titles such as 'Doctor' out of raw name strings to proxy age and social status, whereas mechanistic effort could be taking the logarithm of a variable. In model training, creative effort could be upweighting small sellers in the loss after identifying that prediction errors are substantially larger for them, whereas mechanistic effort could be training a standard model using L2 loss. In hyperparameter tuning and ensembling, creative effort could be designing an ensemble that assigns different models to different regions of the input space after identifying that each model excels on different types of examples, whereas mechanistic effort could be averaging the predictions of multiple models with fixed weights. More examples are provided in Appendix~\ref{apx:technique_classification}.

\begin{table}[t]
	\centering
	\fontsize{10}{13}\selectfont
	\captionsetup{font=footnotesize}
	\caption{\bf Machine learning contest data pipeline.}
	\label{tab:pipeline}
	{\renewcommand{\arraystretch}{1.5}
	\begin{tabularx}{\textwidth}{p{0.2\textwidth}p{0.32\textwidth}X}
		\toprule
		Stage & Purpose & Practices \\
		\midrule
        Data pre-processing and feature engineering & Transform raw data into meaningful features (variables) by cleaning, selecting, manipulating, and creating input variables using domain knowledge to enable ML models to learn data patterns more effectively & Handle missing values, detect and treat outliers, remove duplicates, correct wrong data types, unify inconsistent categories, and fix formatting or labeling issues; then apply encoding for categorical variables, perform scaling or normalization, create interaction, aggregation, time-based, or text-based features, and use data augmentation to expand training examples.\\
		Model training (inner loop)  & Build models and improve their predictive performance on training data set & Choose model types, train models such as XGBoost, LightGBM, CatBoost, or neural networks. \\
        Hyperparameter tuning and ensembling (outer loop) & Tune hyperparameters to improve the performance of each model, and then combine multiple models to get a better overall prediction & Perform data splitting and cross-validation to select the best hyperparameters (e.g., learning rate, tree depth) for each model, then combine their outputs by averaging, voting or stacking.\\
		\bottomrule
	\end{tabularx}
    }
\end{table}

Having formalized the data pipeline, we design the following three-step procedure to construct measures of creative vs mechanistic effort for each contest-player pair:
\begin{enumerate}
	\item Partition each script into contiguous, non-overlapping logical blocks and assign each block to one of the three stages in the data pipeline defined in \cref{tab:pipeline} or none of them (see detailed prompt in Appendix~\ref{ax:prompt_pipeline});
	\item Identify all techniques used in each block and classify each of them as creative or mechanistic (see detailed prompt in Appendix~\ref{ax:prompt_technique});
	\item Count the number of creative and mechanistic techniques across all blocks to construct measures of creative vs mechanistic effort for each stage and for the pipeline as a whole.
\end{enumerate}

A logical block is a contiguous, non-overlapping sequence of code statements that jointly performs a single conceptual task within the machine learning pipeline. 
We first segment each script into logical blocks before assigning pipeline stages because code statements do not necessarily follow the sequential order of the data pipeline. For example, a participant may write code for feature engineering, then model training, and later realizes they need to reengineer features. This accommodates the iterative nature of code development and prevents missing-out and misclassification of out-of-order code. Boilderplate  — imports, config, plotting, prints, blanks — is left unblocked.

Each technique is an operation — a single function call, model fit, evaluation step, or contiguous statement that produces or transforms a modeling artifact. For example, the following code snippet is a data preprocessing and feature engineering block and each line is a technique that engineers a feature from the raw data. Likewise, we ignore boilerplate lines.
Whenever it is tricky to distinguish creative (especially creative problem-specific
adaptation) from mechanistic techniques, we conduct a thought experiment by asking ``would any data scientist hand this same dataset, without knowing anything about the competition, write this exact code as a default first move?" If yes, we classify it as mechanistic and vice versa.
\begin{lstlisting}[title={\bf Example of logical block and technique}, label={lst:feature_engineering}, basicstyle=\ttfamily\small, columns=fullflexible, keepspaces=true, backgroundcolor=\color{gray!12}]
# data preprocessing and feature engineering block
1. df["age"] = np.log(df["age"]) 		# technique 1: mechanistic
2. df["income"] = df["income"].fillna(0) 	# technique 2: mechanistic
\end{lstlisting}

We use gpt-4.1-mini to perform this classification task. For each technique initially classified as creative, we validate the classification using a more advanced model, gpt-4.1, and find that 63.2\% of the classifications are consistent. We use the more advanced model's judgment to resolve any disagreements. \cref{tb:effort_stats} reports the summary statistics of creative vs mechanistic techniques across all contest-player pairs whose code is accessible. 
On average, players employ substantially more mechanistic techniques (13.19) than creative techniques (0.94). Expert players employ twice as many creative techniques as non-expert players (1.87 vs 0.85), but only slightly more mechanistic techniques (15.95 vs 12.92). Across pipeline stages, the majority of both creative (76\%) and mechanistic (54\%) techniques are concentrated in data preprocessing and feature engineering.

\begin{table}[htbp]
    \centering
    \captionsetup{font=small}
    \small
    \caption{\textbf{Technique Composition by Player}}
    \label{tb:effort_stats}
    \begin{tabularx}{0.9\textwidth}{lrrrrr} 
        \noalign{\hrule height 0.7pt}
        \hline \textbf{Variables} & \textbf{Count} & \textbf{Mean} & \textbf{Std} & \textbf{Min} & \textbf{Max} \\
        \hline
        \textit{All-stages creative techniques} & 1355 & 0.94 & 2.07 & 0 & 36 \\
        \textit{All-stages mechanistic techniques} & 1355 & 13.19 & 8.52 & 0 & 74 \\
        \textit{All-stages creative techniques (High type)} & 126 & 1.87 & 2.77 & 0 & 17 \\
        \textit{All-stages creative techniques (Low type)} & 1182 & 0.85 & 1.98 & 0 & 36 \\
        \textit{All-stages mechanistic techniques (High type)} & 126 & 15.95 & 11.28 & 0 & 59 \\
        \textit{All-stages mechanistic techniques (Low type)} & 1182 & 12.92 & 8.10 & 1 & 74 \\
        \textit{Stage-1 creative techniques} & 1355 & 0.71 & 1.73 & 0 & 32 \\
        \textit{Stage-2 creative techniques} & 1355 & 0.15 & 0.56 & 0 & 10 \\
        \textit{Stage-3 creative techniques} & 1355 & 0.08 & 0.34 & 0 & 3 \\
        \textit{Stage-1 mechanistic techniques} & 1355 & 7.13 & 6.46 & 0 & 58 \\
        \textit{Stage-2 mechanistic techniques} & 1355 & 3.06 & 2.47 & 0 & 29 \\
        \textit{Stage-3 mechanistic techniques} & 1355 & 2.99 & 2.76 & 0 & 34 \\
        \noalign{\hrule height 0.7pt}
        \hline
    \end{tabularx}
    \vspace{0.1in}
    \captionsetup{font=footnotesize}
    \caption*{\textit{Note:} Summarized across contest-player pairs whose submission code was accessible. \textit{All-stages creative techniques} and \textit{All-stages mechanistic techniques} count creative and mechanistic techniques summed over all pipeline stages; the per-stage rows give the corresponding counts within stages 1--3. Rows marked Low/High type split all-stages counts by binary type.}
\end{table}

\paragraph{Empirical Evidence of Model Predictions.}
We generate a series of empirical facts from the data, from which we observe consistency with our model predictions as elaborated below. 

\vspace{0.1in}\noindent
{\textbf{Test for Proposition \ref{prop:exist}.}} We first test the monotonicity of the equilibrium, in particular, whether model performance is monotone in type. 
\Cref{fig:type_score} is a scatter plot for model performance of each player type at competition level and highlights three patterns. First, almost all players achieve an AUC score of at least 0.5—the expected performance under random guessing—consistent with the standard interpretation of AUC in binary classification. Second,
the lower end of these scores concentrates around 0.5 across competitions, suggesting that a nontrivial portion of players perform close to random guessing. In contrast, the upper end varies across competitions, typically clustering around 0.8 and 0.9, reflecting substantial heterogeneity in competition difficulty. Third, high types achieve higher scores on average, providing visual evidence consistent with the monotonicity prediction of the equilibrium.

\begin{figure}
	\caption{\bf Performance distribution by types}
	\centering
	\includegraphics[width=1\textwidth]{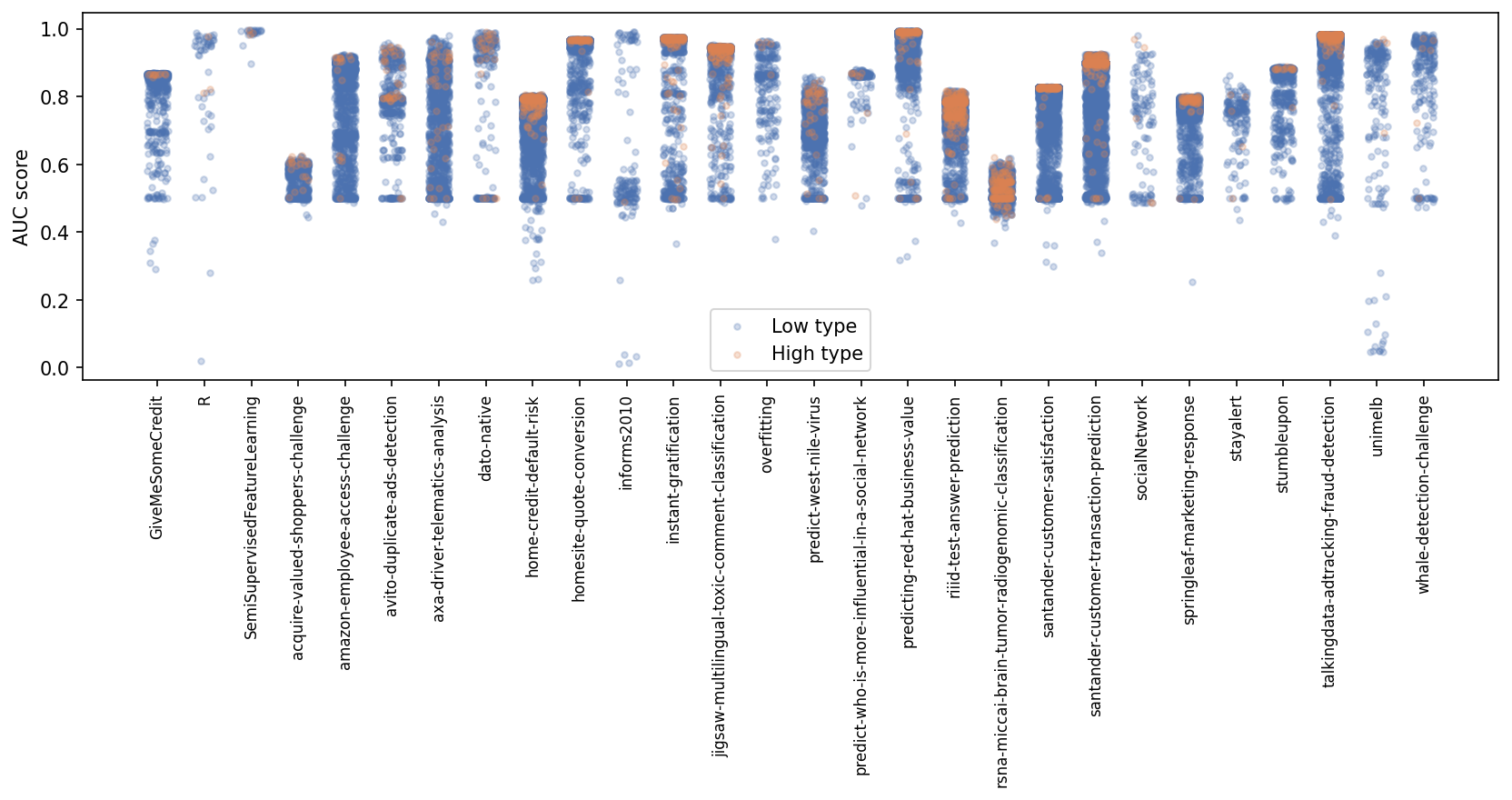}
    \captionsetup{font=scriptsize}
	\label{fig:type_score}
\end{figure}

Formally, we estimate the impact of type on model performance using
\begin{equation*}
    \signal_{ij}=\eta_0+\eta_1\cdot\type_i+\tau_j+\varepsilon_{ij},
\end{equation*}
where $\signal_{ij}$ is the model performance of player $i$ in contest $j$, $\type_i$ is the binary type of player $i$,
$\tau_j$ is contest fixed effect and is included to account for variations in the performance specific to the distinct nature of different prediction tasks, $\varepsilon_{ij}$ captures the remaining noises in the performance, such as that induced by the random splitting of the training data for cross-validation. 
The result is shown in column (1) of Table~\ref{tab:regressions}: compared to non-expert players, expert players' model performance is on average 3.6\% higher, \emph{ceteris paribus}, suggesting that the equilibrium model performance is monotone in player type, consistent with our finding in Proposition \ref{prop:exist}.

\vspace{0.2in}\noindent\textbf{Test for Proposition \ref{prop:compare}.} Proposition \ref{prop:compare} implies that higher-type players are more likely to exert creative effort. It also suggests that sufficiently low-type players exert positive, type-independent mechanistic effort. Moreover, if equilibrium creative effort increases sufficiently rapidly with type, equilibrium mechanistic effort decreases with type because high-type players optimally reallocate effort from mechanistic to creative activities, whose marginal returns increase with type. We test these predictions by estimating
\begin{equation}
    y_{ij}=\eta_0+\eta_1\cdot\type_i+\tau_j+\varepsilon_{ij},\quad y_{ij}\in\{\mdeffort_{ij},\pteffort_{ij}\},
\end{equation}
where creative and mechanistic effort, $\mdeffort_{ij}$ and $\pteffort_{ij}$, are proxied by the number of creative and mechanistic techniques employed, respectively. All other variables are defined as before. Column (2) of \cref{tab:regressions} suggests that, \emph{ceteris paribus}, expert players employ 0.41 more creative techniques than non-expert players, providing empirical support for the theoretical prediction that higher-type players are more likely to exert creative effort. Column (3) shows that expert and non-expert players use a statistically indistinguishable number of mechanistic techniques. The lack of statistical significance is likely driven by the fact that more than 95\% of players are non-experts. Consequently, the estimate primarily reflects the type-independent mechanistic effort predicted among low-type players.
Nevertheless, its negative sign is consistent with the prediction that mechanistic effort declines with type.

\vspace{0.2in}\noindent\textbf{Validation of the Specification in \cref{eq:fitness}.} A key assumption in this paper is that, as is demostrated in \cref{eq:fitness}, the return to creative effort is type-dependent while the return to mechanistic effort is not. Although we have provided descriptive rationales for this assumption, we cannot completely rule out the possibility that the returns to mechanistic effort also vary with player type. For example, higher-type players may be familiar with a wider range of off-the-shelf techniques and therefore earn higher returns from mechanistic effort. 
To provide empirical support for this assumption,
we run the following regression:
\begin{equation}\label{reg:specification}
    \signal_{ij}=\eta_0+\eta_1\cdot\mdeffort_{ij}+\eta_2\cdot\pteffort_{ij}+\eta_3\cdot(\type_i\times\mdeffort_{ij})+\eta_4\cdot(\type_i\times\pteffort_{ij})+\tau_j+\varepsilon_{ij},
\end{equation}
where the two interaction terms  capture the (in)dependence of the returns to each effort on player type. 
The results are presented in column (4) of Table \ref{tab:regressions}: the coefficient on the interaction between creation and type is positive (0.441) and significant, suggesting that an additional unit of creative effort yields a 0.441\% higher performance for high-type players than for low-type players. In contrast, the interaction for mechanization is statistically insignificant, suggesting that its returns are type-independent. These findings are consistent with our model specification in \cref{eq:fitness}.

One limitation of this approach is that our proxy for player type may not fully capture heterogeneity in underlying ability. In particular, players may differ along additional dimensions, such as domain knowledge, diagnostic problem-solving, computing capacity, that are not reflected in our type measure. 
In software engineering, this may include application-specific domain knowledge; in legal practice, communication fluency and verbal expressiveness; and in machine learning, access to high-performance GPUs or TPUs.
If these unobserved characteristics affect both the model performance and the effort choices, the estimates could be biased. Therefore, our estimates should be interpreted as evidence consistent with 
type-dependent returns to creative effort and type-independent returns to mechanistic effort, rather than as estimates of the underlying structural return functions.
In addition, the regression results cannot rule out underlying complementarity between mechanistic effort and player type. Even if the two are complementary, their observed relationship may still be insignificant—or even negative—if this complementarity is sufficiently weak relative to that between creative effort and type, leading high-type players to shift effort toward the creative margin.
Nevertheless, the results indicate stronger complementarity between creative effort and type than between mechanistic effort and type, a pattern captured by our model specification in \cref{eq:fitness}.

\begin{table}
    \centering
    \caption{\bf Regression results}
    \label{tab:regressions}
    \hspace*{-0cm}
    \scalebox{0.85}{%
        {
\def\sym#1{\ifmmode^{#1}\else\(^{#1}\)\fi}
\begin{tabular}{l*{5}{D{.}{.}{-1}}}
\toprule
            &\multicolumn{1}{c}{(1)}&\multicolumn{1}{c}{(2)}&\multicolumn{1}{c}{(3)}&\multicolumn{1}{c}{(4)}\\
            &\multicolumn{1}{c}{Performance (\%)}&\multicolumn{1}{c}{Creative}&\multicolumn{1}{c}{Mechanistic}&\multicolumn{1}{c}{Performance (\%)}\\
\midrule
High type   &       3.576\sym{***}&       0.406\sym{**} &      -0.065         &                     \\
            &     (0.234)         &     (0.167)         &     (0.722)         &                     \\
\addlinespace[1.1em]
Creative    &                     &                     &                     &       0.855\sym{***}\\
            &                     &                     &                     &     (0.119)         \\
\addlinespace[1.1em]
Mechanistic &                     &                     &                     &       0.049\sym{*}  \\
            &                     &                     &                     &     (0.026)         \\
\addlinespace[1.1em]
Creative $\times$ High Type&                     &                     &                     &       0.441\sym{*}  \\
            &                     &                     &                     &     (0.233)         \\
\addlinespace[1.1em]
Mechanistic $\times$ High Type&                     &                     &                     &       0.029         \\
            &                     &                     &                     &     (0.039)         \\
\addlinespace[1.1em]
Constant    &      79.788\sym{***}&       0.910\sym{***}&      13.222\sym{***}&      79.562\sym{***}\\
            &     (0.050)         &     (0.049)         &     (0.212)         &     (0.369)         \\
\midrule
Observations&\multicolumn{1}{c}{46766}         &\multicolumn{1}{c}{1308}         &\multicolumn{1}{c}{1308}         &\multicolumn{1}{c}{1308}         \\
R\textsuperscript{2}&\multicolumn{1}{c}{0.440}         &\multicolumn{1}{c}{0.366}         &\multicolumn{1}{c}{0.284}         &\multicolumn{1}{c}{0.710}         \\
Competition FE&\multicolumn{1}{c}{Yes}         &\multicolumn{1}{c}{Yes}         &\multicolumn{1}{c}{Yes}         &\multicolumn{1}{c}{Yes}         \\
\bottomrule
\multicolumn{5}{l}{\footnotesize Standard errors in parentheses}\\
\multicolumn{5}{l}{\footnotesize \sym{*} \(p<0.1\), \sym{**} \(p<0.05\), \sym{***} \(p<0.01\)}\\
\end{tabular}
}

    }
\end{table}


 
\section{Conclusion}

This paper studies how the diffusion of generative AI reshapes performance-based
screening when agents can allocate effort between creative and mechanistic tasks.
We make three contributions. First, we develop a game-theoretic contest model in
which a principal screens agents on observable performance, but agents differ in
private ability and choose how to divide effort between \emph{creation}, whose
return is complementary to ability, and \emph{mechanization}, which AI renders
cheap and largely ability-independent. Second, within this framework we formally
define \emph{performance manipulation} as the strategic over-investment in
AI-augmented mechanistic effort induced by competitive pressure, and
characterize the conditions under which it arises in equilibrium. Third, we test
the model using a novel, language-model-based methodology that measures
multidimensional effort from nearly 1{,}500 publicly shared code scripts in
Kaggle competitions.

Our central result is a threshold in AI capability. When AI is sufficiently
advanced, the marginal return to mechanistic effort becomes so high that it
crowds out creative effort for agents of every type; the cost of performance
ceases to depend on ability, the single-crossing property fails, and the market
collapses into a pooling equilibrium. When AI capability is more modest, the
complementarity between creative effort and ability preserves single-crossing and
sustains a separating equilibrium. In this regime it is the low-ability agents
who manipulate: they flood the evaluation with AI-augmented mechanistic output to
inflate their observable scores; while high-ability agents distinguish
themselves through creation. Our Kaggle evidence is consistent with this
characterization: high-type participants exert almost 50\% more creative effort
than low types but no more mechanistic effort, and only the interaction of creative
effort with ability raises final performance.

These results carry a constructive message for labor markets under AI. The recent
retreat from asynchronous evaluations is well-founded for tasks dominated by
routine implementation, but screening does not collapse wholesale. Instead, it can be
restored by design. Principals should redesign evaluations to embed a dominant
creative component, favoring tasks that demand structural synthesis, diagnostic
problem-solving, or judgment under uncertainty over those solvable by routine
execution, and should adopt more sharply skewed reward structures that
compensate scarce creative effort and thereby restrain manipulation. We view this
work as a step toward a more concrete characterization of optimal ML contest and
evaluation design that can better address misaligned incentives and produce more
desirable outcomes for organizers. Natural extensions include endogenizing the
principal's choice of task composition, allowing AI capability and contestant
ability to co-evolve, and studying dynamic contests in which manipulation and
detection interact over time.

\newpage
\bibliographystyle{plainnat}
\bibliography{reference}

\newpage

\appendix
\section{Illustrative Examples Across Four Applications}
\label{apx:effort_examples_four_applications}

\begin{table}[htbp]
    \centering
    \fontsize{10}{13}\selectfont
    \captionsetup{font=footnotesize}
    \caption{\bf Creative and Mechanistic Effort Across Four Applications}
    \label{tab:effort_examples_four_applications}
    {\renewcommand{\arraystretch}{1.4}
    \begin{tabularx}{\textwidth}{p{0.24\textwidth}X X}
        \toprule
        Application & Mechanistic effort & Creative effort \\
        \midrule
        Machine learning &
        \begin{itemize}[leftmargin=*, nosep]
            \item Deploy standard plug-and-play packages.
            \item Conduct brute-force hyperparameter searches.
            \item Use default architectures.
        \end{itemize}
        &
        \begin{itemize}[leftmargin=*, nosep]
            \item Introduce novel algorithmic frameworks.
            \item Design new optimization strategies.
            \item Feature engineering based on domain knowledge.
        \end{itemize}
        \\
        Software engineering &
        \begin{itemize}[leftmargin=*, nosep]
            \item Write boilerplate code.
            \item Translate logic across languages.
            \item Generate routine unit tests.
        \end{itemize}
        &
        \begin{itemize}[leftmargin=*, nosep]
            \item Design unprecedented system architectures.
            \item Navigate ambiguous client requirements.
            \item Anticipate novel hardware bottlenecks.
        \end{itemize}
        \\
        Legal practice &
        \begin{itemize}[leftmargin=*, nosep]
            \item Conduct e-discovery.
            \item Draft boilerplate contracts (e.g., NDAs).
            \item Verify legal citations.
        \end{itemize}
        &
        \begin{itemize}[leftmargin=*, nosep]
            \item Formulate novel litigation strategies.
            \item Assess opponent psychology.
            \item Structure first-of-its-kind M\&A deals in emerging regulatory environments.
        \end{itemize}
        \\
        Scientific research &
        \begin{itemize}[leftmargin=*, nosep]
            \item Survey existing literature.
            \item Execute pre-specified statistical regressions.
            \item Format manuscripts.
        \end{itemize}
        &
        \begin{itemize}[leftmargin=*, nosep]
            \item Generate novel hypotheses.
            \item Design new identification strategies.
            \item Use intuition to process negative results or data anomalies.
        \end{itemize}
        \\
        \bottomrule
    \end{tabularx}
    }
\end{table}
\section{Technique Classification}
\label{apx:technique_classification}

\subsection{Examples.}\label{apx:technique_classification_examples}

We use the Kaggle competition \emph{Predicting Heart Disease} as an example to illustrate the distinction between creative and mechanistic effort. Contestants are given a tabular dataset including features of age, sex, chest pain type, blood pressure etc., and a label indicating whether a patient has heart disease, with the goal of predicting the likelihood of heart disease.\footnote{\url{https://www.kaggle.com/competitions/playground-series-s6e2/overview}} 
We map the solution of the first-place team, \emph{Masaya Kawamata}, to each of the three stages of the data pipeline and classify the techniques used at each stage as either creative or mechanistic. The results are presented in Tables~\ref{tab:effort_classification_featureengineering}, \ref{tab:effort_classification_training_evaluation}, and \ref{tab:effort_classification_stage3}.
\begin{table}[htbp]
	\centering
	\fontsize{10}{13}\selectfont
	\captionsetup{font=footnotesize}
	\caption{\bf Technique Classification in Data Pre-processing and Feature Engineering}
	\label{tab:effort_classification_featureengineering}
	{\renewcommand{\arraystretch}{1.5}
	\begin{tabularx}{\textwidth}{p{0.34\textwidth}p{0.14\textwidth}X}
		\toprule
		Technique & Classification & Reasoning \\
		\midrule
		Quantile-based binning, equal-width binning, simple rounding & mechanistic & Standard feature preprocessing techniques (like log transformation and data rescaling) applied without domain-specific adaptation. \\
		Digit feature extraction (units, tens, decimal digits) & creative & Creative problem-specific adaptation --- recognizing that digit positions might capture hidden structure in the data is an innovative, non-standard idea. \\
		Treating all features as categorical (string conversion) & creative & Creative problem-specific adaptation --- reframing numerical features as categorical to change how tree models handle them is a non-obvious representation choice. \\
		Frequency encoding & mechanistic & Standard, plug-and-play preprocessing technique commonly applied without adaptation. \\
		Genetic programming features (gplearn) & mechanistic & Experimenting with default feature generation tools without adapting them to the domain or data structure --- automated trial-and-error. \\
		Denoising Variational Autoencoder (DVAE) for latent representations & creative & Foundational innovation --- designing and training a DVAE to produce alternative nonlinear representations goes beyond standard tools and requires proposing a new component in the pipeline. \\
		Creating multiple feature sets for diversity (not searching for one optimal set) & creative & Creative problem reformulation --- deliberately generating diverse representations rather than optimizing one feature set reframes the feature engineering problem to capture the underlying structure of ensembling. \\
		\bottomrule
	\end{tabularx}
	}
\end{table}

\begin{table}[htbp]
	\centering
	\fontsize{10}{13}\selectfont
	\captionsetup{font=footnotesize}
	\caption{\bf Technique Classification in Model Training}
	\label{tab:effort_classification_training_evaluation}
	{\renewcommand{\arraystretch}{1.5}
	\begin{tabularx}{\textwidth}{p{0.34\textwidth}p{0.14\textwidth}X}
		\toprule
		Technique & Classification & Reasoning \\
		\midrule
		Training XGBoost, LightGBM, CatBoost & Mechanistic & Experimenting with default model architectures --- standard plug-and-play choices for tabular competitions. \\
		Training TabICL & Creative & Creative problem-specific adaptation --- choosing an uncommon in-context learning model for its diversity contribution, not its standalone performance, reflects innovative reasoning about ensemble composition. \\
		Training AutoGluon & Mechanistic & Standard plug-and-play automated ML tool. \\
		Using 5-fold StratifiedKFold with fixed seed & Mechanistic & Standard practice --- follows convention with no innovation. \\
		Standard hyperparameter tuning for each model & Mechanistic & Brute-force searching the hyperparameter space to optimize performance --- explicitly mechanistic by definition. \\
		Retraining on full data with 20 random seeds and averaged predictions & Mechanistic & Standard trial-and-error optimization --- seed averaging is a well-known mechanical technique. \\
		Setting n\_estimators to 1.25$\times$ average best iteration for full retraining & Creative & Creative problem-specific adaptation --- the specific multiplier was derived from experimental insight about how full-data retraining behaves, not from a standard recipe. \\
		Choosing Ridge regression as meta-learner & Creative & Creative problem-specific adaptation --- after testing flexible meta-models and finding they overfit, the contestant made a principled design decision that simplicity better reflects the nature of the ensembling problem. \\
		Overall strategy: diversity $\rightarrow$ selection $\rightarrow$ simple combination & Creative & Proposing a new algorithmic framework --- designing the entire system around controlled diversity rather than single-model dominance is a foundational innovation that reformulates the problem. \\
		\bottomrule
	\end{tabularx}
	}
\end{table}

\begingroup
\fontsize{10}{13}\selectfont
\captionsetup{font=footnotesize}
\renewcommand{\arraystretch}{1.5}
\begin{xltabular}{\textwidth}{p{0.34\textwidth}p{0.14\textwidth}X}
		\caption{\bf Technique Classification in Hyperparameter Tuning and Ensembling}
		\label{tab:effort_classification_stage3}\\
		\toprule
		Technique & Classification & Reasoning \\
		\midrule
		\endfirsthead
		\multicolumn{3}{l}{\tablename~\thetable\ -- continued from previous page}\\
		\toprule
		Technique & Classification & Reasoning \\
		\midrule
		\endhead
		\midrule
		\multicolumn{3}{r}{\itshape continued on next page}\\
		\endfoot
		\bottomrule
		\endlastfoot

		Using 5-fold Stratified Cross-Validation for model evaluation and tuning &
		Mechanistic &
		Standard validation procedure widely used in tabular ML competitions. The fold structure follows established practice and does not encode competition-specific insight. \\
		
		Hyperparameter tuning of individual XGBoost, LightGBM, CatBoost, TabICL, AutoGluon and neural-network models &
		Mechanistic &
		Searching learning rates, tree depths, regularization settings, and related parameters is routine trial-and-error optimization. This is explicitly an off-the-shelf procedure. \\
		
		Generating large numbers of model variants through different hyperparameter settings &
		Mechanistic &
		Creating multiple model variants by perturbing hyperparameters is a common ensemble-building recipe and does not itself incorporate domain knowledge about the heart-disease dataset. \\
		
		Evaluating CV--LB (cross-validation versus leaderboard) consistency before selecting models &
		Creative &
		Problem-specific adaptation. Rather than maximizing CV alone, the author explicitly analyzed the relationship between validation and leaderboard behavior and selected models that generalized consistently. This reflects reasoning about competition dynamics rather than a default workflow. \\
		
		Prioritizing model diversity during ensemble construction &
		Creative &
		Problem-specific adaptation. The ensemble was intentionally designed to contain models with different error patterns rather than simply the strongest individual models. This requires judgment about complementarity among models and is not a default first step. \\
		
		Selecting ensemble members based on marginal contribution rather than standalone score &
		Creative &
		Creative adaptation. Models were retained because they improved the ensemble as a whole, even when their individual performance was weaker. This reflects non-obvious reasoning about interaction effects among predictors. \\
		
		Diversity-selection framework for choosing candidate models &
		Creative &
		Creative problem reformulation. The author organized the search process around generating diverse candidates and then filtering them according to ensemble utility, rather than pursuing a single best model. This changes how the optimization problem is approached. \\
		
		Simple weighted averaging/blending of selected models &
		Mechanistic &
		Averaging predictions from multiple models is a standard ensemble technique routinely used in Kaggle competitions and production ML systems. \\
		
		Stacking models using out-of-fold predictions &
		Mechanistic &
		Stacking is a well-established ensemble recipe. Although effective, using OOF predictions as inputs to a second-level model is standard practice and does not constitute a novel adaptation by itself. \\
		
		Choosing Ridge regression as the stacking meta-learner &
		Creative &
		Problem-specific adaptation. After experimenting with more flexible meta-models and observing overfitting, the author deliberately chose a simpler linear combiner whose inductive bias better matched the ensemble-selection problem. \\
		
		Retraining selected models on the full training data after tuning &
		Mechanistic &
		Standard post-tuning procedure. Once hyperparameters are selected, retraining on all available data is conventional practice. \\
		
		Averaging predictions from multiple random seeds (approximately 20 seeds per model) &
		Mechanistic &
		Seed averaging is a widely known variance-reduction technique and represents routine optimization rather than a novel adaptation. \\
		
		Using approximately $1.25\times$ the average best iteration when retraining on the full dataset &
		Creative &
		Problem-specific adaptation. The multiplier was derived empirically from observed behavior during full-data retraining and is not a standard rule found in typical workflows. \\
		
		Overall strategy: diversity $\rightarrow$ selection $\rightarrow$ simple combination &
		Creative &
		Creative system-level design. The solution's central contribution was treating ensemble construction as a diversity-management problem rather than a search for a single dominant model. This organizing principle guided model generation, filtering, and final blending. \\

\end{xltabular}
\endgroup

\subsection{Prompts}\label{ax:prompts}

\subsubsection{Data pipeline classification.}\label{ax:prompt_pipeline}
\begin{promptbox}
You are an expert at reading ML competition code. Your job is to segment the given script into contiguous, non-overlapping logical blocks and assign each block exactly one of the three ML-pipeline stages.

Rules:

* Segment by what the code is doing, not by cell or function boundaries; merge adjacent lines that serve the same stage into one block.
* Cover as many lines as you can with stage 1-3 labels. Pure boilerplate (imports, library config, plotting/EDA that does not transform features, print statements, comments-only ranges, blank lines) should NOT be classified -- simply leave them out of the block list; do not invent a 4th stage.
* Function / class definitions: if the body performs one of stages 1-3, classify the ENTIRE def or class block (including the signature line and any decorators) by what the body does. Only skip definitions that are pure utilities not fitting any stage (e.g., math helpers like `rmse`, logging wrappers, IO wrappers, generic decorators). If a function body spans multiple stages, split the def block into contiguous sub-blocks at the stage boundaries inside the body.
* Blocks must be contiguous (start line <= end line), non-overlapping, and in order.
* If a stage truly does not appear in the script, simply omit it.
\end{promptbox}

\subsubsection{Technique classification.}\label{ax:prompt_technique}
\begin{promptbox}
You are an expert at reading ML competition code. You will be shown a SLICE
of a script that has already been labeled as belonging to a particular
ML-pipeline stage. Your job is to:

  1. Enumerate the distinct ML techniques used in this slice. ONE TECHNIQUE
     = ONE OPERATION: a single function call, model fit, evaluation step, or
     contiguous statement that produces or transforms a modeling artifact.
     Components inside ONE call (e.g., a `gam(...)' call whose formula uses
     both `s(dWeek)' and `lo(Latitude, Longitude)' as smoothers) belong to
     the SAME technique -- do NOT split them. A different statement on a
     different line that does something else (e.g., computing AUC, refitting
     on full data, writing the submission CSV) is a SEPARATE technique. Do
     NOT list pure boilerplate (imports, library config, IO of raw data,
     prints, plotting).

  2. For EACH technique, classify the effort as exactly one of {`creative',
     `mechanistic'} using their definitions, judging the operation AS A
     WHOLE -- including its arguments and component pieces. Example: a
     `gam(...)' call whose formula contains a `lo(Latitude, Longitude)'
     spatial smoother counts as CREATIVE because of that problem-specific
     spatial adaptation, even though `gam' itself is a standard library
     function.

Rules:
  * If the slice contains no genuine technique (only boilerplate/utilities),
    return an empty list.
  * Each technique should be a short noun phrase (<= 10 words).
  * Technique line ranges MUST be within this slice's line ranges and NON-OVERLAPPING. For each technique, start line <= end line, and both start and end lines must be on real code (not blank or comment-only). 
  * Each code line may appear in AT MOST ONE technique (a line may appear in zero
    techniques if it is boilerplate / not part of any technique). If two
    operations naturally share a line (e.g., a `for' loop header wrapping
    both feature construction and model fitting on the same line), pick whichever
    the line's primary purpose serves.
\end{promptbox}

\section{Equilibrium Existence}\label{apx:existence}

\subsection{The Auxiliary Game} \label{apx:auxiliary_game}
An equilibrium of this auxiliary game is a strategy profile $((\eqmfitnessalt_i^\ddag)_i, \eqmeffortalts^\ddag)$ that satisfies two conditions. First, given the strategy profile of other players $\eqmfitnessesalt_{-i}^\ddag$, for each player $i$, $\eqmfitnessalt_{i}^\ddag$ solves the following maximization problem:
\begin{equation}\tag{$\mathcal{P}_i'$}\label{alternative problem i}
    \begin{aligned}
      &  \max_{\fitness_i}  \quad \mathbb{E}_{\typesminus, \signals}\left[\sum_{k=1}^\numplayer \indicate{i=i_k^*}\cdot \reward_k\bigg| \fitness_i,\eqmfitnessesalt^\ddag_{-i}\right]+\fitness_i-\fscostfn(\fitness,\type;\gamma) \\
        =&  \max_{\fitness_i}  \quad \contestgain(\fitness_i,\eqmfitnessesalt^\ddag_{-i})+\fitness_i-\fscostfn(\fitness,\type;\gamma),
    \end{aligned}
\end{equation}
where the expected contest gain is given by:
\begin{equation*}
    \begin{aligned}
   \contestgain(\fitness_i,\eqmfitnessesalt^\ddag_{-i}) &= \mathbb{E}_{\typesminus, \signals}\left[\sum_{k=1}^\numplayer \indicate{i=i_k^*}\cdot \reward_k\bigg|\fitness_i, \eqmfitnessesalt^\ddag_{-i}\right] \\
    &= \sum_{k=1}^\numplayer \prob{i=i_k^*|\fitness_i,\eqmfitnessesalt^\ddag_{-i}}\cdot \reward_k
    \end{aligned}
\end{equation*} 
Second, $\eqmeffortalts^\ddag(\fitness,\type)$ represents the optimal effort allocation given a targeted mean performance $\fitness$ and type $\type$:  
\begin{equation}\label{second step}
    \eqmeffortalts^\ddag(\fitness,\type)=\argmin_{\mdeffort,\pteffort\geq 0}\etcostfn(\mdeffort+\pteffort) \quad \text{s.t.} \quad  \mdeffortfn(\mdeffort,\type)+\pteffortfn(\pteffort;\gamma)=\fitness.
\end{equation}

\begin{proof}[Proof of Lemma \ref{lemma:equivalent}]
Consider any player $i$ in the original game. For any player $j\neq i$, player $j$'s strategy affects player $i$'s payoff exclusively by altering player $i$'s ranking in the contest. This influence is entirely summarized by player $j$'s mean performance. Fixing the mean performance profile of other players $\fitnesses_{-i}$, player $i$'s optimal choice of mean performance is identical in both the original game and the auxiliary game. Because problem (\ref{player i's problem}) and problem (\ref{alternative problem i}) are fundamentally the same maximization problem when $\fitnesses_{-i}$ is fixed, they admit the same solution.
\end{proof}

\subsection{Existence of Symmetric Monotone Equilibrium}\label{apx:existence_proof}
We now prove that there exists at least one symmetric monotone equilibrium in the auxiliary game, which extends to the original game via Lemma \ref{lemma:equivalent}. We begin by characterizing the properties of the contest gain function $\contestgain(\fitness_i,\eqmfitnessesalt_{-i})$. 

\begin{lemma}\label{lemma:nondecreasing_contestgain}
    Under Assumption \ref{a:fosd}, for any $\eqmfitnessesalt_{-i}$, the contest gain function $\contestgain(\fitness_i,\eqmfitnessesalt_{-i})$ is non-negative, bounded, continuous, and non-decreasing in $\fitness_i$.
\end{lemma}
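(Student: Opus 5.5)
Fix the opponents' strategy profile $\eqmfitnessesalt_{-i}$. I would first write the contest gain as an explicit integral. Each rival $j$ draws a type, which induces a mean performance $\eqmfitnessalt_j(\type_j)$, and then a performance $\signal_j\sim\distsignal_{\eqmfitnessalt_j(\type_j)}$. So each rival's performance has the mixture distribution $G_j(\signal)=\int_\Theta \distsignal_{\eqmfitnessalt_j(\type_j)}(\signal)\,dF(\type_j)$, and rivals are independent. Player $i$'s rank is then determined by the number $K$ of rivals with $\signal_j>\signal_i$. I would therefore write
\[
\contestgain(\fitness_i,\eqmfitnessesalt_{-i})=\int_0^\infty \rho(\signal)\,\densitysignal_{\fitness_i}(\signal)\,d\signal,\qquad \rho(\signal)=\sum_{k=1}^{\numplayer}\reward_k\,\prob{K=k-1\given \signal_i=\signal}.
\]
Ties have probability zero because each $G_j$ is atomless, being a mixture of atomless distributions. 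Non-negativity and boundedness are then immediate: $\rho(\signal)$ is a convex combination of the prizes, so $0\le \reward_{\numplayer}\le\rho\le\reward_1$, and hence $0\le \contestgain\le \reward_1$.

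For monotonicity, I would show that $\rho$ is non-decreasing in $\signal$. As $\signal$ rises, each event $\{\signal_j>\signal\}$ shrinks, so $K$ decreases pathwise when all rivals' performances are held fixed. Since $\reward_k$ is non-increasing in $k$, the prize $\reward_{K+1}$ is non-decreasing in $\signal$ pointwise, and taking expectations gives $\rho$ non-decreasing. Assumption \ref{a:fosd} says $\distsignal_{\fitness}$ first-order stochastically dominates $\distsignal_{\fitness'}$ for $\fitness>\fitness'$. The standard FOSD characterization, that the expectation of a bounded non-decreasing function is larger under the dominating distribution, then yields $\contestgain(\fitness,\cdot)\ge\contestgain(\fitness',\cdot)$. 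To avoid differentiability issues I would use the Lebesgue–Stieltjes form $\int\rho\,d\distsignal_{\fitness}=\reward_{\numplayer}+\int (1-\distsignal_{\fitness}(\signal))\,d\rho(\signal)$, which is valid because $\rho$ is monotone and bounded.

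Continuity in $\fitness_i$ is the step I expect to be the main obstacle, because the paper has not explicitly assumed that $\fitness\mapsto\distsignal_\fitness$ is continuous. I would make this implicit regularity explicit: take the family to be weakly continuous in $\fitness$, meaning $\distsignal_{\fitness_n}(\signal)\to\distsignal_\fitness(\signal)$ at every $\signal$ whenever $\fitness_n\to\fitness$, as holds for location or scale families. Given that, $\rho$ is bounded and monotone, hence continuous except at countably many points. Those points have $\distsignal_\fitness$-measure zero because $\distsignal_\fitness$ is atomless. The portmanteau theorem, or dominated convergence applied to the Stieltjes representation above, then gives $\contestgain(\fitness_n,\cdot)\to\contestgain(\fitness,\cdot)$. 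In fact, since $G_j$ is continuous, $\rho$ is itself continuous: each $\prob{K=k-1\given\signal}$ is a polynomial in the $G_j(\signal)$. That makes the convergence simply a matter of weak convergence applied to a bounded continuous function.
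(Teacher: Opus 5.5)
Your proof is correct. For non-negativity, boundedness and monotonicity it is essentially the paper's argument. The paper uses summation by parts to write $\contestgain=\reward_\numplayer+\sum_{k=1}^{\numplayer-1}(\reward_k-\reward_{k+1})\cmf_{\fitness_i,\eqmfitnessesalt_{-i}}(k)$. It then applies Assumption~\ref{a:fosd} to each cumulative rank probability, each being the integral of a non-decreasing function of $\signal_i$. Your $\rho$ is exactly $\reward_\numplayer+\sum_{k=1}^{\numplayer-1}(\reward_k-\reward_{k+1})\Pr[K\le k-1\mid \signal_i=\signal]$, so you apply FOSD once to the aggregate rather than term by term. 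Your explicit mixture over rival types is also slightly more careful than the paper's phrasing.

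Where you go beyond the paper is continuity. The paper only asserts it, citing continuity of $\distsignal_{\fitness_i}$, which is continuity in $\signal$ rather than in $\fitness$. You correctly see that continuity of $\fitness\mapsto\distsignal_\fitness$ is what is needed. However, you need not impose it as an extra hypothesis. It follows from Assumption~\ref{a:fosd} together with the model's stipulation that $\distsignal_\fitness$ is supported on $\mathbb{R}_+$ with mean exactly $\fitness$, i.e.\ $\int_0^\infty(1-\distsignal_\fitness(\signal))\,d\signal=\fitness$.
\begin{itemize}
\item If $\fitness_n\downarrow\fitness_0$, FOSD makes $1-\distsignal_{\fitness_n}$ decrease pointwise to some limit $1-L\ge 1-\distsignal_{\fitness_0}$.
\item Dominated convergence then gives $\int_0^\infty(1-L)\,d\signal=\fitness_0=\int_0^\infty(1-\distsignal_{\fitness_0})\,d\signal$.
\item Hence $L=\distsignal_{\fitness_0}$ almost everywhere, and therefore everywhere, since $L$ is monotone and $\distsignal_{\fitness_0}$ is continuous.
\item The case $\fitness_n\uparrow\fitness_0$ is the same, using monotone convergence.
\end{itemize}
So each map $\fitness\mapsto\distsignal_\fitness(\signal)$ is monotone with both one-sided limits equal to its value, hence continuous. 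With this, your last step (the bounded continuous $\rho$ integrated against $\distsignal_{\fitness_n}\to\distsignal_{\fitness}$) proves continuity of $\contestgain$ under the paper's own assumptions. The paper's one-line justification does not actually establish this.
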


\begin{proof}[Proof of \cref{lemma:nondecreasing_contestgain}]
Non-negativity and boundedness follow straightforwardly from the problem definition. Continuity follows from the continuity of the payoff function and the continuous distribution function $\distsignal_{\fitness_i}$.

To prove the function is non-decreasing in $\fitness_i$, let $\pmf_{\fitness_i,\eqmfitnessesalt_{-i}}(k)=\prob{i=i_k|\fitness_i,\eqmfitnessesalt_{-i}}$ denote the probability that player $i$ achieves exactly rank $k$. The expected contest gain is:
\begin{equation*}
    \contestgain(\fitness_i,\eqmfitnessesalt_{-i}) = \sum_{k=1}^\numplayer \pmf_{\fitness_i,\eqmfitnessesalt_{-i}}(k)\cdot\reward_k
\end{equation*}
Let $\cmf_{\fitness_i,\eqmfitnessesalt_{-i}}(k)=\sum_{l=1}^k \pmf_{\fitness_i,\eqmfitnessesalt_{-i}}(l)$ denote the cumulative probability that player $i$ secures a rank of $k$ \emph{or better}. Applying summation by parts, we can rewrite the expected contest gain as:
\begin{equation*}
    \contestgain(\fitness_i,\eqmfitnessesalt_{-i}) = \sum_{k=1}^{\numplayer-1} \cmf_{\fitness_i,\eqmfitnessesalt_{-i}}(k) \cdot (\reward_k - \reward_{k+1}) + \reward_\numplayer \underbrace{\cmf_{\fitness_i,\eqmfitnessesalt_{-i}}(\numplayer)}_{=1}
\end{equation*}

\begin{lemma}\label{lemma:winprob_fosd}
    Under Assumption \ref{a:fosd}, for any $\fitness_i>\fitness_i'$ and any $\eqmfitnessesalt_{-i}$, $ \cmf_{\fitness_i,\eqmfitnessesalt_{-i}}(k)\geq  \cmf_{\fitness_i',\eqmfitnessesalt_{-i}}(k)$.
\end{lemma}

By \cref{lemma:winprob_fosd}, for any $\fitness_i > \fitness'_i$, the cumulative probability of achieving any rank $k$ or better strictly increases: $\cmf_{\fitness_i,\eqmfitnessesalt_{-i}}(k) \geq \cmf_{\fitness'_i,\eqmfitnessesalt_{-i}}(k)$. 
Since the prize vector is weakly ordered ($\reward_k - \reward_{k+1} \geq 0$), every term in the summation is weakly greater under $\fitness_i$. Thus, $\contestgain(\fitness_i,\eqmfitnessesalt_{-i}) \geq \contestgain(\fitness'_i,\eqmfitnessesalt_{-i})$, proving the expected gain is non-decreasing in mean performance.
\end{proof}

\begin{proof}[Proof of \cref{lemma:winprob_fosd}]
Let $\signals_{-i} = (\signal_j)_{j \neq i}$ denote the random performance profile of all other players, drawn independently from their respective distributions $\distsignal_{\fitness_j}$ given the strategy profile $\eqmfitnessesalt_{-i}$.

For any fixed realization of player $i$'s performance $\signal_i = s$, let $G_k(s \mid \eqmfitnessesalt_{-i})$ denote the probability that $s$ achieves a rank of $k$ or better. Achieving rank $k$ or better is equivalent to the event that $s$ is strictly greater than at least $\numplayer - k$ of the performances in $\signals_{-i}$ (since ties occur with probability zero by the continuous density assumption in \cref{a:fosd}).

The unconditional probability that player $i$ achieves rank $k$ or better, given their mean performance $\fitness_i$, is the expectation of $G_k$ over player $i$'s own performance distribution:
\begin{equation*}
    \cmf_{\fitness_i, \eqmfitnessesalt_{-i}}(k) = \int_{0}^{\infty} G_k(s \mid \eqmfitnessesalt_{-i}) \, d\distsignal_{\fitness_i}(s)
\end{equation*}

Now, consider $\fitness_i > \fitness'_i$. By \cref{a:fosd}, the distribution $\distsignal_{\fitness_i}$ first-order stochastically dominates $\distsignal_{\fitness'_i}$. Because increasing $s$ strictly expands the set of opponents' signal realizations $\signals_{-i}$ that $s$ can beat, the conditional probability function $G_k(s \mid \eqmfitnessesalt_{-i})$ is strictly increasing in $s$. Applying the FOSD property yields:
\begin{equation*}
    \int_{0}^{\infty} G_k(s \mid \eqmfitnessesalt_{-i}) \, d\distsignal_{\fitness_i}(s) \geq \int_{0}^{\infty} G_k(s \mid \eqmfitnessesalt_{-i}) \, d\distsignal_{\fitness'_i}(s),
\end{equation*}
which is exactly $\cmf_{\fitness_i, \eqmfitnessesalt_{-i}}(k) \geq \cmf_{\fitness'_i, \eqmfitnessesalt_{-i}}(k)$. 
\end{proof}

Next, we characterize the properties of the cost function $\fscostfn(\fitness,\type)$. For notational convenience, we drop the subscript $i$. 

\begin{lemma}\label{lemma:budget_property}
    Under Assumption \ref{a:sm}, $\fscostfn(\fitness,\type;\gamma)$ has the following properties:
    \begin{enumerate}
        \item $\fscostfn(0,\type;\gamma)=0$ for all $\type\in[\underline{\type},\bar{\type}]$.
        \item $\fscostfn(\fitness,\type;\gamma)$ is continuous in $\fitness$, $\type$, and $\gamma$.
        \item $\fscostfn(\fitness,\type;\gamma)$ is non-decreasing in $\fitness$ and non-increasing in both $\type$ and $\gamma$.
        \item $\fscostfn(\fitness,\type;\gamma)$ is convex in $\fitness$ for all $\type\in[\underline{\type},\bar{\type}]$.
        \item $\fscostfn(\fitness,\type;\gamma)$ is continuously differentiable with respect to $\fitness_i$ for all $\type_i\in[\underline{\type},\bar{\type}]$.
        \item $-\fscostfn(\fitness,\type;\gamma)$ has increasing differences in $(\fitness,\type)$. That is, for any $\lfitness<\ufitness$ and any $\type'<\type$, 
        $$
            \fscostfn(\ufitness,\type;\gamma)-\fscostfn(\lfitness,\type;\gamma)\leq\fscostfn(\ufitness,\type';\gamma)-\fscostfn(\lfitness,\type';\gamma) 
        $$
        \item $-\fscostfn(\fitness,\type;\gamma)$ has increasing differences with respect to $(\fitness,\gamma)$. Specifically, for any $\lfitness<\ufitness$ and any $\gamma'<\gamma$, 
        $$
            \fscostfn(\ufitness,\type;\gamma)-\fscostfn(\lfitness,\type;\gamma)\leq\fscostfn(\ufitness,\type;\gamma')-\fscostfn(\lfitness,\type;\gamma').
        $$
    \end{enumerate}
\end{lemma}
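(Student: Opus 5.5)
The plan is to study the value function of the cost-minimization problem \eqref{cost function of fitness} through its effort-space structure. First, I will rewrite it as a one-dimensional problem. Because $\etcostfn$ is non-decreasing and both production functions are strictly increasing, minimizing $\etcostfn(\mdeffort+\pteffort)$ subject to the output constraint amounts to minimizing total effort. Define $E(\fitness,\type;\gamma)=\min\{\mdeffort+\pteffort:\mdeffortfn(\mdeffort,\type)+\pteffortfn(\pteffort;\gamma)=\fitness,\ \mdeffort,\pteffort\ge 0\}$. Then $\fscostfn=\etcostfn\circ E$; the constraint can be taken with ``$\ge$'' without loss.

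For fixed $(\type,\gamma)$, the map $\fitness\mapsto E$ is the inverse of the ``total production'' function $\Phi(e)=\max_{\mdeffort+\pteffort=e}\{\mdeffortfn(\mdeffort,\type)+\pteffortfn(\pteffort;\gamma)\}$. This $\Phi$ is a sup-convolution of two strictly concave, strictly increasing functions, so it is strictly increasing and concave, with $\Phi(0)=0$. Its right derivative is $\Phi'(e)=\max$ of the two marginal products at the optimal split. That split is the equalization of marginals, or a corner, exactly as depicted in \cref{fig:eqm_benchmark}. The properties then follow in order.

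\emph{Property 1.} This holds because $\Phi(0)=0$.

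\emph{Properties 3 and 4.} Monotonicity in $\fitness$ holds because $E=\Phi^{-1}$ is increasing. Monotonicity in $\type$ and $\gamma$ holds because raising either one enlarges the feasible output for every split, so $\Phi$ rises pointwise and $E$ falls. Convexity holds because the inverse of a concave increasing function is convex and increasing, and composing it with the convex, non-decreasing $\etcostfn$ preserves convexity.

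\emph{Properties 2 and 5.} Continuity follows from Berge's maximum theorem applied to $\Phi$ together with the continuity of $\mdeffortfn$ and $\pteffortfn$ in their parameters. For differentiability, I will apply an envelope argument: $\partial E/\partial\fitness=1/\Phi'(E)$. Continuity of this derivative needs $\Phi'$ to be continuous. At the kink where the second margin switches on, the two marginal products coincide, so $\Phi'$ is continuous there; $\Phi'$ is merely non-smooth in its own derivative. Hence $\fscostfn_\fitness=\etcostfn'(E)/\Phi'(E)$ is continuous.

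The main work is Properties 6 and 7. Since $\fscostfn$ is continuously differentiable in $\fitness$, increasing differences reduces to showing that $\fscostfn_\fitness(\fitness,\type;\gamma)=\etcostfn'(E)/\Phi'(E)$ is non-increasing in $\type$ (respectively in $\gamma$), integrated over $[\lfitness,\ufitness]$. When $\type$ rises, two things happen.
\begin{enumerate}
\item The numerator falls, because $E$ falls and $\etcostfn'$ is non-decreasing.
\item The denominator should rise. By Assumption~\ref{a:sm}, at any fixed split the marginal $\partial\mdeffortfn/\partial\mdeffort$ is higher. But the relevant comparison is at a fixed output $\fitness$, not a fixed effort, and at a higher type the agent uses less effort, so the marginals are evaluated at smaller arguments.
\end{enumerate}

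Let $\mdeffort(\type)$ and $\pteffort(\type)$ denote the optimal split at fixed $\fitness$. Then $\Phi'$ at output $\fitness$ equals the common marginal $\lambda$ in the interior case. The optimal split solves $\partial_\mdeffort\mdeffortfn(\mdeffort,\type)=\pteffortfn'(\pteffort;\gamma)=\lambda$ and $\mdeffortfn+\pteffortfn=\fitness$. Suppose $\type$ rises while $\lambda$ weakly falls. Then $\pteffort$ weakly rises by concavity of $\pteffortfn$. Also, $\mdeffort$ rises, since supermodularity plus concavity mean the same $\lambda$ requires larger $\mdeffort$. Finally, $\mdeffortfn(\mdeffort,\type)$ strictly rises, because $\mdeffortfn$ is increasing in $\type$ (as implied by supermodularity and $\mdeffortfn(0,\type)=0$). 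Together these give output strictly above $\fitness$, a contradiction. Hence $\lambda$ rises. The corner cases are handled the same way: in pure creation the marginal is $\partial_\mdeffort\mdeffortfn$ at a smaller $\mdeffort$ with higher type, so it is larger; in pure mechanization it is unchanged.

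The $\gamma$ case is symmetric, using Assumption~\ref{a:sm-pt} and $\partial\pteffortfn/\partial\gamma>0$. I expect this monotone-comparative-statics step across regime switches to be the main obstacle. I will handle it by arguing the monotonicity of $\lambda$ piecewise on each regime and invoking continuity of $\Phi'$ at the regime boundaries.
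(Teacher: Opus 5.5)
Your proposal is correct and follows essentially the same route as the paper: you reduce $\fscostfn$ to $\etcostfn$ of the minimal total effort, and you identify the marginal cost of performance with the reciprocal of the larger marginal product at the optimal split (your $1/\Phi'(E)$ is the paper's shadow price $\lambda^*$); you then show it is non-increasing in $\type$ and $\gamma$ and integrate over $[\lfitness,\ufitness]$. Your output-overshoot contradiction for the monotonicity step is more careful than the paper's argument, which asserts that creative effort at a fixed target rises with type---this is false in general, e.g.\ under pure creation with $\mdeffortfn=\type\sqrt{\mdeffort}$ one has $\mdeffort=(\fitness/\type)^2$---whereas you correctly compare marginals at fixed output rather than at fixed effort.
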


\begin{proof}[Proof of Lemma \ref{lemma:budget_property}]
We make the dependence on the AI-capability parameter $\gamma$ explicit in the cost and budget functions, but suppress it from intermediate expressions whenever it is held fixed.

Continuity follows from the Maximum Theorem and the continuity of $\mdeffortfn$, $\pteffortfn$, and the effort cost function $\etcostfn$. 
Monotonicity in $\fitness$ and $\type$ stems directly from the monotonicity of mean performance in both efforts and type. Monotonicity in $\gamma$ follows because a higher $\gamma$ raises the mean performance produced by any given allocation, strictly decreasing the minimal cost of attaining a fixed $\fitness$. Convexity in $\fitness$ holds because the upper contour sets of the concave production functions are convex.

Since $\etcostfn$ depends only on total effort and is monotone, $\fscostfn(\fitness,\type;\gamma)=\etcostfn(\budget(\fitness,\type;\gamma))$ where:
\begin{equation}\label{cost_minimization_equivalence}
    \budget(\fitness,\type;\gamma)=\min_{\mdeffort\geq 0,\pteffort\geq 0}\mdeffort+\pteffort \quad \text{ s.t. } \quad \mdeffortfn(\mdeffort,\type)+\pteffortfn(\pteffort;\gamma)=\fitness
\end{equation}

\paragraph{Continuous Differentiability.}
Because $\etcostfn(\cdot)$ is twice-differentiable, the Chain Rule implies $\fscostfn$ is continuously differentiable in $\fitness$ if and only if $\budget(\fitness, \type; \gamma)$ is continuously differentiable in $\fitness$. By the strict concavity of the production functions, the optimal effort allocation is unique for any $\fitness > 0$. We can therefore invoke the Envelope Theorem for constrained optimization. The value function $\budget(\fitness, \type; \gamma)$ is continuously differentiable with respect to $\fitness$, and its derivative is exactly the unique Lagrange multiplier $\lambda^*(\fitness)$ associated with the performance constraint:
$$
    \frac{\partial \budget(\fitness, \type; \gamma)}{\partial \fitness} = \lambda^*(\fitness)
$$
Because the constraint functions are continuously differentiable and the unique optimal solution varies continuously with $\fitness$, the shadow price $\lambda^*(\fitness)$ is continuous. Thus, $\budget(\fitness, \type; \gamma)$ and the composite function $\fscostfn(\fitness, \type; \gamma)$ are continuously differentiable in $\fitness$.

\paragraph{Increasing Differences in $(\fitness,\type)$.}
To establish that $-\budget(\fitness,\type)$ has increasing differences globally, we show that the marginal cost of performance, $\lambda^*(s, \type)$, is non-increasing in $\type$. At any optimal allocation for target $s$, the shadow price satisfies:
\begin{align}\label{eq:lambda_envelope}
    \lambda^*(s, \type) = \min \left\{ \frac{1}{\frac{\partial \mdeffortfn(\mdeffort^*, \type)}{\partial \mdeffort}}, \frac{1}{\pteffortfn'(\pteffort^*; \gamma)} \right\}
\end{align}
For any $\utype > \ltype$, the strategic complementarity of creative effort (Assumption \ref{a:sm}) guarantees that type $\utype$ has a strictly higher marginal product of creative effort. To achieve target $s$, $\utype$ must weakly shift effort toward the creative margin: $\mdeffort^*(\utype) \geq \mdeffort^*(\ltype)$ and $\pteffort^*(\utype) \leq \pteffort^*(\ltype)$. If both types exert no creative effort, $\lambda^*(s, \utype) = \lambda^*(s, \ltype)$. If $\utype$ exerts positive creative effort, the strict concavity of the production functions forces $\lambda^*(s, \utype) < \lambda^*(s, \ltype)$. By the Fundamental Theorem of Calculus, integrating this shadow price over $[\lfitness, \ufitness]$ guarantees that $\budget(\ufitness,\utype) - \budget(\lfitness,\utype) \leq \budget(\ufitness,\ltype) - \budget(\lfitness,\ltype)$. Since $\etcostfn(\cdot)$ is an increasing, convex transformation, this property transfers directly to $\fscostfn$.

\paragraph{Increasing Differences in $(\fitness,\gamma)$.}
Similarly, we show that $\lambda^*(s, \type; \gamma)$ is non-increasing in $\gamma$. For any $\ugamma > \lgamma$, the optimal creative effort must be non-increasing in $\gamma$ ($\mdeffort^*(\ugamma) \leq \mdeffort^*(\lgamma)$), otherwise the diverging marginal productivities would violate the optimality conditions. Through the strict concavity and supermodularity of $\pteffortfn$, evaluating the shadow price across all possible corner and interior solutions confirms that $\lambda^*(s, \type; \ugamma) \leq \lambda^*(s, \type; \lgamma)$ universally. Integrating over the performance interval transfers this property to the total budget and ultimately to the cost function $\fscostfn$.
\end{proof}

\begin{proof}[Proof of Proposition \ref{prop:exist}]
To evoke the sufficient conditions of \citet{reny2011existence}, we rely on our findings that the payoff function is bounded and continuous in $\fitness_i$, and exhibits increasing differences in $(\fitness,\type)$. Theorem 4.5 of \citet{reny2011existence} therefore guarantees the existence of a pure strategy Nash equilibrium that is symmetric and monotone.

To show that strict decreasing differences in the cost function eliminates pooling regions, suppose for contradiction that an interval of types $[\type_A, \type_B]$ pools on an interior performance level $\fitness^* > 0$. For $\fitness^*$ to be optimal for type $\type_A$, the first-order necessary condition must hold: $\frac{\partial \contestgain(\fitness^*, \eqmfitnessesalt^*)}{\partial \fitness} + 1 = \frac{\partial \fscostfn(\fitness^*, \type_A; \gamma)}{\partial \fitness}$. Under strict decreasing differences, type $\type_B$ faces a strictly lower marginal cost at $\fitness^*$. Therefore, type $\type_B$'s marginal utility at $\fitness^*$ is strictly positive, allowing them to achieve a higher payoff by unilaterally deviating to $\fitness^* + \varepsilon$. This contradicts the optimality of $\fitness^*$, proving the equilibrium must be strictly monotone.
\end{proof}

\begin{proof}[Proof of Proposition \ref{prop:ai_regimes}]
Because the prizes are bounded and the effort costs are convex, the equilibrium mean performance is bounded by some finite maximum $\fitness_{max}$. Let $\pteffort_{max}(\gamma) = \pteffortfn^{-1}(\fitness_{max}; \gamma)$.

\textbf{Part 1: Strict Separation for $\gamma < \underline{\gamma}$.} 
By Assumption \ref{a:gamma_limits}, there exists a threshold $\underline{\gamma}$ such that for all $\gamma < \underline{\gamma}$, $\pteffortfn'(0; \gamma) < \frac{\partial \mdeffortfn(0, \underline{\type})}{\partial \mdeffort}$. Under this condition, the marginal return to initial creative effort dominates mechanistic effort for all types $\type \in \Theta$. Thus, every agent utilizes strictly positive creative effort ($\mdeffort^* > 0$) for any $s > 0$. As shown in Lemma \ref{lemma:budget_property}, this forces the shadow price of performance to be strictly decreasing in type, granting $-\fscostfn$ strict increasing differences in $(\fitness, \type)$ and ensuring strict monotonicity.

\textbf{Part 2: Complete Pooling for $\gamma > \bar{\gamma}$.} 
As $\gamma \to \infty$, the marginal productivity of mechanistic effort approaches infinity. Thus, there exists a threshold $\bar{\gamma}$ such that for all $\gamma > \bar{\gamma}$, $\pteffortfn'(\pteffort_{max}(\gamma); \gamma) > \frac{\partial \mdeffortfn(0, \bar{\type})}{\partial \mdeffort}$. Consequently, the optimal effort allocation for \emph{all} types falls entirely into the pure-mechanization regime ($\mdeffort^* = 0$). The shadow price of the performance constraint becomes $\lambda^*(s, \type; \gamma) = 1/\pteffortfn'(\pteffortfn^{-1}(s; \gamma); \gamma)$, which is entirely independent of $\type$. Because the integrated incremental cost is identical across all types, $-\fscostfn$ satisfies increasing differences only with equality. Sharing the exact same objective function, all types share the same best response, resulting in a completely pooled equilibrium $\eqmfitnessalt^*(\type) = \fitness^*$.
\end{proof}
\section{Baseline}\label{proof:eqm_benchmark}

\begin{proof}[Proof of \cref{prop:eqm_benchmark} and \cref{coro:thresholds_gamma}]
The baseline problem ($\mathcal{P}_b$) simplifies to choosing efforts to maximize net utility. Holding $\gamma$ fixed, the agent solves:
\begin{equation}
    \max_{\mdeffort \geq 0, \pteffort \geq 0} \quad \mdeffortfn(\mdeffort, \type) + \pteffortfn(\pteffort; \gamma) - \etcostfn(\mdeffort + \pteffort).
\end{equation}
Because the objective is strictly concave (the sum of strictly concave production functions minus a strictly convex cost function), the necessary and sufficient Karush-Kuhn-Tucker (KKT) conditions uniquely characterize the global optimum $(\mdeffortbenchmark, \pteffortbenchmark)$:
\begin{align}
    \frac{\partial \mdeffortfn(\mdeffortbenchmark, \type)}{\partial \mdeffort} - \etcostfn'(\mdeffortbenchmark + \pteffortbenchmark) + \eta_1 &= 0, \label{eq:kkt_a}\\
    \pteffortfn'(\pteffortbenchmark; \gamma) - \etcostfn'(\mdeffortbenchmark + \pteffortbenchmark) + \eta_2 &= 0, \label{eq:kkt_b}
\end{align}
with complementary slackness $\eta_1 \mdeffortbenchmark = 0$, $\eta_2 \pteffortbenchmark = 0$, and $\eta_1, \eta_2 \geq 0$. Assumption \ref{a:pt} guarantees an interior total effort: $\mdeffortbenchmark + \pteffortbenchmark > 0$.

\textbf{Threshold Characterization:}
Let $\typebl$ be the marginal type indifferent between pure mechanization and entering the interior allocation. For $\type \leq \typebl$, $\mdeffortbenchmark = 0$, which implies $\eta_2 = 0$. By \eqref{eq:kkt_b}, $\pteffortbenchmark > 0$ uniquely solves $\pteffortfn'(\pteffortbenchmark; \gamma) = \etcostfn'(\pteffortbenchmark)$. The threshold $\typebl$ is defined where the non-negativity constraint on $\mdeffort$ just binds ($\eta_1 = 0$):
\begin{equation}\label{eq:thresh_1}
    \frac{\partial \mdeffortfn(0, \typebl)}{\partial \mdeffort} = \pteffortfn'(\pteffortbenchmark; \gamma).
\end{equation}
Since $\mdeffortfn$ is strictly supermodular (Assumption \ref{a:sm}), the marginal product $\frac{\partial \mdeffortfn(0, \type)}{\partial \mdeffort}$ is strictly increasing in $\type$. Thus, for all $\type < \typebl$, we must have $\eta_1 > 0 \implies \mdeffortbenchmark = 0$.

Similarly, let $\typebu$ be the marginal type indifferent between pure creation and the interior allocation. For $\type \geq \typebu$, $\pteffortbenchmark = 0 \implies \eta_1 = 0$. Here, $\mdeffortbenchmark > 0$ solves $\frac{\partial \mdeffortfn(\mdeffortbenchmark, \type)}{\partial \mdeffort} = \etcostfn'(\mdeffortbenchmark)$. The threshold $\typebu$ is defined where the non-negativity constraint on mechanistic effort just binds ($\eta_2 = 0$):
\begin{equation}\label{eq:thresh_2}
    \pteffortfn'(0; \gamma) = \frac{\partial \mdeffortfn(\mdeffortbenchmark(\typebu), \typebu)}{\partial \mdeffort}.
\end{equation}
For all $\type > \typebu$, supermodularity ensures the marginal return to creation strictly dominates, yielding $\eta_2 > 0 \implies \pteffortbenchmark = 0$. Because $\pteffortfn$ is strictly concave, $\pteffortfn'(0; \gamma) > \pteffortfn'(\pteffortbenchmark; \gamma)$, which, combined with the supermodularity of $\mdeffortfn$, guarantees $\typebu > \typebl$.

\textbf{Comparative Statics in $\gamma$ (Proof of \cref{coro:thresholds_gamma}):}
We now examine how $\typebl$ and $\typebu$ shift with the AI capability $\gamma$. 

For $\typebl(\gamma)$: Differentiating the identity $\pteffortfn'(\pteffortbenchmark; \gamma) = \etcostfn'(\pteffortbenchmark)$ with respect to $\gamma$ via the Implicit Function Theorem yields:
\begin{equation}
    \frac{d\pteffortbenchmark}{d\gamma} = \frac{\frac{\partial^2\pteffortfn}{\partial\pteffort\partial\gamma}}{\etcostfn''(\pteffortbenchmark) - \pteffortfn''(\pteffortbenchmark; \gamma)} > 0.
\end{equation}
Because $\etcostfn$ is convex and $\pteffortbenchmark$ is strictly increasing in $\gamma$, the marginal cost $\etcostfn'(\pteffortbenchmark)$ strictly increases. To maintain the equality in \eqref{eq:thresh_1}, the marginal product of creation $\frac{\partial \mdeffortfn(0, \typebl)}{\partial \mdeffort}$ must increase. By supermodularity, this requires the threshold $\typebl(\gamma)$ to be strictly increasing.

For $\typebu(\gamma)$: In \eqref{eq:thresh_2}, the left-hand side $\pteffortfn'(0; \gamma)$ is strictly increasing in $\gamma$ (since $\frac{\partial^2\pteffortfn}{\partial\pteffort\partial\gamma} > 0$). The right-hand side, $\etcostfn'(\mdeffortbenchmark(\typebu))$, depends on $\gamma$ only indirectly through $\typebu$. Since $\mdeffortbenchmark(\type)$ is strictly increasing in $\type$ (by monotone comparative statics on $\mdeffortfn$), the right-hand side is strictly increasing in $\typebu$. To maintain equality as the left-hand side rises, the threshold $\typebu(\gamma)$ must strictly increase. 
\end{proof}

\section{Contest Rewards and Incentives}

\begin{proof}[Proof of \cref{monotone strategy in reward}]
For any given strategy profile of other players $\eqmfitnessesalt_{-i}$ and type $\type_i$, we write the payoff function as:
\begin{align*}    
    \payoff(\fitness_i,\rewards) = \sum_{k=1}^\numplayer \pmf_{\fitness_i,\eqmfitnessesalt_{-i}}(k)\cdot \reward_k + \fitness_i - \fscostfn(\fitness_i,\type_i)
\end{align*}
For any performances $\fitness_i > \fitness'_i$ and any prize vectors $\rewards \geqq \rewards'$, we must show that the payoff exhibits increasing differences: $[\payoff(\fitness_i,\rewards)-\payoff(\fitness'_i,\rewards)] - [\payoff(\fitness_i,\rewards')-\payoff(\fitness'_i,\rewards')] \geq 0$. 

Canceling the deterministic terms, this cross-difference reduces entirely to the expected contest gains:
\begin{align*}
    \Delta &= \sum_{k=1}^\numplayer \left[ \pmf_{\fitness_i,\eqmfitnessesalt_{-i}}(k) - \pmf_{\fitness'_i,\eqmfitnessesalt_{-i}}(k) \right] (\reward_k - \reward'_k)
\end{align*}
Applying summation by parts to index $k$, we transform this into cumulative probabilities:
\begin{align*}
    \Delta &= \sum_{k=1}^{\numplayer-1} \underbrace{\left[ \cmf_{\fitness_i,\eqmfitnessesalt_{-i}}(k) - \cmf_{\fitness'_i,\eqmfitnessesalt_{-i}}(k) \right]}_{\text{Term A}} \cdot \underbrace{\left[ (\reward_k - \reward'_k) - (\reward_{k+1} - \reward'_{k+1}) \right]}_{\text{Term B}} 
\end{align*}
We evaluate the sign of each term:

\textbf{Term A:} By \cref{lemma:winprob_fosd}, higher performance yields a first-order stochastic dominance shift toward better (lower) ranks. Thus, the probability of achieving rank $k$ or better is weakly higher under $\fitness_i$: $\cmf_{\fitness_i}(k) - \cmf_{\fitness'_i}(k) \geq 0$.

    \textbf{Term B:} By the definition of the partial order on prize vectors (\cref{def:compare vectors}), $\rewards \geqq \rewards'$ implies that $\reward_k - \reward_{k+1} \geq \reward'_k - \reward'_{k+1}$. Rearranging this gives $(\reward_k - \reward'_k) - (\reward_{k+1} - \reward'_{k+1}) \geq 0$.

Because both terms are non-negative for all $k$, the entire summation $\Delta \geq 0$. 

Therefore, the payoff function $\payoff(\fitness_i,\rewards)$ exhibits increasing differences in $(\fitness_i,\rewards)$. By Theorem 4.5 of \citet{reny2011existence}, this guarantees the existence of a symmetric pure strategy Nash equilibrium whose mean performance $\eqmfitnessalt^*(\type, \rewards)$ is non-decreasing in the prize vector skewness $\rewards$.
\end{proof}

\begin{proof}[Proof of \cref{coro:fitness_gamma}]
The auxiliary-game payoff for type $\type_i$ choosing mean performance $\fitness_i$ is $\contestgain(\fitness_i,\eqmfitnessesalt_{-i}) + \fitness_i - \fscostfn(\fitness_i,\type_i;\gamma)$. By Property 7 of \cref{lemma:budget_property}, $-\fscostfn(\fitness_i,\type_i;\gamma)$ has increasing differences in $(\fitness_i,\gamma)$, and the remaining terms do not depend on $\gamma$; hence, the total payoff exhibits increasing differences in $(\fitness_i,\gamma)$. This mirrors the structure exploited in the proof of \cref{monotone strategy in reward}, with $\gamma$ taking the role of $\rewards$. Applying Theorem 4.5 of \citet{reny2011existence}, the symmetric monotone equilibrium mean performance $\fitnesscontest(\type)$ is non-decreasing in $\gamma$ for every $\type\in\Theta$.
\end{proof}

\section{Performance Manipulation}\label{apx:model hacking}

\begin{proof}[Proof of \cref{prop:compare}]

First, we establish that for any $\type' < \type$, if $\mdeffortcontest(\type) = 0$, then $\mdeffortcontest(\type') = 0$. 

Suppose for contradiction that there exists $\type' < \type$ such that $\mdeffortcontest(\type) = 0$ but $\mdeffortcontest(\type') > 0$. 
Because type $\type$ is at a corner solution where creative effort is zero, the marginal product of mechanization must weakly exceed that of creation evaluated at zero:
\begin{equation}\label{eq:corner_type}
    \pteffortfn'(\pteffortcontest(\type)) \geq \frac{\partial\mdeffortfn(0,\type)}{\partial\mdeffort}.
\end{equation}
By the monotonicity of the equilibrium, $\fitnesscontest(\type') \leq \fitnesscontest(\type)$. 
If type $\type'$ were to exert weakly greater mechanistic effort than type $\type$ ($\pteffortcontest(\type') \geq \pteffortcontest(\type)$), then because they also exert strictly positive creative effort ($\mdeffortcontest(\type') > 0 = \mdeffortcontest(\type)$), their total mean performance would be strictly higher: $\fitnesscontest(\type') > \fitnesscontest(\type)$. This violates equilibrium monotonicity. Thus, it must be that $\pteffortcontest(\type') < \pteffortcontest(\type)$.

By the strict concavity of $\pteffortfn$, this implies $\pteffortfn'(\pteffortcontest(\type')) > \pteffortfn'(\pteffortcontest(\type))$. 
Furthermore, by the strict supermodularity of $\mdeffortfn$ (Assumption \ref{a:sm}), the marginal productivity of creative effort is strictly increasing in type, and by strict concavity of $\mdeffortfn$, it is strictly decreasing in effort. Therefore:
\begin{equation*}
    \frac{\partial\mdeffortfn(0,\type)}{\partial\mdeffort} > \frac{\partial\mdeffortfn(0,\type')}{\partial\mdeffort} > \frac{\partial\mdeffortfn(\mdeffortcontest(\type'),\type')}{\partial\mdeffort}.
\end{equation*}
Chaining these inequalities with \eqref{eq:corner_type} yields:
\begin{equation*}
    \pteffortfn'(\pteffortcontest(\type')) > \pteffortfn'(\pteffortcontest(\type)) \geq \frac{\partial\mdeffortfn(0,\type)}{\partial\mdeffort} > \frac{\partial\mdeffortfn(\mdeffortcontest(\type'),\type')}{\partial\mdeffort}.
\end{equation*}
This strict inequality ($\pteffortfn' > \partial\mdeffortfn / \partial\mdeffort$) violates the interior optimality condition for type $\type'$; an optimizing agent would strictly prefer to shift effort away from creation and toward mechanization. This is a contradiction. Thus, $\mdeffortcontest(\type') = 0$.

We can therefore unambiguously define the threshold:
\begin{equation*}
    \typecl = \sup \left\{ \type : \pteffortfn'(\pteffortcontest(\type)) \geq \frac{\partial\mdeffortfn(0,\type)}{\partial\mdeffort} \right\}.
\end{equation*}
For all $\type < \typecl$, creative effort is zero. This directly establishes the first bullet point. 

Next, we establish that $\typecl \leq \typebl$.
Recall that in the baseline setting, any type $\type \leq \typebl$ only mechanizes, achieving $\fitnessbenchmark(\type) = \pteffortfn(\pteffortbenchmark(\type))$. 
By \cref{prop:benchmark and contest eqmfitness}, equilibrium mean performance is strictly higher in the contest: $\fitnesscontest(\type) \geq \fitnessbenchmark(\type)$. Therefore, for types exclusively mechanizing, $\pteffortcontest(\type) \geq \pteffortbenchmark(\type)$.
At the exact baseline threshold $\typebl$, the marginal product of mechanization exactly equals that of creation evaluated at zero effort:
\begin{equation*}
    \pteffortfn'(\pteffortbenchmark(\typebl)) = \frac{\partial\mdeffortfn(0,\typebl)}{\partial\mdeffort}.
\end{equation*}
Because $\pteffortcontest(\typebl) \geq \pteffortbenchmark(\typebl)$ and $\pteffortfn$ is strictly concave, we have:
\begin{equation*}
    \pteffortfn'(\pteffortcontest(\typebl)) \leq \pteffortfn'(\pteffortbenchmark(\typebl)) = \frac{\partial\mdeffortfn(0,\typebl)}{\partial\mdeffort}.
\end{equation*}
Comparing this inequality against the definition of $\typecl$, it immediately follows that $\typebl \geq \typecl$. This proves the second bullet point: types in $(\typecl, \typebl)$ mechanize in both environments, but only exert creative effort in the contest.

Finally, we prove the third bullet point. Consider any type $\type \in (\typebl, \typebu)$. In the baseline, this type is at an interior solution, meaning their optimal allocation satisfies:
\begin{equation*}
    \pteffortfn'(\pteffortbenchmark(\type); \gamma) = \frac{\partial\mdeffortfn(\mdeffortbenchmark(\type),\type)}{\partial\mdeffort}.
\end{equation*}
By \cref{prop:benchmark and contest eqmfitness}, $\fitnesscontest(\type) > \fitnessbenchmark(\type)$. Because both $\mdeffortfn$ and $\pteffortfn$ are strictly concave, achieving a strictly higher mean performance at minimum cost requires moving outward along the optimal expansion path where \emph{both} efforts strictly increase. If only one effort were to increase, its marginal product would strictly drop while the other's remained constant, violating the cost-minimization condition equating the marginal products. Thus, it must be that $\mdeffortcontest(\type) > \mdeffortbenchmark(\type) > 0$ and $\pteffortcontest(\type) > \pteffortbenchmark(\type) > 0$.

\end{proof}

\subsection{Examples of Performance Manipulation}\label{append:examples}

In this section, we present four concrete functional forms to illustrate how different production technologies dictate which agents engage in performance manipulation. Recall that an agent engages in performance manipulation if contest incentives induce them to weakly decrease (or hold constant) their creative effort ($\mdeffortcontest \leq \mdeffortbenchmark$) while strictly increasing their mechanistic effort ($\pteffortcontest > \pteffortbenchmark$).

\begin{itemize}
    \item In \textbf{Example \ref{ex:compare1}}, efforts are perfect substitutes. There are no interior ``middle'' types; low types manipulate, while high types do not.
    \item In \textbf{Example \ref{ex:compare2}}, the production function is Cobb-Douglas. All agents are middle types who scale both efforts proportionally to meet contest demands; thus, no types manipulate.
    \item In \textbf{Example \ref{ex:compare3}}, mechanization is strictly concave. Middle types meet higher contest demands by scaling only their creative effort, so they do not manipulate.
    \item In \textbf{Example \ref{ex:compare4}}, creation is exponential. Middle types meet higher contest demands by scaling only their mechanistic effort, meaning they do manipulate.
\end{itemize}

\begin{example}[Perfect Substitutes: No \emph{middle} types]\label{ex:compare1}
Consider $\numplayer\geq 2$ symmetric players. Let $\Theta= [0,3]$, assume that $\fitness=\type\mdeffort+\gamma\pteffort$, and let $\etcostfn(\budget)=\frac{1}{2}\budget^2$. Then $\typebl=\typebu=\typecl=\typecu=\gamma$.
\end{example} 

In this example, an agent engages solely in creation if $\type> \gamma$ and solely in mechanization if $\type<\gamma$. The minimal cost of attaining mean performance $\fitness$ for type $\type\geq \gamma$ is $\fscostfn(\fitness,\type)=\frac{1}{2}\left(\frac{\fitness}{\type}\right)^2$, and for type $\type<\gamma$ is $\fscostfn(\fitness,\type)=\frac{1}{2}\left(\frac{\fitness}{\gamma}\right)^2$. 

In the baseline scenario, an agent's optimized choice of $(\fitness,\mdeffort,\pteffort)$ is $(\type^2,\type,0)$ if $\type> \gamma$ and $(\gamma^2,0,\gamma)$ if $\type<\gamma$. In the contest, agents with $\type> \gamma$ continue to dedicate all effort to creation, and those with $\type<\gamma$ to mechanization.  
Because the two efforts are perfect substitutes, agents always opt entirely for the action where they exhibit higher marginal productivity. Applying \cref{prop:compare}, any type $\type<\gamma$ must increase their mechanistic effort to meet the higher contest performance, meaning they engage in performance manipulation. Types $\type>\gamma$ increase only creative effort and do not manipulate. 

\begin{example}[Cobb-Douglas: Only \emph{middle} types]\label{ex:compare2}
Consider $\numplayer\geq 2$ symmetric players. Let $\Theta = [0,\infty)$. Assume that $\fitness=\type\mdeffort^{\frac{1}{2}}+\gamma\pteffort^{\frac{1}{2}}$ and $\etcostfn(\budget) =\budget$. Then $\typebl=\typecl=0$, and $\typebu=\typecu=\infty$.
\end{example} 

In the baseline, an agent's optimal choice of $(\fitness,\mdeffort,\pteffort)$ is $\left(\frac{1}{2}(\type^2+\gamma^2), \frac{1}{4}\type^2, \frac{1}{4}\gamma^2\right)$. 
In the symmetric monotone equilibrium under the contest, any agent $i$ of any type $\type\in(0,\infty)$ engages in both actions. Moreover, the optimal ratio of the two efforts, $\frac{\mdeffortcontest(\type)}{\pteffortcontest(\type)} = \frac{\type^2}{\gamma^2}$, exactly coincides with the ratio in the baseline. This fixed-ratio property is driven by the Cobb-Douglas functional form of $\fitness$. 
Formally, to attain a strictly higher mean performance in the contest ($\fitnesscontest > \fitnessbenchmark$) while keeping the effort ratio fixed, the agent must strictly increase both efforts: $\mdeffortcontest(\type) > \mdeffortbenchmark(\type)$ and $\pteffortcontest(\type) > \pteffortbenchmark(\type)$. Because their creative effort strictly increases, no types engage in performance manipulation. 

\begin{example}[Strictly Concave Mechanization: No \emph{high} types]\label{ex:compare3}
Consider $\numplayer\geq 2$ symmetric players. Let $\Theta= [0,3]$, assume that $\fitness=\type\mdeffort+\gamma\sqrt{\pteffort}$, and let $\etcostfn(\budget)=\frac{1}{2}\budget^2$. 
\end{example} 

To achieve mean performance $\fitness$, a player of type $\type$ chooses:
\begin{align*}
    \mdeffort &= \begin{cases}
        0 & \text{if } \type\leq \frac{\gamma^2}{2\fitness} \\
        \frac{\fitness}{\type}-\frac{\gamma^2}{2\type^2} & \text{if } \type> \frac{\gamma^2}{2\fitness}
    \end{cases} \\
    \pteffort &= \begin{cases}
        \frac{\fitness^2}{\gamma^2} & \text{if } \type\leq \frac{\gamma^2}{2\fitness} \\
        \frac{\gamma^2}{4\type^2} & \text{if } \type> \frac{\gamma^2}{2\fitness}
    \end{cases}
\end{align*}
Hence, the derived cost function is:
\begin{equation*}
    \fscostfn(\fitness,\type)=\begin{cases}
            \frac{1}{2} \left(\frac{\fitness^2}{\gamma^2}\right)^2 & \text{ if } \type\leq \frac{\gamma^2}{2\fitness}\\[1ex]
            \frac{1}{2} \left(\frac{\fitness}{\type}-\frac{\gamma^2}{4\type^2}\right)^2  & \text{ if } \type> \frac{\gamma^2}{2\fitness}
    \end{cases}
\end{equation*}

In the baseline scenario, $\typebl=(\frac{\gamma}{2})^{\frac{2}{3}}$ and $\typebu=\infty$. If $\type\leq \typebl$, $\fitnessbenchmark(\type)=(\frac{\gamma^4}{2})^{\frac{1}{3}}$.
In the symmetric equilibrium under the contest, middle types do not engage in performance manipulation. Notice that in the interior regime ($\type> \frac{\gamma^2}{2\fitness}$), the optimal effort on mechanization ($\pteffort = \frac{\gamma^2}{4\type^2}$) depends only on the agent's type and $\gamma$, but is entirely independent of the target fitness $\fitness$. Therefore, as the contest induces a higher performance target, middle types achieve it by scaling only their creative effort, leaving mechanistic effort unchanged ($\pteffortcontest = \pteffortbenchmark$).

\begin{example}[Exponential Creation: \emph{Middle} types manipulate]\label{ex:compare4}
Consider $\numplayer\geq 2$ symmetric players. Let $\Theta=[0,9]$. Assume that $\fitness=\type(1-\exp{(-\mdeffort)})+\gamma\pteffort$, and $\etcostfn(\budget)=\frac{1}{4} \budget^2$.  
\end{example}

The optimal effort on creation is given by:
\begin{equation*}
    \mdeffort = \begin{cases}
        0 & \text{if } \type\leq \gamma \\
        \ln{\left(\frac{\type}{\gamma}\right)} & \text{if } \gamma<\type\leq \fitness+\gamma \\
        \ln{\left(\frac{\type}{\type-\fitness}\right)} & \text{if } \type> \fitness+\gamma
    \end{cases}
\end{equation*}
The corresponding effort on mechanization is:
\begin{equation*}
    \pteffort = \begin{cases}
        \frac{\fitness}{\gamma} & \text{if } \type\leq \gamma \\
        \frac{\fitness-\type+\gamma}{\gamma} & \text{if } \gamma< \type\leq\fitness+\gamma \\
        0 & \text{if } \type>\fitness+\gamma
    \end{cases}
\end{equation*}
This yields the cost function:
\begin{equation*}
    \fscostfn(\fitness,\type)=\begin{cases}
        \frac{1}{4} \left(\frac{\fitness}{\gamma}\right)^2 & \text{if } \type<\gamma \\[1ex]
        \frac{1}{4} \left(\frac{\fitness-\type+\gamma}{\gamma} + \ln{\left(\frac{\type}{\gamma}\right)}\right)^2 & \text{if } \gamma\leq \type<\gamma+\fitness \\[1ex]
        \frac{1}{4} \left(\ln{\left(\frac{\type}{\type-\fitness}\right)}\right)^2 & \text{if } \type>\fitness +\gamma
    \end{cases}
\end{equation*}
 
In the baseline scenario, $\fitnessbenchmark(\type)=2\gamma^2 $ if $\type <\gamma$; $\fitnessbenchmark(\type)=2\gamma^2-\gamma\ln{\left(\frac{\type}{\gamma}\right)}+\type-\gamma $ if $\gamma\leq\type <\gamma e^{2\gamma}$; and $\fitnessbenchmark$ is implicitly defined by $2(\type-\fitnessbenchmark(\type))=\ln{\left(\frac{\type}{\type-\fitnessbenchmark(\type)}\right)}$ if $\type \geq \gamma e^{2\gamma}$. 
Thus, $\typebl=\gamma$ (players below $\gamma$ only mechanize) and $\typebu=\gamma e^{2\gamma}$ (players above this strictly create), with middle types utilizing both actions.

In the symmetric equilibrium under the contest, middle types \emph{do} engage in performance manipulation. Notice that in the interior regime, the optimal choice of creative effort ($\mdeffort = \ln(\type/\gamma)$) is entirely independent of the target performance $\fitness$. Because mean performance is strictly higher in the contest than in the baseline, middle types achieve this higher target strictly by scaling their mechanistic effort ($\pteffort = \frac{\fitness-\type+\gamma}{\gamma}$), holding their creative effort constant ($\mdeffortcontest = \mdeffortbenchmark$).

\end{document}